\documentclass{article}

\usepackage{PRIMEarxiv}

\usepackage[utf8]{inputenc}
\usepackage[T1]{fontenc}
\usepackage[table,xcdraw]{xcolor}
\usepackage{graphicx}
\usepackage{booktabs}
\usepackage{tabularx}
\usepackage{array}
\usepackage{amsmath}
\usepackage{amsfonts}
\usepackage{amssymb}
\usepackage{amsthm}
\usepackage{mathtools}
\usepackage{nicefrac}
\usepackage{microtype}
\usepackage{url}
\usepackage{cite}
\usepackage{tikz}
\usepackage{pgfplots}
\usepackage{subcaption}
\usepackage{enumitem}
\usepackage{hyperref}

\pgfplotsset{compat=1.18}
\usetikzlibrary{calc,positioning,fit,backgrounds,arrows.meta}
\graphicspath{{./}{figures/}}
\numberwithin{equation}{section}
\setlist[enumerate,1]{label=(\arabic*),ref=(\arabic*)}

\newtheorem{theorem}{Theorem}[section]
\newtheorem{lemma}[theorem]{Lemma}
\newtheorem{proposition}[theorem]{Proposition}
\newtheorem{corollary}[theorem]{Corollary}

\theoremstyle{definition}
\newtheorem{definition}[theorem]{Definition}
\newtheorem{assumption}[theorem]{Assumption}

\newtheorem{example}[theorem]{Example}

\theoremstyle{remark}
\newtheorem{remark}[theorem]{Remark}

\title{Exit-and-Join Dynamics and Equilibrium in Continuum Cooperative Games}
\author{
  Quanyan Zhu \\
  Department of Electrical and Computer Engineering \\
  New York University Tandon School of Engineering \\
  Brooklyn, NY, USA \\
  \texttt{quanyan.zhu@nyu.edu}
}

\begin{document}

\maketitle

\begin{abstract}
This paper develops a continuum theory of exit-and-join coalition
dynamics in nonatomic cooperative games. We first extend the
Aumann-Shapley value and the Aumann-Dr\`eze value to coalition
structures in which each coalition is a nonatomic restricted game.
The resulting payoff density gives a marginal-contribution-based
incentive for agents to remain in, exit, or join coalitions. We then
derive deterministic mean-field dynamics from decentralized switching
rules and show that payoff-difference switching includes replicator
dynamics as a special case. The associated exit-and-join equilibrium
is characterized by the absence of profitable positive-mass deviations
and is equivalent to stationarity of the induced mass dynamics under
incentive-compatible and strictly payoff-responsive switching rates. For
mass-based cooperative games, we construct a Lyapunov function and obtain
global convergence under strict concavity.
We further connect the equilibrium concept to Wardrop equilibria and
variational inequalities, and we extend the model to switching costs
and endogenous acceptance constraints. The framework links cooperative
value allocation, noncooperative mobility, and evolutionary selection
in large-population coalition systems.
\end{abstract}

\keywords{nonatomic cooperative games \and Aumann-Shapley value \and exit-and-join dynamics \and mean-field limits \and Wardrop equilibrium \and evolutionary selection}

\section{Introduction}
\label{sec:introduction}

Cooperative game theory studies how groups of agents create value and how
that value should be allocated among participants. The classical Shapley
value \cite{Shapley1953,winter2002shapley} provides a canonical allocation
rule for finite-player games by averaging marginal contributions over all
orders of entry. In large populations, however, individual agents are
negligible. A single agent has zero measure and cannot create a discrete
jump in coalition value. The Aumann-Shapley value replaces finite
permutations by infinitesimal marginal contributions along continuous
participation paths \cite{aumann1975values,aumann2015values}.

Coalition structures add a second layer to this problem. In many economic,
organizational, and multi-agent systems, agents are not arranged in one
grand coalition but are distributed across several coalitions. The
Aumann-Dr\`eze value \cite{aumann1974cooperative,AumannDreze1974}
addresses this issue in finite games by applying the Shapley principle
within each coalition of a fixed coalition structure. The first objective
of this paper is to develop the corresponding construction for nonatomic
cooperative games. Each coalition is treated as a restricted nonatomic
game, and the payoff assigned to an agent is its Aumann-Shapley marginal
contribution within that coalition.

The second objective is dynamic. Coalition structures are rarely fixed in
applications: workers move across firms, agents reallocate across tasks,
users migrate across platforms, and alliances form or dissolve in response
to changing incentives. We therefore introduce exit-and-join dynamics in
which agents compare the payoff density in their current coalition with
the payoff densities available elsewhere. A positive-mass flow from one
coalition to another occurs only when the destination offers a strict
payoff improvement, possibly subject to switching costs or acceptance
constraints. Since the player space is nonatomic, the primitive individual
switches aggregate into deterministic mean-field dynamics for coalition
masses.

The resulting model connects three viewpoints. First, it is cooperative:
the payoff field is generated by a transferable-utility cooperative game
and by continuum Aumann-Dr\`eze values. Second, it is noncooperative:
agents move unilaterally whenever another coalition offers a better payoff.
Third, it is evolutionary: mass grows in coalitions with higher payoff and
shrinks in coalitions with lower payoff, yielding replicator-type dynamics
under standard payoff-difference switching rules.

\noindent\textbf{Main contributions.}
The paper makes the following contributions.
\begin{enumerate}
\item We formulate nonatomic cooperative games, participation paths,
marginal contribution densities, and the Aumann-Shapley value in a form
suited for coalition structures.
\item We define the continuum Aumann-Dr\`eze value by restricting the game
to each coalition and applying the Aumann-Shapley construction inside the
restricted nonatomic measure space.
\item We derive exit-and-join mass dynamics from decentralized switching
intensities and show that payoff-difference switching recovers the
replicator equation.
\item We characterize exit-and-join equilibrium and prove its equivalence
with stationarity of the dynamics under incentive-compatible and strictly
payoff-responsive switching rates.
\item For mass-based games, we construct a Lyapunov function and establish
global convergence under strict concavity.
\item We show that the equilibrium condition coincides with Wardrop
equilibrium in the induced nonatomic population game, and we give the
corresponding variational inequality formulation.
\item We extend the framework to switching costs and endogenous acceptance
rules, which produce state-dependent feasible deviation cones and a
quasi-variational inequality structure.
\item We interpret exit-and-join dynamics as an evolutionary selection
process over cooperative coalition structures.
\end{enumerate}

\noindent\textbf{Related work.}
The paper builds on the classical Shapley value \cite{Shapley1953} and
its nonatomic extension by Aumann and Shapley
\cite{aumann1975values,aumann2015values}. Coalition structures and
coalitional values trace back to Aumann and Dr\`eze
\cite{aumann1974cooperative,AumannDreze1974}. Dynamic cooperative games
and coalition formation have been studied in several directions, including
dynamic cooperative games \cite{filar2000dynamic,bauso2009robust},
coalition formation processes \cite{apt2009generic,konishi2003coalition},
and computational or learning-based approaches
\cite{hart1996bargaining,faigle2001computation,touati2021bayesian}.
The population-dynamic part of the paper is related to evolutionary game
theory and population games \cite{vincent2000evolution,
vincent2005evolutionary,sandholm2010population,
zhuTembineBasar2011evolutionaryMAC,hayelZhu2015evolutionaryPoisson,
liuZhaoZhu2021herd}. The Wardrop-equilibrium component is also close to
traffic and security models that study adversarial perturbations and
non-equilibrium learning in congestion networks
\cite{panZhu2022poisonedWardrop,panLiZhu2022resilienceTraffic}. The mean-field
derivation is aligned with classical convergence ideas for Markov
processes \cite{ethier2009markov}.

\noindent\textbf{Organization.}
Section~\ref{sec:continuum_ad} develops the continuum Aumann-Shapley and
Aumann-Dr\`eze values. Section~\ref{sec:continuum_exit_join} derives the
exit-and-join dynamics and the equilibrium-stationarity connection.
Section~\ref{sec:wardrop_equivalence} relates the model to Wardrop
equilibrium and variational inequalities. Section~\ref{sec:acceptance_rules}
introduces switching costs and acceptance constraints.
Section~\ref{sec:cooperation_noncooperation} discusses the relationship
between cooperative value creation and noncooperative mobility.
Section~\ref{sec:evolutionary_interpretation} gives an evolutionary
interpretation, and Section~\ref{sec:conclusion} concludes the paper.

\section{Continuum Extension of the Shapley and Aumann-Dr\`eze Values}
\label{sec:continuum_ad}

\subsection{Nonatomic Cooperative Games and the Aumann-Shapley Value}
\label{sec:nonatomic_aumann_shapley}

Let $(I,\mathcal I,\mu)$ be a nonatomic probability space, whose elements index a
continuum of agents. A \emph{nonatomic transferable-utility cooperative game} is a
set function $v:\mathcal I\to\mathbb R$ with $v(\varnothing)=0$,
which assigns a real value to each measurable coalition. Coalitions are identified
with measurable subsets of $I$, and all equalities are understood up to
$\mu$-null sets.

Because the player space is nonatomic, individual agents have zero measure and
cannot affect coalition values on their own. As a result, discrete constructions
based on permutations of players,  central to the classical Shapley value, are no
longer meaningful. Instead, solution concepts must be formulated in terms of
infinitesimal marginal contributions of population mass. Throughout this section, we assume that the game $v$ satisfies sufficient
regularity conditions to ensure the existence of well-defined marginal
contribution densities.
In finite cooperative games, the Shapley value assigns payoffs by averaging a
player’s marginal contribution over all permutations of the player set. Each
permutation represents a possible order of coalition formation, and the Shapley
value measures the expected incremental value created when a player joins the
coalition formed by its predecessors.

In a continuum of players, however, individual agents are nonatomic and have zero
measure. No agent can be meaningfully inserted at a specific position in an
ordering, and the permutation-based definition of the Shapley value does not
apply. Nevertheless, the underlying Shapley principle remains meaningful: payoffs
should reflect average marginal contributions to coalition value.

The key insight of Aumann and Shapley is that, in a nonatomic setting, discrete
insertions can be replaced by a \emph{continuous participation process}. Rather
than considering permutations of players, one considers a path along which the
measure of participating agents grows continuously from zero to full
participation. Marginal contributions are then defined infinitesimally along this
path.

\subsection{From Participation Paths to the Aumann-Shapley Value}

In finite cooperative games, the Shapley value assigns to each player
the average of its marginal contributions over all permutations of the
player set. Each permutation represents a possible order of coalition
formation, and a player’s payoff reflects the incremental value created
when it joins the coalition formed by its predecessors.

In a nonatomic game this logic no longer applies. Individual agents
have zero measure, $\mu(\{i\})=0$ for all $i\in I$, and therefore
cannot generate discrete jumps in coalition value.
There is no meaningful notion of inserting a single agent at a specific
position in an ordering. The permutation-based construction must be
replaced by a continuous analogue.

The key idea of Aumann and Shapley is to reinterpret coalition
formation as a continuous participation process. Instead of averaging
over permutations, we average marginal productivity over participation
levels.

\begin{definition}\label{def:nonatomic}
Let $(I,\mathcal I,\mu)$ be a nonatomic probability space with
$\mu(I)=1$.
A nonatomic transferable–utility cooperative game \cite{aumann2015values}
is a set function
$v:\mathcal I\to\mathbb R$ with $v(\varnothing)=0$, where coalitions
are measurable subsets of $I$.
Two coalitions that differ on a $\mu$–null set are identified.
\end{definition}

The nonatomic assumption means that value cannot change because of
the addition of a single agent. Coalition growth must therefore be
modeled at the level of population mass.

\begin{definition}\label{def:ppath}
Let $S \in \mathcal I$.
A participation path for $S$ is a measurable family
$\{S_\lambda\}_{\lambda\in[0,1]}$
such that
\[
S_0=\varnothing,
\qquad
S_1=S,
\qquad
S_\lambda \subset S_{\lambda'}
\text{ whenever } \lambda<\lambda',
\]
and
$\mu(S_\lambda)=\lambda\,\mu(S)$.
\end{definition}

\begin{example}
Let $I=[0,1]$ equipped with Lebesgue measure and let
$S \subset [0,1]$ be a measurable set with $\mu(S)=\alpha$.
Define, for $\lambda\in[0,1]$,
$S_\lambda=\{\,i\in S:i\le\lambda\alpha\,\}$.
Then $S_0=\varnothing$ and $S_1=S$. Moreover,
if $\lambda<\lambda'$, then $S_\lambda \subset S_{\lambda'}$,
and $\mu(S_\lambda)=\lambda\,\mu(S)=\lambda\alpha$.
Hence $\{S_\lambda\}_{\lambda\in[0,1]}$ is a participation path for $S$.
\end{example}

A participation path represents the gradual growth of a coalition
from empty ($\lambda=0$) to full size ($\lambda=1$).
The parameter $\lambda$ measures the aggregate level of participation.
This replaces the discrete insertion of players in the finite case.

Along such a path, coalition value evolves continuously.

 \begin{definition}\label{def:pvalue}
Let $(I,\mathcal I,\mu)$ be a nonatomic probability space,
let $v:\mathcal I\to\mathbb R$ be a cooperative game,
and let $\{S_\lambda\}_{\lambda\in[0,1]}$ be a participation path
for a measurable coalition $S$.

Assume that the map
\(
\lambda \longmapsto v(S_\lambda)
\)
is absolutely continuous on $[0,1]$.
Then its derivative exists for almost every $\lambda\in[0,1]$,
and the function
\(
\lambda \longmapsto 
\frac{d}{d\lambda} v(S_\lambda)
\)
is called the \emph{pathwise marginal value} of coalition growth
at participation level $\lambda$.
\end{definition}

The derivative above measures the instantaneous rate at which value
is created when coalition size increases infinitesimally.
It plays the role of the discrete marginal contribution
$v(S\cup\{i\})-v(S)$ in finite games.

\begin{example}
Let $(I,\mathcal I,\mu)$ be a nonatomic probability space
with $\mu(I)=1$ and suppose the game is mass-based:
\[
v(S)=F(\mu(S)),
\qquad F\in C^1([0,1]).
\]

Let $\{S_\lambda\}$ be any participation path for $S$.
Since $\mu(S_\lambda)=\lambda\,\mu(S)$, we have
\(
v(S_\lambda)
=
F(\lambda\,\mu(S)).
\)

By the chain rule,
\[
\frac{d}{d\lambda} v(S_\lambda)
=
F'(\lambda\,\mu(S))\,\mu(S).
\]

Thus the pathwise marginal value at level $\lambda$
equals the marginal productivity of coalition size
evaluated at the current coalition mass.
\end{example}

To attribute this aggregate marginal value to individual agents,
we require that it admits an integral decomposition.

\begin{definition}\label{def:mdensity}
The game $v$ admits a marginal contribution density if there exists a
measurable function
\[
m : I \times [0,1] \to \mathbb R
\]
such that for every measurable coalition $S$ and every participation path
$\{S_\lambda\}$,
\[
\frac{d}{d\lambda} v(S_\lambda)
=
\int_{S} m(i,\lambda)\, d\mu(i)
\quad
\text{for a.e. } \lambda.
\]
\end{definition}

The function $m(i,\lambda)$ represents the marginal productivity of
agent $i$ when the aggregate participation level is $\lambda$.
Thus, $\lambda$ captures coalition scale;  $m(i,\lambda)$ captures individual productivity at that scale. This is the continuum analogue of assigning discrete marginal
increments to individual players.

\begin{example}
Continuing the mass–based example, we now identify a marginal
contribution density.

Since
\[
\frac{d}{d\lambda} v(S_\lambda)
=
F'(\lambda\,\mu(S))\,\mu(S),
\]
and since $\mu(S_\lambda)=\lambda\,\mu(S)$,
we may rewrite this as
\[
\frac{d}{d\lambda} v(S_\lambda)
=
F'(\mu(S_\lambda))\,\mu(S).
\]

Because $\mu(S)=\int_S 1\, d\mu(i)$, we obtain
\[
\frac{d}{d\lambda} v(S_\lambda)
=
\int_S F'(\mu(S_\lambda))\, d\mu(i).
\]

Thus the function
\[
m(i,\lambda)
=
F'(\mu(S_\lambda))
\]
is a marginal contribution density. In particular, the density does not depend on the identity
of $i$ but only on the current coalition mass.
Hence the game admits a marginal contribution density,
and it is uniform across agents.
\end{example}

In the finite Shapley value, payoffs are obtained by averaging
marginal contributions over all permutations. In the continuum,
``averaging over permutations'' becomes
``averaging over participation levels.''

\begin{definition}\label{def:pprinciple}
Given a marginal contribution density $m(i,\lambda)$,
the average marginal contribution of agent $i$ across all participation
levels is defined as
\[
\int_0^1 m(i,\lambda)\, d\lambda.
\]
\end{definition}

This averaging step preserves the core Shapley principle:
payoffs reflect average marginal productivity across coalition scales.

\begin{definition}\label{def:asvalue}
Let $(I,\mathcal I,\mu)$ be a nonatomic probability space with $\mu(I)=1$,
and let $v:\mathcal I\to\mathbb R$ be a cooperative game admitting a
marginal contribution density
\[
m:I\times[0,1]\to\mathbb R
\]
in the sense that for every measurable coalition $S$ and every
participation path $\{S_\lambda\}$,
\[
\frac{d}{d\lambda} v(S_\lambda)
=
\int_S m(i,\lambda)\, d\mu(i)
\quad
\text{for a.e. }\lambda\in[0,1].
\]

The \emph{Aumann-Shapley value} of $v$ is the function
\[
\phi^{AS}(v) \in L^1(I,\mu)
\]
defined $\mu$–almost everywhere by
\[
\phi^{AS}_i(v)
=
\int_0^1 m(i,\lambda)\, d\lambda.
\]
\end{definition}

\begin{theorem}\label{thm:efficiency}
Suppose $v$ admits a marginal contribution density $m$.
Then the Aumann-Shapley value satisfies
\[
\int_I \phi^{AS}_i(v)\, d\mu(i)
=
v(I).
\]
\end{theorem}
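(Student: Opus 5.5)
The plan is to apply the defining property of the marginal contribution density with $S=I$, integrate over $\lambda$, and exchange the order of integration. I fix any participation path $\{I_\lambda\}_{\lambda\in[0,1]}$ for the grand coalition $I$, so that $I_0=\varnothing$, $I_1=I$ and $\mu(I_\lambda)=\lambda$. The paper defines the Aumann-Shapley value (Definition~\ref{def:asvalue}) using densities that presuppose a path, so I assume such a path exists. For instance, since $\mu$ is nonatomic, Lyapunov/Sierpi\'nski-type arguments give a nested family, and in the example $I=[0,1]$ one can take $I_\lambda=[0,\lambda]$. As in Definition~\ref{def:pvalue}, I also assume that $\lambda\mapsto v(I_\lambda)$ is absolutely continuous; this is the standing regularity implicit in the existence of $m$.

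\textbf{Step 1.} By absolutely continuity and the fundamental theorem of calculus for absolutely continuous functions,
\[
v(I)-v(\varnothing)=\int_0^1 \frac{d}{d\lambda}v(I_\lambda)\,d\lambda .
\]
Since $v(\varnothing)=0$, the left-hand side equals $v(I)$.

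\textbf{Step 2.} I substitute the density representation of Definition~\ref{def:mdensity} with $S=I$. This holds for a.e.\ $\lambda$, which suffices inside the Lebesgue integral, and gives
\[
v(I)=\int_0^1\!\int_I m(i,\lambda)\,d\mu(i)\,d\lambda .
\]

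\textbf{Step 3.} I exchange the order of integration by Fubini--Tonelli on the product space $I\times[0,1]$ with measure $\mu\otimes d\lambda$. This yields $\int_I\int_0^1 m(i,\lambda)\,d\lambda\,d\mu(i)=\int_I\phi^{AS}_i(v)\,d\mu(i)$, which is the claim.

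The main obstacle is justifying Fubini in Step 3, which requires $m\in L^1(\mu\otimes d\lambda)$. Measurability of $m$ on the product is given by Definition~\ref{def:mdensity}, but integrability is not stated explicitly. I would handle this by reading it into the requirement $\phi^{AS}(v)\in L^1(I,\mu)$ in Definition~\ref{def:asvalue} together with the section's standing regularity assumption, and I would state explicitly that $\int_0^1\int_I|m|\,d\mu\,d\lambda<\infty$ is assumed. If that is unavailable, a fallback is to split $m=m^+-m^-$ and apply Tonelli to each part separately. This works provided at least one of the resulting integrals is finite, which the finiteness of $v(I)$ and the $L^1$ membership of $\phi^{AS}$ should supply. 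A secondary point is that Step 1 only needs one participation path, so the "every path" clause in Definition~\ref{def:mdensity} is more than the argument requires; I would remark that the result is path-independent as a consequence.
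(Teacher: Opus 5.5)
Your proposal is correct and follows the paper's proof step for step: the fundamental theorem of calculus along a participation path for $I$, substitution of the density identity with $S=I$, then Fubini. You are also more careful than the paper about the existence of the path, the absolute continuity of $\lambda\mapsto v(I_\lambda)$, and the need to assume $m\in L^1(\mu\otimes d\lambda)$, which the paper uses tacitly. One caution: your $m=m^+-m^-$ fallback does not remove that assumption, because finiteness of $v(I)$ and $\phi^{AS}(v)\in L^1(I,\mu)$ only control the two iterated integrals and do not force either $\iint m^+$ or $\iint m^-$ to be finite, so keep joint integrability as an explicit hypothesis.
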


\begin{proof}
Let $\{I_\lambda\}_{\lambda\in[0,1]}$ be a participation path
for the grand coalition $I$.
Since $\mu(I_\lambda)=\lambda$, we have
\[
v(I_1)-v(I_0)
=
\int_0^1
\frac{d}{d\lambda} v(I_\lambda)\, d\lambda.
\]

By the defining property of the marginal contribution density,
\[
\frac{d}{d\lambda} v(I_\lambda)
=
\int_I m(i,\lambda)\, d\mu(i).
\]

Hence,
\[
v(I)
=
\int_0^1
\int_I m(i,\lambda)\, d\mu(i)\, d\lambda.
\]

By Fubini’s theorem,
\[
v(I)
=
\int_I
\left(
\int_0^1 m(i,\lambda)\, d\lambda
\right)
d\mu(i)
=
\int_I \phi^{AS}_i(v)\, d\mu(i).
\]
\end{proof}

\begin{proposition}\label{prop:mass-based}
Suppose that
\[
v(S)=F(\mu(S)),
\qquad
F\in C^1([0,1]).
\]
Then the game admits a marginal contribution density given by
\[
m(i,\lambda)=F'(\lambda)
\quad
\text{for all } i\in I,
\]
and consequently
\[
\phi^{AS}_i(v)
=
\int_0^1 F'(\lambda)\, d\lambda
=
F(1)-F(0),
\qquad
\mu\text{–a.e. } i.
\]
\end{proposition}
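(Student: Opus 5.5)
The plan is to use the computation already carried out in the mass-based example preceding Definition~\ref{def:pprinciple}, and then specialize it to the participation path of the grand coalition, which is the path that actually determines $\phi^{AS}$. For an arbitrary coalition $S$ with participation path $\{S_\lambda\}$, the chain rule together with $\mu(S_\lambda)=\lambda\mu(S)$ gives
\[
\frac{d}{d\lambda}v(S_\lambda)=F'(\lambda\mu(S))\,\mu(S)=\int_S F'(\mu(S_\lambda))\,d\mu(i).
\]
This holds for every $\lambda\in[0,1]$, not just almost every $\lambda$, because $F\in C^1$ makes $\lambda\mapsto F(\lambda\mu(S))$ continuously differentiable and hence absolutely continuous. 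So $F'(\mu(S_\lambda))$ is a density that is uniform across agents, exactly as in that example.

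Second, I take $S=I$. Since $\mu(I)=1$, any participation path $\{I_\lambda\}$ satisfies $\mu(I_\lambda)=\lambda$. The density therefore reduces to $m(i,\lambda)=F'(\lambda)$ for every $i$, and the required identity $\frac{d}{d\lambda}v(I_\lambda)=\int_I F'(\lambda)\,d\mu(i)=F'(\lambda)$ holds. The function $m$ does not depend on $i$ and is continuous in $\lambda$, so it is jointly measurable and bounded on $I\times[0,1]$. Consequently Fubini's theorem applies in the same way as in the proof of Theorem~\ref{thm:efficiency}.

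Third, I substitute into Definition~\ref{def:asvalue} and use the fundamental theorem of calculus for $F\in C^1$:
\[
\phi^{AS}_i(v)=\int_0^1 F'(\lambda)\,d\lambda=F(1)-F(0)\quad\text{for }\mu\text{-a.e. } i.
\]
As a consistency check, Theorem~\ref{thm:efficiency} requires this to equal $v(I)$. That holds because $v(\varnothing)=0$ forces $F(0)=0$, so $F(1)-F(0)=F(1)=v(I)$.

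The main obstacle is the interpretation of $\lambda$ in Definition~\ref{def:mdensity}. Read literally, with $\lambda$ the path parameter of an arbitrary $S$, the function $m(i,\lambda)=F'(\lambda)$ gives $\int_S m\,d\mu=F'(\lambda)\mu(S)$, whereas the true derivative is $F'(\lambda\mu(S))\mu(S)$. These agree only when $\mu(S)=1$ or $F'$ is constant. I would therefore make explicit the convention, consistent with the remark that ``$\lambda$ captures coalition scale'', that $m$ is indexed by the aggregate participation mass, $m(i,\lambda)=F'(\mu(S_\lambda))$. The formula $m(i,\lambda)=F'(\lambda)$ is this density evaluated along the grand-coalition path. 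Since $\phi^{AS}$ and the efficiency identity depend only on that path, the stated conclusion follows.
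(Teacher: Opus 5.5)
Your argument is correct and follows the paper's proof essentially step for step: you differentiate $F(\lambda\mu(S))$ along a participation path, specialize to the grand coalition so that $\mu(I_\lambda)=\lambda$ and $m(i,\lambda)=F'(\lambda)$, and then integrate using the fundamental theorem of calculus. Your caveat about Definition~\ref{def:mdensity} is well founded. The paper's proof likewise checks the density identity only along the grand-coalition path, and for $0<\mu(S)<1$ the literal identity with $m(i,\lambda)=F'(\lambda)$ fails unless $F$ is affine, so reading the density as $m(i,\lambda)=F'(\mu(S_\lambda))$ (the convention the paper itself uses in the preceding example) is the right way to make the statement precise.
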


\begin{proof}
Let $\{S_\lambda\}$ be a participation path for $S$.
Since $\mu(S_\lambda)=\lambda\mu(S)$,
\[
v(S_\lambda)=F(\lambda\mu(S)).
\]
Differentiating,
\[
\frac{d}{d\lambda} v(S_\lambda)
=
F'(\lambda\mu(S))\,\mu(S).
\]

For the grand coalition $I$, where $\mu(I)=1$,
this reduces to
\[
\frac{d}{d\lambda} v(I_\lambda)
=
F'(\lambda).
\]

Because
\[
F'(\lambda)
=
\int_I F'(\lambda)\, d\mu(i),
\]
the function $m(i,\lambda)=F'(\lambda)$
is a marginal contribution density.
The Aumann-Shapley value therefore satisfies
\[
\phi^{AS}_i(v)
=
\int_0^1 F'(\lambda)\, d\lambda
=
F(1)-F(0).
\]
\end{proof}

\subsection{Axiomatic Characterization}

As in the finite-player case, the Aumann-Shapley value is uniquely
characterized by natural axioms extending the classical Shapley
principle to nonatomic cooperative games.

Let $\mathcal V$ denote the class of cooperative games on
$(I,\mathcal I,\mu)$ admitting a marginal contribution density.
An allocation rule is a mapping
\(
\Phi : \mathcal V \to L^1(I,\mu)
\)
that assigns to each game $v$ a payoff density $\Phi(v)$.

\begin{theorem}\label{thm:axiomatic}
The Aumann-Shapley value is the unique allocation rule
$\Phi : \mathcal V \to L^1(I,\mu)$, defined up to $\mu$-null sets,
satisfying the following properties:
\begin{enumerate}
\item \emph{Efficiency.}
\[
\int_I \Phi_i(v)\, d\mu(i)
=
v(I).
\]

\item \emph{Symmetry.}
If two agents $i,j\in I$ satisfy
\[
m(i,\lambda)=m(j,\lambda)
\quad \text{for a.e. }\lambda\in[0,1],
\]
then
\[
\Phi_i(v)=\Phi_j(v)
\quad \text{for }\mu\text{-a.e. } i,j.
\]

\item \emph{Linearity.}
For all $v,w\in\mathcal V$ and $\alpha,\beta\in\mathbb R$,
\[
\Phi(\alpha v+\beta w)
=
\alpha \Phi(v)+\beta \Phi(w).
\]

\item \emph{Marginality.}
If two games $v,w\in\mathcal V$ admit marginal contribution densities
$m_v$ and $m_w$ satisfying
\[
m_v(i,\lambda)=m_w(i,\lambda)
\quad \text{for $\mu$-a.e. } i
\text{ and a.e. }\lambda,
\]
then
\[
\Phi(v)=\Phi(w)
\quad \text{in } L^1(I,\mu).
\]
\end{enumerate}

Under these axioms,
\[
\Phi(v)=\phi^{AS}(v).
\]
\end{theorem}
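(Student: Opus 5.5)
The plan has two halves. First check that $\phi^{AS}$ satisfies the four axioms. Then show that any rule $\Phi$ satisfying them coincides with $\phi^{AS}$ on $\mathcal V$.

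For existence, efficiency is exactly Theorem~\ref{thm:efficiency}. Symmetry follows directly from Definition~\ref{def:asvalue}: if $m(i,\cdot)=m(j,\cdot)$ a.e., then integrating in $\lambda$ gives $\phi^{AS}_i=\phi^{AS}_j$. Marginality holds for the same reason, since $\phi^{AS}$ depends on $v$ only through its density. For linearity, if $m_v$ and $m_w$ are densities for $v$ and $w$, then $\alpha m_v+\beta m_w$ is a density for $\alpha v+\beta w$. This follows from linearity of $\frac{d}{d\lambda}$ along each participation path and of the integral over $S$. Linearity of $\int_0^1 d\lambda$ then gives the claim.

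A preliminary point concerns well-definedness. Definition~\ref{def:mdensity} does not obviously make $m$ unique, so $\phi^{AS}$ should not depend on the chosen density. I would handle this by reading the marginality axiom as the statement that the allocation is a functional of the density. I would also note that, within the class $\mathcal V$, each game is paired with its density; this convention is implicit in the way the theorem is stated.

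For uniqueness, the key reduction uses marginality to say that $\Phi(v)$ depends only on the function $m_v$. Write $\Phi(v)=\Psi(m_v)$, where $\Psi$ is a map on densities. Linearity of $\Phi$, together with linearity of $v\mapsto m_v$, makes $\Psi$ linear. Symmetry then says that $\Psi(m)_i$ depends on $i$ only through the profile $m(i,\cdot)\in L^1[0,1]$. The next step is to decompose $m$ into simple pieces. Take $m(i,\lambda)=\sum_k \mathbf 1_{A_k}(i)\,g_k(\lambda)$ with disjoint $A_k$, and approximate a general $m$ by such functions. On each piece, symmetry forces $\Psi$ to be constant on $A_k$, say equal to $c_k$. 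Efficiency, applied to a game realizing that piece alone, gives $c_k\,\mu(A_k)=v(I)=\int_0^1\!\int_{A_k} g_k\,d\mu\,d\lambda$, so $c_k=\int_0^1 g_k\,d\lambda$, which is exactly $\phi^{AS}$. Linearity then extends the identity to all simple densities.

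The main obstacle is this decomposition step. One has to show that each simple component $\mathbf 1_{A_k}(i)g_k(\lambda)$ is itself the density of some game in $\mathcal V$. The first candidate I would try is $w(S)=\int_0^1 g_k(\lambda)\,d\lambda\cdot\mu(S\cap A_k)$. A worry is that its density along arbitrary paths is really $\mathbf 1_{A_k}\int g_k$ rather than $\mathbf 1_{A_k}g_k$. However, marginality and symmetry only involve the averages $\int_0^1 m(i,\lambda)\,d\lambda$ once combined, so it suffices to match those averages. Concretely, I would show that $\Phi(v)=\Phi(w_v)$ where $w_v(S)=\int_S\phi^{AS}_i(v)\,d\mu(i)$, an additive game whose density is $\phi^{AS}_i(v)$, constant in $\lambda$. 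This requires a marginality-type identification of $v$ with $w_v$, which I expect to be the delicate point. I would then pass from simple to general densities by an $L^1$ approximation, using a continuity property of $\Phi$. That continuity is not among the listed axioms, so I would try to avoid it by restricting to densities taking finitely many profiles, or by justifying it from linearity combined with efficiency applied to signed pieces.
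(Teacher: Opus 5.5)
The paper states this theorem without proof, so your argument has to stand on its own. Your check that $\phi^{AS}$ satisfies the four axioms is fine, but the uniqueness half breaks at the efficiency step. Take a game whose density is the single piece $\mathbf 1_{A_k}(i)g_k(\lambda)$. Symmetry makes $\Phi$ constant on $A_k$, say $c_k$. It equally makes $\Phi$ constant on $I\setminus A_k$, say $d_k$, since all those agents share the zero profile. Efficiency then yields only $c_k\mu(A_k)+d_k\,\mu(I\setminus A_k)=\int_{A_k}\int_0^1 g_k\,d\lambda\,d\mu$. Writing $c_k\mu(A_k)=v(I)$ silently imposes $d_k=0$, which is a null-player axiom and not one of the four.

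Symmetry only gives $\Psi(m)_i=f_m(m(i,\cdot))$, where $f_m$ may depend on the whole density $m$. You would get $d_k=0$ if $f_m$ did not depend on $m$, which is an individual, Young-type marginality, but the stated marginality is global. This cannot be patched. The rule $\tilde\Phi(v)=2\phi^{AS}(v)-v(I)$ is linear. It depends on $v$ only through $m_v$, because $v(I)=\int_0^1\int_I m_v\,d\mu\,d\lambda$. It pays equal amounts to agents with equal profiles, and $\int_I\tilde\Phi(v)\,d\mu=2v(I)-v(I)=v(I)$. Yet it differs from $\phi^{AS}(v)$ whenever $\phi^{AS}(v)$ is not a.e.\ constant, e.g.\ for the functional game $v(S)=\int_S\theta\,d\mu$ with nonconstant $\theta$. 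So on any $\mathcal V$ containing such a game, which is the reading the paper's heterogeneous examples require, the four axioms do not single out $\phi^{AS}$.

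Your bridge $\Phi(v)=\Phi(w_v)$ fails for a related reason. Marginality needs $m_v=m_{w_v}$ pointwise in $\lambda$, not equal $\lambda$-averages. Indeed, the rule $\phi^{AS}_i(v)+\int_0^1 g(\lambda)m_v(i,\lambda)\,d\lambda-\int_I\int_0^1 g(\lambda)m_v(j,\lambda)\,d\lambda\,d\mu(j)$ with $\int_0^1 g\,d\lambda=0$ satisfies all four axioms but generally assigns different allocations to $v$ and $w_v$.

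Your worry that the pieces might not lie in $\mathcal V$ is justified, and it points to the only reading under which the statement holds. Taken literally, Definition~\ref{def:mdensity} demands one density valid for every coalition and every path. Integrating along a path, using the absolute continuity of Definition~\ref{def:pvalue}, gives $v(S)=\int_S\phi\,d\mu$ with $\phi=\int_0^1 m\,d\lambda$. Since $v(S_\lambda)=\int_{S_\lambda}\phi\,d\mu$ cannot depend on which subset of measure $\lambda\mu(S)$ is chosen, $\phi$ is a.e.\ constant, so $\mathcal V=\{c\mu:c\in\mathbb R\}$. On that class the theorem is trivial: symmetry makes $\Phi(c\mu)$ a.e.\ constant and efficiency makes the constant $c$. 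Your decomposition never gets started there.

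To make your strategy work on a genuinely rich class, you must add an axiom such as null player ($m_v(i,\cdot)=0$ for a.e.\ $i\in A$ implies $\Phi(v)=0$ a.e.\ on $A$). You must also assume $\mathcal V$ contains games with densities $\mathbf 1_A(i)g(\lambda)$. Then linearity, marginality, symmetry, efficiency and null player give $c_k=\int_0^1 g_k\,d\lambda$ as you planned, for densities with finitely many profiles. Passing to general densities would still need a continuity axiom that the statement does not provide.
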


The axioms mirror the classical Shapley axioms but are formulated in
terms of marginal contribution densities rather than discrete marginal
increments. Efficiency ensures that the grand coalition value is fully
distributed. Symmetry guarantees equal treatment of agents with identical
marginal productivity functions. Linearity preserves additivity across
games. Marginality ensures that payoffs depend only on infinitesimal
marginal contributions and not on absolute value levels.

\begin{remark}
If the cooperative game $v$ is convex, the Aumann-Shapley value
coincides with the competitive (Walrasian) payoff allocation in the
associated exchange or production economy.

In this case, there exists a price functional $p$ such that each agent's
payoff equals its marginal productivity evaluated at equilibrium prices.
The grand coalition outcome can therefore be decentralized through
price-taking behavior. This establishes a structural equivalence between
the Shapley principle, marginal productivity pricing, and general
equilibrium theory in large economies.
\end{remark}

\begin{example}
\label{ex:aumann_shapley_continuum}

Let $(I,\mathcal I,\mu)$ be a nonatomic probability space with
$\mu(I)=1$.
Suppose each agent supplies one unit of a homogeneous input and total
output is generated according to
\[
F(x)=x^2,
\qquad x\in[0,1].
\]

Define the cooperative game
\[
v(S)=F(\mu(S))=\mu(S)^2.
\]

Since $F$ is convex, the game is convex.

The marginal productivity at participation level $x$ is
\[
F'(x)=2x.
\]

Hence the Aumann-Shapley value assigns
\[
\phi^{AS}_i(v)
=
\int_0^1 2x\,dx
=
1
\quad \text{for }\mu\text{-a.e. } i.
\]

Efficiency holds:
\[
\int_I \phi^{AS}_i(v)\, d\mu(i)
=
1
=
v(I).
\]

Thus each agent receives the same payoff density, reflecting both
symmetry and the price-taking nature of large competitive economies.
\end{example}

\subsection{Non-Mass-Based Cooperative Games and Kernel Interactions}
\label{subsec:functional_kernel}

The mass--based specification of Proposition~\ref{prop:mass-based}
is a special case in which coalition value depends only on its measure.
We now extend Definitions~\ref{def:ppath}, \ref{def:pvalue},
and \ref{def:mdensity} to cooperative games whose value depends
on the composition of agents.

\subsubsection{Finite-Dimensional Functional Games}

Let $(I,\mathcal I,\mu)$ be a nonatomic probability space with $\mu(I)=1$.
Let
\(
\psi_1,\dots,\psi_k \in L^1(I,\mu)
\)
be measurable characteristics (types). For every measurable coalition $S\in\mathcal I$, define
\(
T_r(S)
:=
\int_S \psi_r(i)\, d\mu(i),
\  r=1,\dots,k.
\) Let
\(
F:\mathbb R^k \to \mathbb R
\) 
be continuously differentiable.
Define the cooperative game
\(
v(S)
=
F\big(T_1(S),\dots,T_k(S)\big).
\)

This specification allows coalition value to depend on composition
rather than only on mass.
If $\psi_1\equiv 1$ and $k=1$, the model reduces to the mass--based case.

\begin{proposition}
\label{prop:functional_mdensity}
Let $v$ be defined as above and let
$\{S_\lambda\}_{\lambda\in[0,1]}$
be a participation path for $S$
as in Definition~\ref{def:ppath}.
Assume that $\lambda \mapsto S_\lambda$ is absolutely continuous
and that
\[
\frac{d}{d\lambda}\mu_{S_\lambda}(A)
=
\mu(A\cap S)
\quad \text{for all measurable } A,
\]
i.e., coalition growth is proportional.

Define
\[
T_r(\lambda)
:=
\int_{S_\lambda} \psi_r(i)\, d\mu(i),
\qquad r=1,\dots,k.
\]

Then $v$ admits a marginal contribution density
$m:I\times[0,1]\to\mathbb R$ given by
\[
m(i,\lambda)
=
\sum_{r=1}^k
\frac{\partial F}{\partial x_r}
\big(T_1(\lambda),\dots,T_k(\lambda)\big)
\,\psi_r(i).
\]
\end{proposition}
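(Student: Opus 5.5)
The plan is a chain-rule computation along the participation path, followed by rewriting each derivative of $T_r(\lambda)$ as an integral over $S$ using the proportional-growth hypothesis. I would interpret that hypothesis as saying the restriction measure $\mu_{S_\lambda}(A):=\mu(A\cap S_\lambda)$ has $\lambda$-derivative $\mu(A\cap S)$. Equivalently, $\mu(A\cap S_\lambda)=\lambda\,\mu(A\cap S)$, since $S_0=\varnothing$ and $\lambda\mapsto\mu(A\cap S_\lambda)$ is absolutely continuous. This is consistent with $\mu(S_\lambda)=\lambda\mu(S)$ from Definition~\ref{def:ppath}.

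\textbf{Step 1: derivative of each $T_r$.} For $\psi_r=\mathbf 1_A$ the hypothesis gives
\[
\tfrac{d}{d\lambda}\int_{S_\lambda}\mathbf 1_A\,d\mu=\int_S\mathbf 1_A\,d\mu .
\]
By linearity this extends to simple functions. For $\psi_r\in L^1$ I would approximate by simple functions and use the integrated identity $\int_{S_\lambda}\psi\,d\mu=\lambda\int_S\psi\,d\mu$. Both sides are controlled by $\|\psi-\psi_n\|_{L^1}$ uniformly in $\lambda$. This gives $T_r(\lambda)=\lambda\,T_r(S)$, so that
\[
T_r'(\lambda)=\int_S\psi_r(i)\,d\mu(i).
\]
In particular, each $T_r$ is Lipschitz, hence absolutely continuous.

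\textbf{Step 2: chain rule.} Since $F\in C^1$ and $\lambda\mapsto (T_1(\lambda),\dots,T_k(\lambda))$ is affine, the map $\lambda\mapsto v(S_\lambda)=F(T(\lambda))$ is $C^1$. This verifies the absolute-continuity requirement of Definition~\ref{def:pvalue}, and
\[
\frac{d}{d\lambda}v(S_\lambda)=\sum_{r=1}^k\partial_rF(T(\lambda))\,T_r'(\lambda)=\sum_{r=1}^k\partial_rF(T(\lambda))\int_S\psi_r\,d\mu .
\]
The coefficients $\partial_rF(T(\lambda))$ do not depend on $i$, so they can be moved inside the integral. This yields $\int_S m(i,\lambda)\,d\mu(i)$ with $m$ as in the statement. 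Measurability of $m$ on $I\times[0,1]$ is immediate: $m$ is a finite sum of products of a continuous function of $\lambda$ with a measurable function of $i$.

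\textbf{Main obstacle.} The hard step is interpretive rather than computational. The derivative in Step~1 is over sets, and the density must hold ``for every coalition and every path'', as Definition~\ref{def:mdensity} requires. As stated, $m$ depends on $\lambda$ through $T(\lambda)$, which depends on the chosen $S$ and path. I would therefore present the result as a density relative to the given $S$ and proportional path, as the proposition's hypotheses indicate, mirroring the mass-based example where $m(i,\lambda)=F'(\mu(S_\lambda))$.

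I would also remark on the grand coalition $S=I$. There $T_r(\lambda)=\lambda\int_I\psi_r\,d\mu$ is path-independent under proportional growth, which is what is needed for Theorem~\ref{thm:efficiency}. The remaining care in the proof is the $L^1$ approximation in Step~1, which justifies exchanging the $\lambda$-derivative with the integral of $\psi_r$.
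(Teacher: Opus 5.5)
Your proposal is correct and follows the paper's proof. The paper likewise applies the chain rule to $F(T(\lambda))$, sets $T_r'(\lambda)=\int_S\psi_r\,d\mu$, and moves the $i$-independent coefficients $\partial_rF(T(\lambda))$ inside the integral. Your extension from indicators to $L^1$, which gives $T_r(\lambda)=\lambda\,T_r(S)$, supplies the justification the paper compresses into ``by absolute continuity of the participation path,'' and your caveat that $m$ is a density only relative to the fixed $S$ and path applies equally to the paper's own argument.
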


\begin{proof}
Write
\(
T(\lambda)
=
\big(T_1(\lambda),\dots,T_k(\lambda)\big)
\in \mathbb R^k.
\) Then
\(
v(S_\lambda)
=
F\big(T(\lambda)\big).
\) Since $F\in C^1(\mathbb R^k)$,
the chain rule yields
\[
\frac{d}{d\lambda} v(S_\lambda)
=
\nabla F\big(T(\lambda)\big)
\cdot
T'(\lambda)
=
\sum_{r=1}^k
\frac{\partial F}{\partial x_r}
\big(T(\lambda)\big)
\frac{d}{d\lambda}T_r(\lambda).
\]

By absolute continuity of the participation path,
\[
\frac{d}{d\lambda}T_r(\lambda)
=
\int_S \psi_r(i)\, d\mu(i).
\]

Hence
\[
\frac{d}{d\lambda} v(S_\lambda)
=
\int_S
\left(
\sum_{r=1}^k
\frac{\partial F}{\partial x_r}
\big(T(\lambda)\big)
\psi_r(i)
\right)
d\mu(i).
\]

Comparing with Definition~\ref{def:mdensity}
establishes the marginal contribution density.
\end{proof}

\begin{corollary}
\label{cor:functional_AS}
Under the above assumptions, the Aumann--Shapley value exists and satisfies
\[
\phi^{AS}_i(v)
=
\int_0^1 m(i,\lambda)\, d\lambda
=
\sum_{r=1}^k
\psi_r(i)
\int_0^1
\partial_r F\big(T(\lambda)\big)
\, d\lambda.
\]
\end{corollary}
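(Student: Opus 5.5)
The plan is to obtain Corollary~\ref{cor:functional_AS} directly from Proposition~\ref{prop:functional_mdensity} together with Definition~\ref{def:asvalue}. The proposition gives a marginal contribution density in the sense of Definition~\ref{def:mdensity}:
\[
m(i,\lambda)=\sum_{r=1}^k \partial_r F\big(T(\lambda)\big)\,\psi_r(i).
\]
So $v$ lies in the class $\mathcal V$, and the Aumann--Shapley value is defined by $\phi^{AS}_i(v)=\int_0^1 m(i,\lambda)\,d\lambda$. This gives the first equality in the statement immediately. What remains is to check that the integral is well defined for $\mu$-a.e.\ $i$ and that the result lies in $L^1(I,\mu)$.

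For the second equality, substitute the formula for $m$. Note that $\psi_r(i)$ does not depend on $\lambda$. The sum is finite, so linearity of the Lebesgue integral lets us pull each $\psi_r(i)$ outside the $\lambda$-integral. This yields $\sum_r \psi_r(i)\int_0^1 \partial_r F(T(\lambda))\,d\lambda$.

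The only genuine point is integrability of $\lambda\mapsto \partial_r F(T(\lambda))$ on $[0,1]$. I would argue it as follows.
\begin{enumerate}
\item Each coordinate satisfies $|T_r(\lambda)|\le \int_I|\psi_r|\,d\mu<\infty$, since $\psi_r\in L^1$. Hence $T(\lambda)$ ranges in a compact box $K\subset\mathbb R^k$.
\item Under the proportional-growth assumption, $T_r(\lambda)=\lambda\int_S\psi_r\,d\mu$. This is affine, hence continuous, in $\lambda$.
\item Because $F\in C^1$, each $\partial_r F$ is continuous on $K$. So $\lambda\mapsto\partial_r F(T(\lambda))$ is continuous and bounded on $[0,1]$, and in particular Riemann and Lebesgue integrable.
\end{enumerate}
Let $c_r:=\int_0^1\partial_rF(T(\lambda))\,d\lambda$ denote these finite constants. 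Then $\phi^{AS}(v)=\sum_r c_r\psi_r$ is a finite linear combination of $L^1$ functions, so it belongs to $L^1(I,\mu)$. Joint measurability of $m$ on $I\times[0,1]$, which is needed to invoke Fubini as in Theorem~\ref{thm:efficiency}, also follows: $m$ is a finite sum of products of a measurable function of $i$ and a continuous function of $\lambda$.

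The main obstacle I anticipate is justifying the continuity and boundedness of $T(\lambda)$ in step~2. If one does not rely on the proportional-growth hypothesis, continuity would instead come from absolute continuity of $\lambda\mapsto\int_{S_\lambda}\psi_r\,d\mu$. That holds because $\mu(S_\lambda\,\triangle\, S_{\lambda'})=|\lambda-\lambda'|\mu(S)$ by nestedness, and the integral of an $L^1$ function is absolutely continuous with respect to $\mu$. I would present this argument as the fallback, since it covers general participation paths.
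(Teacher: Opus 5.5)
Your proposal is correct and takes the same route as the paper, which treats the corollary as immediate: substitute the density $m$ from Proposition~\ref{prop:functional_mdensity} into Definition~\ref{def:asvalue}, then use linearity over the finite sum to pull the $\lambda$-independent factors $\psi_r(i)$ outside the $\lambda$-integral. Your additional checks are sound and supply integrability details the paper leaves implicit: continuity and boundedness of $\lambda\mapsto\partial_rF(T(\lambda))$, joint measurability of $m$, and $\phi^{AS}(v)=\sum_r c_r\psi_r\in L^1(I,\mu)$.
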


Thus the Aumann--Shapley value is a linear combination of the
agent's characteristics $\psi_r(i)$,
with coefficients determined by averaged marginal productivity
along the participation path.

\begin{remark}
Unless all $\psi_r$ are constant $\mu$--almost everywhere,
the Aumann--Shapley value is heterogeneous across agents.
Equality of payoffs arises only under symmetry of the
marginal contribution density
(cf.~Theorem~\ref{thm:axiomatic}).
\end{remark}

\begin{example}
\label{ex:two_type_quadratic}

Let $(I,\mathcal I,\mu)$ be a nonatomic probability space with $\mu(I)=1$.
Suppose each agent has two measurable characteristics
$\psi_1(i)=\theta(i)$ and $\psi_2(i)=1$,
where $\theta \in L^1(I,\mu)$ represents productivity.

For any coalition $S$, define
$T_1(S)=\int_S \theta(i)\, d\mu(i)$ and
$T_2(S)=\mu(S)$.
Let $F(x_1,x_2)=x_1+\beta x_1 x_2$ with $\beta>0$.
Then
\[
v(S)
=
T_1(S)+\beta T_1(S)T_2(S)
=
\int_S \theta(i)\, d\mu(i)
+
\beta\!\left(\int_S \theta(i)\, d\mu(i)\right)\mu(S).
\]
Coalition value therefore depends on total productivity and its
interaction with coalition size. The model reduces to a mass--based
game only if $\theta$ is constant almost everywhere.

Let $\{S_\lambda\}$ be a proportional participation path.
Then $T_1(\lambda)=\int_{S_\lambda}\theta(i)\,d\mu(i)$
and $T_2(\lambda)=\lambda\mu(S)$.
Since
$\frac{\partial F}{\partial x_1}=1+\beta x_2$
and
$\frac{\partial F}{\partial x_2}=\beta x_1$,
Proposition~\ref{prop:functional_mdensity} gives
\[
m(i,\lambda)
=
(1+\beta T_2(\lambda))\theta(i)
+
\beta T_1(\lambda).
\]
Substituting $T_2(\lambda)=\lambda\mu(S)$,
\[
m(i,\lambda)
=
\theta(i)
+
\beta \lambda \mu(S)\theta(i)
+
\beta T_1(\lambda).
\]

Integrating over $\lambda$,
\[
\phi^{AS}_i(v)
=
\int_0^1 m(i,\lambda)\, d\lambda
=
\theta(i)
+
\beta\theta(i)\!\int_0^1 \lambda \mu(S)\, d\lambda
+
\beta\!\int_0^1 T_1(\lambda)\, d\lambda.
\]
Since $\int_0^1 \lambda\, d\lambda=\tfrac12$, we obtain
\[
\phi^{AS}_i(v)
=
\theta(i)
+
\frac{\beta}{2}\mu(S)\theta(i)
+
\beta\!\int_0^1 T_1(\lambda)\, d\lambda.
\]

The Aumann--Shapley value decomposes into a direct productivity term
$\theta(i)$, an interaction amplification term
$\frac{\beta}{2}\mu(S)\theta(i)$,
and a common surplus term
$\beta\int_0^1 T_1(\lambda)\, d\lambda$ independent of $i$.
Unless $\theta$ is constant almost everywhere,
$\phi^{AS}_i(v)\neq \phi^{AS}_j(v)$, so the allocation is heterogeneous.
\end{example}

\subsubsection{Infinite-Dimensional Functional Formulation}

The finite-dimensional construction extends naturally to fully general
measure-dependent cooperative games. We first illustrate the idea with
a kernel interaction example and then present the general formulation.

\begin{example}

Let $(I,\mathcal I,\mu)$ be a nonatomic probability space with $\mu(I)=1$.
Let $K:I\times I\to\mathbb R$ be a symmetric measurable kernel with
$K\in L^1(I\times I)$.
Define the cooperative game
\[
v(S)
=
\iint_{S\times S}
K(i,j)\, d\mu(i)\, d\mu(j).
\]

This specification captures pairwise complementarities among agents.
Coalition value depends on the entire distribution of agents in $S$,
not merely on its mass.

Let $\{S_\lambda\}_{\lambda\in[0,1]}$ be a proportional participation path
for $S$. Then
\[
v(S_\lambda)
=
\iint_{S_\lambda\times S_\lambda}
K(i,j)\, d\mu(i)\, d\mu(j).
\]

Differentiating with respect to $\lambda$ and using symmetry of $K$ yields
\[
\frac{d}{d\lambda} v(S_\lambda)
=
2
\int_{S_\lambda}
\left(
\int_{S_\lambda} K(i,j)\, d\mu(j)
\right)
d\!\left(\frac{d\mu_{S_\lambda}}{d\lambda}\right)(i).
\]

Under proportional growth,
$\frac{d}{d\lambda}\mu_{S_\lambda}(A)=\mu(A\cap S)$,
so
\[
\frac{d}{d\lambda} v(S_\lambda)
=
\int_S
2\int_{S_\lambda}
K(i,j)\, d\mu(j)\, d\mu(i).
\]

Comparing with Definition~\ref{def:mdensity},
the marginal contribution density is
\[
m(i,\lambda)
=
2
\int_{S_\lambda}
K(i,j)\, d\mu(j).
\]

By Definition~\ref{def:asvalue},
\[
\phi^{AS}_i(v)
=
\int_0^1
2
\int_{S_\lambda}
K(i,j)\, d\mu(j)
\, d\lambda.
\]

Thus each agent's payoff equals its average interaction intensity
with the coalition along the participation path.

If $K(i,j)$ depends on $i$, then generally
\[
\phi^{AS}_i(v)\neq \phi^{AS}_j(v).
\]
The mass--based model arises as the special case $K(i,j)\equiv c$,
for which $v(S)=c\,\mu(S)^2$ and the marginal contribution density
is uniform across agents.
\end{example}

\medskip

We now abstract this structure.

\begin{definition}
\label{def:measure_functional_game}

Let $\mathcal M(I)$ denote the space of finite signed measures on $I$.
Elements of $\mathcal M(I)$ describe distributions of agents
and allow coalition value to depend on the entire population profile,
not merely on its total mass.

For each coalition $S\in\mathcal I$, define the restricted measure
\(
\mu_S(A)=\mu(A\cap S),
\  A\in\mathcal I.
\)
Thus $\mu_S$ encodes the full distribution of agents inside $S$. Let
\(\mathcal V:\mathcal M(I)\to\mathbb R
\)
be Gateaux differentiable.
We define the cooperative game
\(
v(S)=\mathcal V(\mu_S).
\)

Hence coalition value depends on the measure $\mu_S$,
that is, on the distribution of agents in $S$.
\end{definition}

\medskip

Gateaux differentiability means that for every measure $\mu$
and every signed measure perturbation $\nu$,
the directional derivative
\[
D\mathcal V(\mu)(\nu)
:=
\lim_{\varepsilon\to 0}
\frac{\mathcal V(\mu+\varepsilon \nu)-\mathcal V(\mu)}{\varepsilon}
\]
exists and is linear in $\nu$.
By the Riesz representation principle for measures,
there exists a measurable function
\[
\frac{\delta \mathcal V}{\delta \mu}(\mu): I\to\mathbb R
\]
such that
\[
D\mathcal V(\mu)(\nu)
=
\int_I
\frac{\delta \mathcal V}{\delta \mu}(\mu)(i)
\, d\nu(i).
\]

The function $\frac{\delta \mathcal V}{\delta \mu}(\mu)(i)$
represents the marginal productivity of an infinitesimal
increase of mass at agent $i$ when the coalition measure is $\mu$.

\begin{proposition}
\label{prop:functional_mdensity_general}
Let $\{S_\lambda\}$ be a proportional participation path,
so that $\frac{d}{d\lambda}\mu_{S_\lambda}(A)=\mu(A\cap S)$.
If $\mathcal V$ is Gateaux differentiable at $\mu_{S_\lambda}$,
then the game admits a marginal contribution density given by
\[
m(i,\lambda)
=
\frac{\delta \mathcal V}{\delta \mu}
\big(\mu_{S_\lambda}\big)(i).
\]
\end{proposition}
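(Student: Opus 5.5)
The proof follows the pattern of Proposition~\ref{prop:functional_mdensity}, with the finite-dimensional chain rule replaced by the Gateaux derivative along the curve of measures $\lambda\mapsto\mu_{S_\lambda}$. Write $v(S_\lambda)=\mathcal V(\mu_{S_\lambda})$. Under proportional growth, the hypothesis $\frac{d}{d\lambda}\mu_{S_\lambda}(A)=\mu(A\cap S)$ says that the curve has velocity $\mu_S$ at every $\lambda$. Since $\mu_{S_0}=0$, this gives $\mu_{S_\lambda}=\lambda\mu_S$ setwise: integrate in $\lambda$ for each fixed $A$. This is consistent with $\mu(S_\lambda)=\lambda\mu(S)$ from Definition~\ref{def:ppath}. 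The curve is therefore affine in $\mathcal M(I)$:
\[
\mu_{S_{\lambda+\varepsilon}}=\mu_{S_\lambda}+\varepsilon\,\mu_S .
\]

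\textbf{Step 1.} Compute the difference quotient directly:
\[
\frac{v(S_{\lambda+\varepsilon})-v(S_\lambda)}{\varepsilon}
=\frac{\mathcal V(\mu_{S_\lambda}+\varepsilon\mu_S)-\mathcal V(\mu_{S_\lambda})}{\varepsilon}.
\]
By Gateaux differentiability of $\mathcal V$ at $\mu_{S_\lambda}$ in the direction $\nu=\mu_S$, this converges to $D\mathcal V(\mu_{S_\lambda})(\mu_S)$. Because the curve is exactly affine, no general chain rule for curves in the infinite-dimensional space $\mathcal M(I)$ is needed; this is where the proportional-growth hypothesis does the work. Since the limit is two-sided as $\varepsilon\to0$, the map $\lambda\mapsto v(S_\lambda)$ is differentiable at every $\lambda$ where Gateaux differentiability holds. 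One-sided limits suffice at the endpoints.

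\textbf{Step 2.} Apply the Riesz-type representation stated just before the proposition:
\[
D\mathcal V(\mu_{S_\lambda})(\mu_S)=\int_I \frac{\delta\mathcal V}{\delta\mu}(\mu_{S_\lambda})(i)\,d\mu_S(i)=\int_S \frac{\delta\mathcal V}{\delta\mu}(\mu_{S_\lambda})(i)\,d\mu(i).
\]
The second equality holds because $\mu_S$ is the restriction of $\mu$ to $S$. Comparing with Definition~\ref{def:mdensity} identifies $m(i,\lambda)=\frac{\delta\mathcal V}{\delta\mu}(\mu_{S_\lambda})(i)$, mirroring the comparison step in the proof of Proposition~\ref{prop:functional_mdensity}.

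\textbf{Main obstacle.} The delicate part is regularity, not the computation. Definition~\ref{def:mdensity} requires $m$ to be jointly measurable on $I\times[0,1]$, and Definition~\ref{def:pvalue} implicitly requires $\lambda\mapsto v(S_\lambda)$ to be absolutely continuous, which is needed later for Fubini as in Theorem~\ref{thm:efficiency}. I would treat these as standing regularity assumptions, in the spirit of the section's blanket hypothesis that $v$ is regular enough for densities to exist. Concretely, I would assume that $(i,\lambda)\mapsto\frac{\delta\mathcal V}{\delta\mu}(\mu_{S_\lambda})(i)$ is measurable and locally bounded in $L^1$. Then $\lambda\mapsto D\mathcal V(\mu_{S_\lambda})(\mu_S)$ is integrable, and a function that is everywhere differentiable with integrable derivative is absolutely continuous. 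The existence of the representing function $\frac{\delta\mathcal V}{\delta\mu}$ itself is taken from the paper's preceding discussion and is not re-proved here.
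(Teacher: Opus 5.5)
Your argument is correct and shares the paper's skeleton: differentiate, apply the Riesz-type representation, compare with Definition~\ref{def:mdensity}. It handles the differentiation step differently, though. The paper invokes a ``chain rule in Banach spaces'' to write $\frac{d}{d\lambda}v(S_\lambda)=D\mathcal V(\mu_{S_\lambda})\bigl(\tfrac{d}{d\lambda}\mu_{S_\lambda}\bigr)$. It represents this as an integral against the velocity measure and substitutes proportional growth only at the end. You integrate the proportional-growth condition first. That is legitimate, since nesting and $\mu(S_\lambda)=\lambda\mu(S)$ make $\lambda\mapsto\mu(A\cap S_\lambda)$ Lipschitz with constant $\mu(S)$. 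You conclude that the curve of measures is affine with velocity $\mu_S$, so the difference quotient of $v(S_\lambda)$ is literally the Gateaux quotient in the single direction $\mu_S$.

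This is a genuine gain. Gateaux differentiability, which is all the proposition assumes, yields no chain rule along general curves; that requires Hadamard or Fr\'echet differentiability of $\mathcal V$ together with norm-differentiability of the curve. So the paper's step tacitly assumes more than is stated, whereas yours uses exactly the hypothesis. The paper's formulation only buys a velocity-weighted formula for non-proportional paths, which it never needs. Your regularity paragraph also makes explicit the measurability and absolute-continuity requirements that the paper leaves to its blanket assumption.

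One correction: do not merely call $\mu_{S_\lambda}=\lambda\mu_S$ ``consistent'' with Definition~\ref{def:ppath}. Testing it on $A=S_\lambda\subset S$ gives $\mu(S_\lambda)=\lambda\mu(S_\lambda)$. So for $0<\lambda<1$ and $\mu(S)>0$, no nested family of sets satisfies the proportional-growth hypothesis. It is realized only by the ideal coalition $\lambda\mathbf 1_S$, for which your computation is exactly the Aumann--Shapley diagonal formula. This defect lies in the paper's statement and affects its proof equally. Your affine reformulation is what makes it visible, so you should say so.
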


\begin{proof}
Along the participation path,
\(
v(S_\lambda)
=
\mathcal V(\mu_{S_\lambda}).
\)
By the chain rule in Banach spaces,
\[
\frac{d}{d\lambda} v(S_\lambda)
=
D\mathcal V(\mu_{S_\lambda})
\!\left(
\frac{d}{d\lambda}\mu_{S_\lambda}
\right).
\]

Using the representation of the Gateaux derivative,
\[
\frac{d}{d\lambda} v(S_\lambda)
=
\int_I
\frac{\delta \mathcal V}{\delta \mu}
\big(\mu_{S_\lambda}\big)(i)
\, d\!\left(\frac{d}{d\lambda}\mu_{S_\lambda}\right)(i).
\]

Under proportional growth,
$\frac{d}{d\lambda}\mu_{S_\lambda}(A)=\mu(A\cap S)$,
so the derivative simplifies to
\[
\frac{d}{d\lambda} v(S_\lambda)
=
\int_S
\frac{\delta \mathcal V}{\delta \mu}
\big(\mu_{S_\lambda}\big)(i)
\, d\mu(i).
\]

Comparing with Definition~\ref{def:mdensity},
the marginal contribution density is
\[
m(i,\lambda)
=
\frac{\delta \mathcal V}{\delta \mu}
\big(\mu_{S_\lambda}\big)(i).
\]
\end{proof}

\begin{corollary}
Under the above assumptions,
the Aumann--Shapley value equals
\[
\phi^{AS}_i(v)
=
\int_0^1
\frac{\delta \mathcal V}{\delta \mu}
\big(\mu_{S_\lambda}\big)(i)
\, d\lambda.
\]
\end{corollary}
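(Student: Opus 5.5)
The corollary follows by combining Proposition~\ref{prop:functional_mdensity_general} with Definition~\ref{def:asvalue}. The plan has three steps.

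\textbf{Step 1: identify the density.} Proposition~\ref{prop:functional_mdensity_general} shows that, under proportional growth and Gateaux differentiability of $\mathcal V$ at each $\mu_{S_\lambda}$, the function
\[
m(i,\lambda)=\frac{\delta\mathcal V}{\delta\mu}\big(\mu_{S_\lambda}\big)(i)
\]
satisfies the identity of Definition~\ref{def:mdensity}. That is,
\[
\frac{d}{d\lambda}v(S_\lambda)=\int_S m(i,\lambda)\,d\mu(i)\quad\text{for a.e. }\lambda.
\]

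\textbf{Step 2: integrate in $\lambda$.} By Definition~\ref{def:asvalue}, the Aumann--Shapley value is $\phi^{AS}_i(v)=\int_0^1 m(i,\lambda)\,d\lambda$. Substituting the expression from Step 1 gives the claimed formula directly, $\mu$-a.e. in $i$.

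\textbf{Step 3: well-posedness of $\phi^{AS}(v)\in L^1(I,\mu)$.} This is the only nontrivial point. I will check two conditions.
\begin{itemize}
\item The map $(i,\lambda)\mapsto \frac{\delta\mathcal V}{\delta\mu}(\mu_{S_\lambda})(i)$ must be jointly measurable.
\item It must be integrable on $I\times[0,1]$, so that Fubini applies and the $\lambda$-integral defines an $L^1$ function.
\end{itemize}
I would secure both by imposing, as the paper implicitly does throughout Section~\ref{sec:continuum_ad}, a regularity condition. One option is that $\lambda\mapsto \frac{\delta\mathcal V}{\delta\mu}(\mu_{S_\lambda})$ is a Bochner-measurable, integrable curve in $L^1(I,\mu)$. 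Another is a uniform bound $\sup_\lambda\|\frac{\delta\mathcal V}{\delta\mu}(\mu_{S_\lambda})\|_{L^1}<\infty$ together with joint measurability.

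Under either condition, the iterated integral is finite, and the efficiency computation of Theorem~\ref{thm:efficiency} then shows consistency with $v(I)$ when $S=I$.

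The main obstacle is this measurability and integrability bookkeeping, since Gateaux differentiability alone does not give it. Beyond that, the argument is a direct substitution.
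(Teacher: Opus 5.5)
Your proposal is correct and follows the paper's route: the corollary comes from substituting the density $m(i,\lambda)=\frac{\delta\mathcal V}{\delta\mu}(\mu_{S_\lambda})(i)$ from Proposition~\ref{prop:functional_mdensity_general} into Definition~\ref{def:asvalue}, and the paper treats this as immediate without a separate proof. Your Step~3 (joint measurability and integrability, so that $\phi^{AS}(v)\in L^1(I,\mu)$ is well defined) is extra care the paper does not spell out; the paper covers it by its standing regularity assumption on $v$ in Section~\ref{sec:continuum_ad}.
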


\medskip

Thus the Aumann--Shapley allocation equals the average
marginal productivity of agent $i$ along the participation path.
Heterogeneity arises whenever the functional derivative
depends on $i$.

\subsection{Coalition Structures in the Continuum}
\label{subsec:coalition_structure_continuum}

At any given time, the population of agents is assumed to be fully
organized into a finite number of coalitions. Coalitions do not overlap,
and no agent remains unaffiliated. Formally, a coalition structure is a
measurable partition of the entire player space.

\begin{definition}
Let $(I,\mathcal I,\mu)$ be a nonatomic probability space.
A \emph{coalition structure} is a finite measurable partition
\[
T = \{T_1,\dots,T_m\},
\qquad m<\infty,
\]
of $I$ such that:
\[
\mu(T_i \cap T_j)=0 \quad \text{for } i\neq j,
\]
and
\[
\mu\!\left(I \setminus \bigcup_{j=1}^m T_j\right)=0.
\]
\end{definition}

The partition condition ensures that every agent belongs to exactly one
coalition, up to $\mu$-null sets. Since the player space is nonatomic,
coalition structures that differ only on null sets are identified.
Thus coalition membership is defined almost everywhere.

At a fixed time, the coalition structure is treated as exogenously
given, reflecting organizational, institutional, or technological
constraints that limit feasible coalitions. Dynamic reconfiguration
occurs through exit–and–join moves, but at every instant the coalitions
jointly exhaust the population.

Fix a coalition $T_j \in T$.
The restriction of the player space to $T_j$ is the measure space
\(
(T_j,\mathcal I_{T_j},\mu_{T_j}),
\)
where
\[
\mathcal I_{T_j}
=
\{ A \subset T_j : A = S \cap T_j
\text{ for some } S\in\mathcal I \},
\]
and
\[
\mu_{T_j}(A)
=
\mu(A),
\qquad
A \in \mathcal I_{T_j}.
\]

Since $(I,\mathcal I,\mu)$ is nonatomic, each restricted measure space
$(T_j,\mathcal I_{T_j},\mu_{T_j})$ is also nonatomic.
Thus each coalition inherits the same measure-theoretic structure as
the entire population.

Let $v:\mathcal I\to\mathbb R$ be a cooperative game.
The \emph{restricted game} on coalition $T_j$ is the function
\(
v_{|T_j} : \mathcal I_{T_j} \to \mathbb R
\)
defined by
\[
v_{|T_j}(A)
=
v(A),
\qquad
A \in \mathcal I_{T_j}.
\]

Equivalently, for any $S\in\mathcal I$,
\(
v_{|T_j}(S\cap T_j)
=
v(S\cap T_j).
\)

The restricted game captures the value generated by subcoalitions
formed entirely within $T_j$, abstracting from interactions with agents
outside the coalition. In other words, each coalition in a structure
$T$ induces its own nonatomic cooperative game, to which solution
concepts such as the Aumann-Shapley value may be applied independently.

This restriction principle is fundamental for extending the
Aumann-Dr\`eze construction to the continuum: the coalition structure
is treated as fixed, the original game is restricted to each coalition,
and payoffs are computed separately within each restricted measure space.

\begin{example}
Let $I=[0,1]$ equipped with Lebesgue measure and fix
$\alpha\in(0,1)$. Define $T_1=[0,\alpha]$ and
$T_2=(\alpha,1]$. Then $T=\{T_1,T_2\}$ is a coalition
structure. Indeed, both $T_1$ and $T_2$ are measurable,
$\mu(T_1\cap T_2)=0$, and
$\mu\!\left(I\setminus(T_1\cup T_2)\right)=0$.
The coalition masses are $\mu(T_1)=\alpha$
and $\mu(T_2)=1-\alpha$.

The restricted measure spaces are
$(T_1,\mathcal I_{T_1},\mu_{T_1})$
and $(T_2,\mathcal I_{T_2},\mu_{T_2})$,
each of which remains nonatomic since the original
space $(I,\mathcal I,\mu)$ is nonatomic. If the cooperative game is mass-based, that is,
$v(S)=F(\mu(S))$, then the restricted games satisfy
$v_{|T_1}(A)=F(\mu(A))$ for $A\subset T_1$,
and similarly $v_{|T_2}(A)=F(\mu(A))$ for
$A\subset T_2$.

Thus each coalition behaves as an independent
nonatomic cooperative game with total mass
$\alpha$ and $1-\alpha$, respectively.
\end{example}

 \begin{remark}
\label{rem:coalition_as_type}

A coalition structure can be embedded into the finite–dimensional
functional game framework by introducing indicator characteristics
\[
\psi_j(i)=\mathbf 1_{T_j}(i),
\qquad j=1,\dots,m.
\]

For any measurable coalition $S\subset I$,
\[
T_j(S)
=
\int_S \psi_j(i)\,d\mu(i)
=
\mu(S\cap T_j).
\]

Thus a cooperative game of the form
\[
v(S)
=
F\big(T_1(S),\dots,T_m(S)\big)
=
F\big(\mu(S\cap T_1),\dots,\mu(S\cap T_m)\big)
\]
depends only on the mass of $S$ inside each block of the partition.
In this sense, coalition membership functions as a finite type system,
and the coalition structure is a special case of the
finite–dimensional functional game.

Under this representation, agents within the same block $T_j$
are symmetric, and heterogeneity arises only across blocks.
The mass–based model corresponds to the degenerate case $m=1$.
\end{remark}

\subsection{Continuum Aumann-Dr\`eze Value}
\label{subsec:continuum_ad_value}

We now extend the Aumann-Dr\`eze construction (see \cite{aumann1974cooperative}) to a continuum of players.
The logic mirrors the finite case: the coalition structure is treated as fixed,
the original game is restricted to each coalition, and the Shapley principle
(in its continuum form) is applied within each restricted game.

Let $(I,\mathcal I,\mu)$ be a nonatomic player space with $\mu(I)=1$,
let $v:\mathcal I\to\mathbb R$ be a transferable–utility cooperative game,
and let $T=\{T_1,\dots,T_m\}$ be a coalition structure. For each coalition
$T_j\in T$, denote by $(T_j,\mathcal I_{T_j},\mu_{T_j})$ the restricted measure
space, where $\mathcal I_{T_j}=\{S\cap T_j:S\in\mathcal I\}$ and
$\mu_{T_j}(A)=\mu(A)$ for $A\in\mathcal I_{T_j}$, and define the restricted
game $v_{|T_j}:\mathcal I_{T_j}\to\mathbb R$ by $v_{|T_j}(A)=v(A)$.

\begin{definition}
Assume that for every coalition $T_j\in T$ the restricted game $v_{|T_j}$
admits an Aumann-Shapley value on $(T_j,\mathcal I_{T_j},\mu_{T_j})$.
The \emph{continuum Aumann-Dr\`eze value} is the payoff density
\[
\Omega^{AD}(v;T)=\bigl(\Omega^{AD}_i(v;T)\bigr)_{i\in I}\in L^1(I,\mu)
\]
defined $\mu$-almost everywhere by
\[
\Omega^{AD}_i(v;T)=\phi^{AS}_i\!\left(v_{|T_j}\right),
\qquad \text{for } i\in T_j,
\]
where $\phi^{AS}$ denotes the Aumann-Shapley value computed within the
restricted game $v_{|T_j}$.
\end{definition}

Thus each agent’s payoff is determined by its infinitesimal marginal contribution
within its own coalition, averaged over participation levels. Each coalition is
treated as an independent nonatomic cooperative game, and payoffs are computed
separately within each coalition.

\begin{remark}
Since $(I,\mathcal I,\mu)$ is nonatomic, $\Omega^{AD}(v;T)$ is defined only up to
$\mu$-null sets. All payoff identities and efficiency statements are therefore
understood to hold $\mu$-almost everywhere.
\end{remark}

\begin{remark}
If the player space consists of finitely many atoms with equal mass, the above
construction reduces to the classical Aumann-Dr\`eze value. The continuum
Aumann-Dr\`eze value may therefore be interpreted as the large-population limit
of the finite exit-and-join payoff rule.
\end{remark}

\begin{proposition}
For every coalition $T_j\in T$,
\[
\int_{T_j}\Omega^{AD}_i(v;T)\,d\mu(i)=v(T_j).
\]
\end{proposition}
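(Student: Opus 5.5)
The plan is to reduce the claim to Theorem~\ref{thm:efficiency}, applied inside each restricted game. Fix $T_j\in T$. By definition of the continuum Aumann-Dr\`eze value, $\Omega^{AD}_i(v;T)=\phi^{AS}_i(v_{|T_j})$ for $\mu$-a.e.\ $i\in T_j$. Hence
\[
\int_{T_j}\Omega^{AD}_i(v;T)\,d\mu(i)=\int_{T_j}\phi^{AS}_i(v_{|T_j})\,d\mu_{T_j}(i).
\]
The task is therefore to show that the right-hand side equals $v_{|T_j}(T_j)=v(T_j)$.

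Theorem~\ref{thm:efficiency} was stated for a probability space with $\mu(I)=1$, whereas $\mu_{T_j}(T_j)=\mu(T_j)$ may differ from $1$. I will therefore rerun its proof rather than cite it verbatim.
\begin{enumerate}
\item Let $m_j$ be the marginal contribution density of $v_{|T_j}$, whose existence is part of the hypothesis that the restricted game admits an Aumann-Shapley value.
\item Take any participation path $\{A_\lambda\}$ for $T_j$ inside $(T_j,\mathcal I_{T_j},\mu_{T_j})$. One exists because $\mu_{T_j}$ is nonatomic, by the Lyapunov/Sierpi\'nski intermediate-value property for nonatomic measures.
\item Then $v(A_0)=v(\varnothing)=0$ and $v(A_1)=v(T_j)$. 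By the absolute continuity built into Definition~\ref{def:pvalue},
\[
v(T_j)=\int_0^1 \frac{d}{d\lambda}v(A_\lambda)\,d\lambda=\int_0^1\int_{T_j} m_j(i,\lambda)\,d\mu(i)\,d\lambda .
\]
\item Exchange the integrals by Fubini to obtain $\int_{T_j}\int_0^1 m_j(i,\lambda)\,d\lambda\,d\mu(i)=\int_{T_j}\phi^{AS}_i(v_{|T_j})\,d\mu(i)$.
\end{enumerate}
The normalization $\mu(I)=1$ played no role in the proof of Theorem~\ref{thm:efficiency}, so nothing else changes. Alternatively, one could rescale $\mu_{T_j}$ to a probability measure and rescale the density correspondingly.

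The main obstacle is justifying the Fubini step, which requires $m_j\in L^1(T_j\times[0,1])$. I would take this as part of the standing regularity assumptions, since $\phi^{AS}(v_{|T_j})\in L^1$ is already presupposed by the definition. A secondary point is the null set $\mu(T_j\cap T_k)$ for $k\neq j$: it does not affect the integral, since coalitions are identified up to null sets.
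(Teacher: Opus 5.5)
Your proposal is correct and takes the same route as the paper: the paper notes that $\Omega^{AD}$ agrees $\mu$-a.e.\ on $T_j$ with $\phi^{AS}(v_{|T_j})$ and then invokes efficiency of the Aumann--Shapley value on the restricted space. Re-running the efficiency proof for the non-normalized measure $\mu_{T_j}$, and flagging that a participation path exists and that Fubini needs integrability of the density, only fills in details the paper leaves implicit; it does not change the argument.
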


\begin{proof}
By definition, $\Omega^{AD}_i(v;T)$ coincides $\mu$-almost everywhere on $T_j$
with $\phi^{AS}_i(v_{|T_j})$. Efficiency of the Aumann-Shapley value on the
restricted measure space implies
\[
\int_{T_j}\phi^{AS}_i(v_{|T_j})\,d\mu(i)=v_{|T_j}(T_j)=v(T_j),
\]
which yields the claim.
\end{proof}

\begin{remark}
As in the finite Aumann-Dr\`eze construction, the coalition structure is treated
as exogenously fixed. Payoffs within each coalition depend only on the restricted
game on that coalition; cross-coalition externalities are not internalized.
\end{remark}

\begin{example}
\label{ex:continuum_ad}

Let $(I,\mathcal I,\mu)$ be a nonatomic probability space with $\mu(I)=1$ and
consider the cooperative game $v(S)=\mu(S)^2$ for $S\in\mathcal I$.
Let $T=\{T_1,T_2\}$ be a coalition structure with $\mu(T_1)=\alpha$ and
$\mu(T_2)=1-\alpha$ for some $\alpha\in(0,1)$. Denote $m_1=\mu(T_1)=\alpha$ and
$m_2=\mu(T_2)=1-\alpha$.
For $j\in\{1,2\}$ the restricted game on $T_j$ is given by
$v_{|T_j}(A)=\mu(A)^2$ for $A\in\mathcal I_{T_j}$. In particular the value of the
grand coalition within $T_j$ is $v_{|T_j}(T_j)=v(T_j)=m_j^2$.
Fix $j\in\{1,2\}$ and choose any participation path $\{A_\lambda\}_{\lambda\in[0,1]}$
for $T_j$. By definition, $\mu(A_\lambda)=\lambda m_j$. Along this path,
\[
v_{|T_j}(A_\lambda)=\mu(A_\lambda)^2=(\lambda m_j)^2.
\]
Differentiating with respect to $\lambda$ yields
\[
\frac{d}{d\lambda}v_{|T_j}(A_\lambda)=2(\lambda m_j)m_j=2\lambda m_j^2.
\]
Equivalently, if we reparametrize by coalition mass $x=\mu(A_\lambda)=\lambda m_j$,
then the marginal productivity at level $x\in[0,m_j]$ is $2x$.
The Aumann-Shapley value averages marginal productivity over participation
levels. Using the mass parameterization on $[0,m_j]$ gives
\[
\phi^{AS}_i(v_{|T_j})
=
\frac{1}{m_j}\int_0^{m_j}2x\,dx
=
\frac{1}{m_j}\bigl[x^2\bigr]_{0}^{m_j}
=
m_j,
\qquad i\in T_j.
\]
By definition of $\Omega^{AD}$,
\[
\Omega^{AD}_i(v;T)=m_j \quad \text{for } i\in T_j,
\]
so
\[
\Omega^{AD}_i(v;T)=
\begin{cases}
\alpha, & i\in T_1,\\
1-\alpha, & i\in T_2.
\end{cases}
\]
We verify efficiency within each coalition:
\begin{align*}
\int_{T_1}\Omega^{AD}_i(v;T)\,d\mu(i)
&=\alpha\,\mu(T_1)=\alpha^2=v(T_1),\\
\int_{T_2}\Omega^{AD}_i(v;T)\,d\mu(i)
&=(1-\alpha)\,\mu(T_2)=(1-\alpha)^2=v(T_2).
\end{align*}
\end{example}

\begin{theorem}
\label{thm:ad_membership_connection}

Let $T=\{T_1,\dots,T_m\}$ be a coalition structure on
$(I,\mathcal I,\mu)$.
Define membership functions
\[
\psi_j(i)=\mathbf 1_{T_j}(i),
\qquad j=1,\dots,m,
\]
and consider a cooperative game of finite-dimensional functional form
\[
v(S)
=
F\big(T_1(S),\dots,T_m(S)\big),
\qquad
T_j(S)=\mu(S\cap T_j),
\]
where $F:\mathbb R^m\to\mathbb R$ is continuously differentiable.

Assume that $F$ exhibits no cross-coalition externalities in the sense that,
for every $j$,
\[
F(x_1,\dots,x_m)
=
F_j(x_j)
\quad
\text{whenever } x_k=0 \text{ for } k\neq j.
\]

Then the continuum Aumann--Dr\`eze value satisfies
\[
\Omega^{AD}_i(v;T)
=
\int_0^1
\frac{\partial F_j}{\partial x}
\big(\lambda\,\mu(T_j)\big)
\, d\lambda,
\qquad i\in T_j,
\]
and coincides $\mu$-almost everywhere with the blockwise
Aumann--Shapley value computed within each coalition $T_j$.
\end{theorem}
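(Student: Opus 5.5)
The approach is to reduce the statement to the mass-based computation of Proposition~\ref{prop:mass-based}, applied coalition by coalition, using the definition of $\Omega^{AD}$ as the blockwise Aumann--Shapley value.

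\textbf{Step 1: identify the restricted game.} Fix $j$ and a measurable $A\subset T_j$. Since the $T_k$ are pairwise disjoint up to null sets, we have $T_k(A)=\mu(A\cap T_k)=0$ for $k\neq j$ and $T_j(A)=\mu(A)$. The no-externality hypothesis then gives
\[
v_{|T_j}(A)=F\bigl(0,\dots,\mu(A),\dots,0\bigr)=F_j(\mu(A)).
\]
So $v_{|T_j}$ is a mass-based game on the nonatomic space $(T_j,\mathcal I_{T_j},\mu_{T_j})$ with generator $F_j$. Because $F$ is $C^1$, $F_j$ is $C^1$ on $[0,\mu(T_j)]$.

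\textbf{Step 2: compute the marginal density.} Take any participation path $\{A_\lambda\}$ for $T_j$, so that $\mu(A_\lambda)=\lambda\mu(T_j)$. As in Proposition~\ref{prop:mass-based} (the same chain-rule argument as in its proof), we get
\[
\frac{d}{d\lambda}v_{|T_j}(A_\lambda)=F_j'(\lambda\mu(T_j))\,\mu(T_j)=\int_{T_j}F_j'(\lambda\mu(T_j))\,d\mu(i).
\]
Hence $m_j(i,\lambda)=F_j'(\lambda\mu(T_j))$ is a marginal contribution density for the restricted game, uniform in $i\in T_j$.

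\textbf{Step 3: assemble the value.} Definition~\ref{def:asvalue} gives $\phi^{AS}_i(v_{|T_j})=\int_0^1 F_j'(\lambda\mu(T_j))\,d\lambda$ for $i\in T_j$. The definition of $\Omega^{AD}$ then yields the displayed formula on $T_j$. Since the $T_j$ cover $I$ up to a null set, the equality holds $\mu$-a.e. on $I$, which is the coincidence with the blockwise Aumann--Shapley value. As a consistency check, efficiency on each block gives $\mu(T_j)\int_0^1F_j'(\lambda\mu(T_j))\,d\lambda=F_j(\mu(T_j))-F_j(0)$, which equals $v(T_j)$ when $F(0)=0$.

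\textbf{Main obstacle.} The only delicate point is Step 2. Proposition~\ref{prop:mass-based} is stated on a probability space with $\mu(I)=1$, whereas $\mu(T_j)$ may be less than $1$. I would handle this by redoing its chain-rule computation directly with total mass $\mu(T_j)$, as in Example~\ref{ex:continuum_ad}, rather than normalizing. Normalizing would change the form of the formula. Working directly makes the argument $\lambda\mu(T_j)$ of $F_j'$ appear naturally. If $\mu(T_j)=0$, the block is null and the claim is vacuous.
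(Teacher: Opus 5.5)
Your proposal is correct and follows the paper's proof essentially step for step. You restrict to $T_j$, use disjointness and the no-externality hypothesis to see that $v_{|T_j}(A)=F_j(\mu(A))$ is mass-based, differentiate along a participation path of $T_j$ with total mass $\mu(T_j)$ without normalizing (exactly as the paper does), and read off $\Omega^{AD}$ from the blockwise Aumann--Shapley value.
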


\begin{proof}
Under the membership representation,
\(
T_j(S)
=
\int_S \mathbf 1_{T_j}(i)\,d\mu(i)
=
\mu(S\cap T_j).
\)
 Fix a coalition $T_j$ and consider the restricted game
$v_{|T_j}$.
For any $A\subset T_j$,
\[
T_k(A)=0 \text{ for } k\neq j,
\qquad
T_j(A)=\mu(A).
\]

Hence, by the no-externality assumption,
\[
v_{|T_j}(A)
=
F(0,\dots,\mu(A),\dots,0)
=
F_j(\mu(A)).
\]

Thus the restricted game is mass-based on $T_j$.

Let $\{A_\lambda\}$ be a participation path for $T_j$.
Then $\mu(A_\lambda)=\lambda\,\mu(T_j)$ and
\[
v_{|T_j}(A_\lambda)
=
F_j(\lambda\,\mu(T_j)).
\]

Differentiating,
\[
\frac{d}{d\lambda}v_{|T_j}(A_\lambda)
=
F_j'(\lambda\,\mu(T_j))\,\mu(T_j).
\]

By the definition of the Aumann--Shapley value within $T_j$,
\[
\phi^{AS}_i(v_{|T_j})
=
\int_0^1
F_j'(\lambda\,\mu(T_j))
\, d\lambda,
\qquad i\in T_j.
\]

By Definition~\ref{subsec:continuum_ad_value},
\[
\Omega^{AD}_i(v;T)
=
\phi^{AS}_i(v_{|T_j}),
\qquad i\in T_j,
\]
which proves the claim.
\end{proof}

\section{Exit-and-Join Dynamics in the Continuum}
\label{sec:continuum_exit_join}

This section defines exit-and-join dynamics for a continuum of players in a
measure-theoretically rigorous manner. Because the player space is nonatomic,
individual agents have zero measure and cannot be assigned trajectories.
Accordingly, coalition reconfiguration is described as an evolution of
population measures, induced by payoff differences under the continuum
Aumann-Dr\`eze value.

\subsection{Primitives, State Space, and Coalition Payoffs}

We begin by specifying the primitives of the model and the induced
state representation.
Let $(I,\mathcal I,\mu)$ be a nonatomic probability space,
so that $\mu(I)=1$ and $\mu(\{i\})=0$ for all $i\in I$.
Let $v:\mathcal I\to\mathbb R$ be a transferable–utility cooperative
game defined on measurable coalitions.
We assume throughout that $v$ satisfies the regularity conditions
required for the existence of the Aumann–Shapley value on measurable
subsets and, consequently, for the continuum Aumann–Dr\`eze value
on measurable partitions.
Fix $m<\infty$.
A \emph{coalition structure} is a finite measurable partition
\[
T=\{T_1,\dots,T_m\}\subset\mathcal I
\]
satisfying
\[
\mu(T_i\cap T_j)=0 \quad \text{for } i\neq j,
\qquad
\mu\!\left(I\setminus\bigcup_{i=1}^m T_i\right)=0.
\]
Coalition structures that differ only on $\mu$–null sets are identified.
At time $t\ge 0$, the system is described by a coalition structure $T(t)$.
The associated \emph{population state} is
\[
x(t)=(x_1(t),\dots,x_m(t))\in\mathbb R_+^m,
\qquad
x_i(t):=\mu\!\left(T_i(t)\right).
\]
Since $\sum_{i=1}^m x_i(t)=1$, the state lies in the simplex
\[
\Delta_m
=
\left\{
x\in\mathbb R_+^m:
\sum_{i=1}^m x_i=1
\right\}.
\]

\begin{remark}
The mapping $T\mapsto x$ is many–to–one:
distinct coalition structures may induce the same population state.
However, coalition payoffs and exit–and–join dynamics
will depend only on the induced state $x\in\Delta_m$.
\end{remark}

For each coalition structure $T=\{T_1,\dots,T_m\}$,
let
\[
\Omega^{AD}(v;T)
=
\bigl(\Omega^{AD}_k(v;T)\bigr)_{k\in I}
\]
denote the continuum Aumann–Dr\`eze payoff density.

\begin{assumption}
\label{ass:coalition_symmetry}
For every coalition $T_i\in T$, the restricted game $v_{|T_i}$
is symmetric in the sense that its Aumann–Shapley value assigns
the same payoff density to $\mu$–almost every agent in $T_i$.
\end{assumption}

\begin{lemma}
\label{lem:coalition_payoff_level}
Under Assumption~\ref{ass:coalition_symmetry},
for each coalition $T_i$ there exists a constant
$r_i(T)\in\mathbb R$ such that
\[
\Omega^{AD}_k(v;T)=r_i(T)
\quad \text{for $\mu$–almost every } k\in T_i.
\]
\end{lemma}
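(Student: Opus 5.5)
The argument is almost entirely definitional. We unpack the continuum Aumann--Dr\`eze value on a single block and then read off the constant from Assumption~\ref{ass:coalition_symmetry}.

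Fix a coalition $T_i\in T$. By the definition of the continuum Aumann--Dr\`eze value in Section~\ref{subsec:continuum_ad_value}, $\Omega^{AD}_k(v;T)=\phi^{AS}_k(v_{|T_i})$ for $\mu$-almost every $k\in T_i$. The right-hand side is the Aumann--Shapley value of the restricted game on the nonatomic space $(T_i,\mathcal I_{T_i},\mu_{T_i})$. It exists by the standing regularity assumption on $v$, and it is an element of $L^1(T_i,\mu_{T_i})$. Assumption~\ref{ass:coalition_symmetry} says precisely that this element is $\mu$-a.e.\ constant on $T_i$. So there are a real number $c$ and a null set $N_i\subset T_i$ with $\phi^{AS}_k(v_{|T_i})=c$ for all $k\in T_i\setminus N_i$. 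We set $r_i(T):=c$ and combine the two a.e.\ identities; a union of two null sets is null. This gives $\Omega^{AD}_k(v;T)=r_i(T)$ for $\mu$-a.e.\ $k\in T_i$.

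Two small points need attention.

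\emph{Well-definedness of $r_i(T)$.} The constant has to be a genuine real number, and it has to be unique. The degenerate case $\mu(T_i)=0$ should be handled separately. There the a.e.\ statement is vacuous, and we may set $r_i(T):=0$ by convention. When $\mu(T_i)>0$, any two constants satisfying the conclusion agree on a set of positive measure, so they coincide. In that case we can also identify $r_i(T)=v(T_i)/\mu(T_i)$ using the within-coalition efficiency proposition of Section~\ref{subsec:continuum_ad_value}: integrate the constant over $T_i$ to get $r_i(T)\mu(T_i)=v(T_i)$. Recording this formula is optional, but it shows $r_i(T)$ is finite and determined by $v$ and $T$.

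\emph{Independence of representatives.} $\Omega^{AD}$ is defined only up to null sets, and coalition structures are identified up to null sets. So we should check that changing $T_i$ by a null set changes neither the restricted Aumann--Shapley value (outside a null set) nor $v(T_i)$. This follows because $v$ and $\mu$ do not distinguish sets that differ by null sets.

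The main obstacle, to the extent there is one, is not analytic but interpretive. The symmetry assumption must be read as ``$\phi^{AS}(v_{|T_i})$ is $\mu$-a.e.\ equal to a constant,'' not merely ``pairwise equal for a.e.\ $(k,k')$.'' If only the pairwise form is available, I would pass to the a.e.-constant form by a Fubini argument. Suppose $f(k)=f(k')$ for $\mu\otimes\mu$-a.e.\ $(k,k')\in T_i\times T_i$. Then for a.e.\ fixed $k'$ the function $f$ equals $f(k')$ a.e., and any such $k'$ supplies the constant.
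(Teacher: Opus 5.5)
Your proof is correct and is the same definitional argument the paper gives. On $T_i$ the continuum Aumann--Dr\`eze value coincides $\mu$-a.e.\ with $\phi^{AS}(v_{|T_i})$, Assumption~\ref{ass:coalition_symmetry} makes this $\mu$-a.e.\ constant, and that constant is named $r_i(T)$. Your extra points are sound refinements that the paper skips: the constant is unique only when $\mu(T_i)>0$, efficiency within the coalition gives $r_i(T)=v(T_i)/\mu(T_i)$, and Fubini converts a pairwise reading of symmetry into a.e.\ constancy. One small caution concerns your convention $r_i(T):=0$ for null coalitions. It is harmless here because the lemma is vacuous in that case, but it is not what the paper later uses (in the mass-based case it sets $\rho_i(0)=F'(0)$ by continuity), and payoffs of empty coalitions do enter the equilibrium conditions.
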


\begin{proof}
By coalition symmetry,
$\Omega^{AD}_k(v;T)$ is constant $\mu$–almost everywhere
on $T_i$. The constant is uniquely defined up to
$\mu$–null sets and is denoted by $r_i(T)$.
\qedhere
\end{proof}
Let $x\in\Delta_m$ be induced by a coalition structure $T$.
Define the \emph{coalition payoff function}
\[
\rho_i:\Delta_m\to\mathbb R,
\qquad
\rho_i(x):=r_i(T),
\quad i=1,\dots,m.
\]
The corresponding payoff vector field is
\[
\rho(x)
=
(\rho_1(x),\dots,\rho_m(x)).
\]

\begin{remark}
Although the definition $\rho_i(x):=r_i(T)$ is expressed
in terms of the coalition structure $T$ inducing $x$,
the dependence on $T$ is only through the coalition mass
$\mu(T_i)=x_i$.
Under Assumption~\ref{ass:coalition_symmetry},
the Aumann–Shapley value within a coalition depends
on the restricted game solely via the coalition measure.
Hence the $x$–dependence of $\rho_i(x)$ is implicit
through $x_i$, and no additional structural features
of the partition affect the payoff level.
\end{remark}
\begin{proposition}
\label{prop:state_dependence}
Under Assumption~\ref{ass:coalition_symmetry},
the payoff vector $\rho(x)$ depends on the coalition structure
only through the induced population state $x\in\Delta_m$.
\end{proposition}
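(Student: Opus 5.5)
We must show: if $T$ and $T'$ are two coalition structures inducing the same state $x\in\Delta_m$, i.e.\ $\mu(T_i)=\mu(T'_i)=x_i$ for all $i$, then $r_i(T)=r_i(T')$ for every $i$. The plan is to pin down the constant $r_i(T)$ of Lemma~\ref{lem:coalition_payoff_level} through coalition-level efficiency, and then argue that the remaining quantity depends on $T_i$ only through its mass.

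The first step is to combine Lemma~\ref{lem:coalition_payoff_level} with the efficiency proposition for the continuum Aumann--Dr\`eze value, $\int_{T_i}\Omega^{AD}_k(v;T)\,d\mu(k)=v(T_i)$. Since $\Omega^{AD}_k(v;T)=r_i(T)$ for $\mu$-a.e.\ $k\in T_i$, this gives $r_i(T)\,x_i=v(T_i)$. Hence
\[
r_i(T)=\frac{v(T_i)}{x_i}\qquad\text{whenever } x_i>0 .
\]
It therefore suffices to show that $v(T_i)=v(T'_i)$ whenever $\mu(T_i)=\mu(T'_i)$.

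This second step is the main obstacle: a general set function need not have that invariance. I will use the standing interpretation, recorded in the remark preceding the proposition, that under Assumption~\ref{ass:coalition_symmetry} the restricted game sees the coalition only through its measure. Concretely, I will read symmetry as saying that the marginal contribution density of $v_{|T_i}$ along any participation path $\{A_\lambda\}$ depends only on $\mu(A_\lambda)=\lambda x_i$. Writing it as $g(\lambda x_i)$, Definition~\ref{def:mdensity} gives
\[
\frac{d}{d\lambda}v(A_\lambda)=x_i\, g(\lambda x_i).
\]
Integrating from $0$ to $1$ expresses $v(T_i)$ as a function of $x_i$ alone. This is exactly the mechanism in Proposition~\ref{prop:mass-based} and Theorem~\ref{thm:ad_membership_connection}. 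In the functional setting of Remark~\ref{rem:coalition_as_type}, the same conclusion follows directly from the no-externality form $v_{|T_i}(A)=F_i(\mu(A))$. I will state explicitly that the proposition is proved under this mass-dependence reading of the symmetry assumption, since symmetry of payoffs alone within a single coalition does not by itself compare two different coalitions of equal mass.

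The third step handles the degenerate case $x_i=0$. Then $T_i$ is $\mu$-null, so $\rho_i(x)$ is a payoff on a null set. Such a coalition is identified with $\varnothing$ under the paper's convention that coalitions differing on null sets are equal. I will therefore define $\rho_i(x)$ as the limiting marginal value $g(0)$, equivalently $F_i'(0)$ in the functional case. This choice again depends only on $x$.

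Combining the three steps, $\rho_i(x)=r_i(T)$ is a well-defined function of $x\in\Delta_m$ for every $i$. This proves that $\rho(x)$ depends on the coalition structure only through the induced population state.
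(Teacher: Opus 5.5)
Your proof is correct and rests on the same ingredient as the paper's, but it is organized differently and is more explicit about what is being assumed.

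The paper's proof is a single step. It asserts that under Assumption~\ref{ass:coalition_symmetry} the Aumann--Shapley value within a coalition ``depends only on the restricted game and the coalition mass,'' and concludes $r_i(T)=r_i(T')$. But $v_{|T_i}$ and $v_{|T'_i}$ are different games, so that sentence yields the conclusion only if it is read as the mass-dependence hypothesis you state explicitly.

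Your caveat that within-coalition symmetry alone cannot compare two coalitions of equal mass is justified, at least when the assumption is imposed only on the structures being compared. Take $v(S)=\int_S\theta\,d\mu$ on $[0,1]$ with $\theta=1$ on $[0,\tfrac12]$ and $\theta=2$ on $(\tfrac12,1]$. The structures $\{[0,\tfrac12],(\tfrac12,1]\}$ and $\{(\tfrac12,1],[0,\tfrac12]\}$ both satisfy the assumption and both induce $x=(\tfrac12,\tfrac12)$. Yet they give payoff vectors $(1,2)$ and $(2,1)$.

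Your route goes through efficiency, $r_i(T)=v(T_i)/x_i$, which reduces the claim to invariance of $v(T_i)$. You then obtain that invariance by integrating a marginal density of the form $g(\lambda x_i)$. Two remarks on this step:
\begin{enumerate}
\item Under the same hypothesis, the definition of the Aumann--Shapley value gives $r_i(T)=\int_0^1 g(\lambda x_i)\,d\lambda$ directly. The efficiency step is therefore a harmless detour.
\item You should say explicitly that $g$ is one function common to all coalitions, not a coalition-specific $g_{T_i}$. That is exactly where the comparison between $T_i$ and $T'_i$ takes place.
\end{enumerate}

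Your treatment of $x_i=0$ is something the paper's proof omits. In that case Lemma~\ref{lem:coalition_payoff_level} does not determine $r_i(T)$ at all, since any constant works on a null set, so a convention is needed. Your choice $g(0)$ agrees with the paper's later continuous extension $\rho_i(0)=F'(0)$ in the mass-based case.
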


\begin{proof}
Suppose coalition structures $T$ and $T'$ induce the same state
$x\in\Delta_m$. Then $\mu(T_i)=\mu(T_i')$ for all $i$.
Under coalition symmetry,
the Aumann–Shapley value within each coalition depends only
on the restricted game and the coalition mass.
Hence $r_i(T)=r_i(T')$ for all $i$,
so $\rho_i(x)$ is well defined independently of
the specific partition realizing $x$.
\qedhere
\end{proof}

\subsection{Admissible Switching Incentives}

Because the player space $(I,\mathcal I,\mu)$ is nonatomic,
each individual agent has zero measure and therefore cannot
affect the population state on its own.
Incentives must therefore be understood at the level of
individual switching decisions, which aggregate into
movements of population mass.

Let $x\in\Delta_m$ be the population state induced by a coalition
structure $T=\{T_1,\dots,T_m\}$.
Under Assumption~\ref{ass:coalition_symmetry},
the continuum Aumann–Dr\`eze value assigns a payoff density
that is constant $\mu$–almost everywhere on each coalition.
We denote by
\(
\rho_i(x)
\)
the payoff level assigned to coalition $T_i$.
Thus, for $\mu$–almost every agent $k\in T_i$,
its payoff equals $\rho_i(x)$.

\begin{definition}
\label{def:admissible_switch}
Coalition $j$ offers an \emph{admissible switching opportunity}
to agents in coalition $i$ at state $x$
if
\[
\rho_j(x) > \rho_i(x).
\]
\end{definition}

Admissibility is evaluated at the current population state $x$.
It captures the existence of a strict payoff advantage
for agents currently in coalition $i$
to switch voluntarily to coalition $j$.

\begin{lemma}
\label{lem:positive_measure_switch}
If $\rho_j(x)>\rho_i(x)$,
then for every measurable subset $S\subseteq T_i$
with $\mu(S)>0$,
almost every agent in $S$ strictly prefers
switching to coalition $T_j$
over remaining in $T_i$.
\end{lemma}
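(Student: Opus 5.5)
The argument reduces to Lemma~\ref{lem:coalition_payoff_level}, which makes payoffs constant almost everywhere on each coalition, together with a precise reading of what it means for an agent to ``prefer'' switching. An agent's payoff from remaining in $T_i$ is its Aumann--Dr\`eze payoff density $\Omega^{AD}_k(v;T)$. Its payoff from joining $T_j$ is the payoff density it would receive as a member of $T_j$. Because a single agent has measure zero, its unilateral move does not change the coalition structure up to $\mu$-null sets. It therefore does not change the induced state $x$, and so by Proposition~\ref{prop:state_dependence} it does not change the payoff vector $\rho(x)$. So the deviating agent would receive the level $\rho_j(x)$ prevailing in $T_j$.

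The steps are as follows. First, apply Lemma~\ref{lem:coalition_payoff_level} to $T_i$. This gives a $\mu$-null set $N_i\subseteq T_i$ such that $\Omega^{AD}_k(v;T)=\rho_i(x)$ for every $k\in T_i\setminus N_i$.

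Second, formalize the post-switch payoff. For $k\in T_i$, let $T^{(k)}$ be the structure obtained by moving $\{k\}$ from $T_i$ to $T_j$. Since $\mu(\{k\})=0$, the structure $T^{(k)}$ equals $T$ up to null sets, so it is identified with $T$ and induces the same $x$. By Lemma~\ref{lem:coalition_payoff_level} applied to $T_j$ (with its own null set $N_j$), a member of $T_j$ receives $\rho_j(x)$. I would interpret the switching payoff of agent $k$ as this common level $\rho_j(x)$. The interpretation is legitimate because the payoff of an individual point is only defined almost everywhere, and the natural representative is the a.e.\ constant value on $T_j$.

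Third, combine the two steps. For every $k\in S\setminus N_i$ the comparison is $\rho_j(x)>\rho_i(x)=\Omega^{AD}_k(v;T)$, so $k$ strictly prefers switching. Since $\mu(S\cap N_i)=0$, this holds for almost every $k\in S$. The hypothesis $\mu(S)>0$ is used only to ensure that the statement is not vacuous, so that the preferring agents form a positive-mass set. This is the form needed later to generate positive-mass flows.

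The main obstacle is the second step. The value of a function at a single point of a nonatomic space is not meaningful, so ``the payoff agent $k$ would get in $T_j$'' has to be pinned down. I would handle this by stating explicitly that individual payoffs are evaluated through the a.e.\ constant representatives $\rho_i(x)$ and $\rho_j(x)$, which is justified by the null-set identification of coalition structures. I would then note that the remaining argument is only a null-set bookkeeping of $N_i$ and $N_j$.
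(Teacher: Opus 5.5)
Your proposal is correct and follows essentially the same route as the paper. The paper's proof likewise uses coalition symmetry to get the payoff $\rho_i(x)$ almost everywhere on $T_i$. It then says a switcher ``while the population state remains $x$'' would receive $\rho_j(x)$, and concludes from $\rho_j(x)>\rho_i(x)$ and $\mu(S)>0$. Your explicit null set $N_i$ and your reading of the post-switch payoff as the a.e.\ constant level on $T_j$ simply make precise what the paper leaves implicit in that phrase.
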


\begin{proof}
Under coalition symmetry,
almost every agent in $T_i$ receives payoff $\rho_i(x)$.
Hence almost every agent in $S\subseteq T_i$
receives payoff $\rho_i(x)$.

If such an agent were to switch to coalition $T_j$
while the population state remains $x$,
its payoff would equal $\rho_j(x)$.
Since $\rho_j(x)>\rho_i(x)$,
switching strictly increases payoff
for almost every agent in $S$.

Because $\mu(S)>0$,
there exists a set of positive measure
whose members strictly prefer to switch.
\qedhere
\end{proof}

\begin{remark}
In the nonatomic setting,
strict payoff inequality between coalitions
generates decentralized switching incentives.
No central planner reallocates agents.
Rather, when $\rho_j(x)>\rho_i(x)$,
almost every agent in coalition $i$
individually prefers coalition $j$.
Aggregate movements of population mass
therefore arise endogenously
from individual incentives.
\end{remark}

\subsection{Finite-Population Approximation and Mean-Field Limit}
\label{subsec:derivation_mass_dynamics}

The continuum dynamics should not be interpreted as assigning an
independent Poisson clock to each individual point of the nonatomic
space. A cleaner construction is to approximate the continuum by
finite populations and then pass to a law-of-large-numbers limit.
This yields the deterministic coalition-mass equation as the limit of
density-dependent Markov chains.

For each ordered pair $(i,j)$ with $i\neq j$, let
\[
\lambda_{ij}:\Delta_m\to\mathbb R_+
\]
be the per-agent switching intensity from coalition $i$ to coalition
$j$. The intensity may depend on the current population state.

\begin{assumption}
\label{ass:mf_regularity}
For each $i\neq j$, the function $\lambda_{ij}$ is Lipschitz continuous
on $\Delta_m$. Moreover, there is a constant $\Lambda<\infty$ such that
\[
0\le \lambda_{ij}(x)\le \Lambda,
\qquad x\in\Delta_m,\quad i\neq j.
\]
\end{assumption}

The boundedness condition is automatic if the $\lambda_{ij}$ are
continuous on the compact simplex. We state it explicitly because it is
the estimate used in the martingale bound below.

For $N\in\mathbb N$, consider $N$ agents distributed among the $m$
coalitions. Let
\[
X^N(t)=\bigl(X_1^N(t),\dots,X_m^N(t)\bigr)
\in
\Delta_m^N,
\qquad
\Delta_m^N:=\{x\in\Delta_m:Nx_i\in\mathbb N\},
\]
where $X_i^N(t)$ is the fraction of agents in coalition $i$ at time
$t$. Conditional on $X^N(t)=x$, a transition from coalition $i$ to
coalition $j$ changes the empirical state by
\[
x\longmapsto x+\frac1N(e_j-e_i),
\qquad i\neq j,
\]
and occurs at aggregate rate
\begin{equation}
\label{eq:finite_rates}
q_{ij}^N(x)=N x_i\lambda_{ij}(x).
\end{equation}
Thus each of the $Nx_i$ agents currently in coalition $i$ switches to
coalition $j$ with intensity $\lambda_{ij}(x)$.

The generator $\mathcal L^N$ of $X^N(\cdot)$ acts on bounded test
functions $f:\Delta_m^N\to\mathbb R$ by
\begin{equation}
\label{eq:finite_generator}
\mathcal L^N f(x)
=
\sum_{i\neq j}
N x_i\lambda_{ij}(x)
\left[
f\!\left(x+\frac1N(e_j-e_i)\right)-f(x)
\right].
\end{equation}

Define $b:\Delta_m\to\mathbb R^m$ by
\begin{equation}
\label{eq:mass_balance}
b_i(x)
=
\sum_{j\neq i}\lambda_{ji}(x)x_j
-
\sum_{j\neq i}\lambda_{ij}(x)x_i,
\qquad i=1,\dots,m.
\end{equation}
This is the net inflow into coalition $i$ minus the outflow from
coalition $i$. Under Assumption~\ref{ass:mf_regularity}, the map $b$
is Lipschitz on $\Delta_m$ and bounded.

\begin{theorem}
\label{thm:exit_join_ode}
Assume Assumption~\ref{ass:mf_regularity}. If
$X^N(0)\to x_0\in\Delta_m$ in probability, then for every $T<\infty$,
\[
\sup_{0\le t\le T}\|X^N(t)-x(t)\|
\xrightarrow[N\to\infty]{\mathbb P}0,
\]
where $x(\cdot)$ is the unique solution of
\begin{equation}
\label{eq:mf_ode}
\dot x(t)=b(x(t)),
\qquad x(0)=x_0.
\end{equation}
Equivalently, for each $i=1,\dots,m$,
\[
\dot{x}_i(t)
=
\sum_{j\neq i}\lambda_{ji}(x(t))x_j(t)
-
\sum_{j\neq i}\lambda_{ij}(x(t))x_i(t).
\]
\end{theorem}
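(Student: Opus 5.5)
We follow the standard Kurtz-type argument for density-dependent Markov chains: Dynkin martingale decomposition, a quadratic-variation bound, then Gr\"onwall's inequality. First, the limit equation \eqref{eq:mf_ode} is well posed on $\Delta_m$. By Assumption~\ref{ass:mf_regularity} each $\lambda_{ij}$ is Lipschitz and bounded by $\Lambda$. Since $x_i\le 1$ on $\Delta_m$, each product $x\mapsto\lambda_{ij}(x)x_i$ is Lipschitz, and hence so is $b$; call its Lipschitz constant $L$. We extend $b$ to a Lipschitz map on $\mathbb R^m$ (for instance by composing with the Euclidean projection onto $\Delta_m$), so that Picard--Lindel\"of gives a unique global solution. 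This solution stays in $\Delta_m$ for two reasons. First, $\sum_i b_i(x)=0$, because every term $\lambda_{ij}x_i$ appears once with each sign. Second, on the face $\{x_i=0\}$ we have $b_i(x)=\sum_{j\neq i}\lambda_{ji}x_j\ge 0$, so the simplex is forward invariant.

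Second, we write the Dynkin decomposition for the coordinate functions $f(x)=x_k$. Applying \eqref{eq:finite_generator} to $f(x)=x_k$ gives $\mathcal L^N f(x)=b_k(x)$ exactly, since each $i\to j$ jump moves $x$ by $\frac1N(e_j-e_i)$ at rate $Nx_i\lambda_{ij}(x)$. Because the jump rates are bounded by $Nm^2\Lambda$, the chain is non-explosive and
\[
M^N(t):=X^N(t)-X^N(0)-\int_0^t b(X^N(s))\,ds
\]
is a square-integrable martingale. Its predictable quadratic variation is
\[
\langle M^N_k\rangle_t=\int_0^t\sum_{i\neq j}NX^N_i\lambda_{ij}(X^N)\,\frac1{N^2}\bigl((e_j-e_i)_k\bigr)^2\,ds\le \frac{2m^2\Lambda t}{N}.
\]
Doob's $L^2$ maximal inequality then yields $\mathbb E\sup_{t\le T}\|M^N(t)\|^2\le C m^3\Lambda T/N\to 0$. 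In particular $\sup_{t\le T}\|M^N(t)\|\to 0$ in probability.

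Third, we compare with the deterministic solution. Writing $x(t)=x_0+\int_0^t b(x(s))\,ds$ and subtracting,
\[
\|X^N(t)-x(t)\|\le \|X^N(0)-x_0\|+\sup_{s\le T}\|M^N(s)\|+L\int_0^t\|X^N(s)-x(s)\|\,ds .
\]
Gr\"onwall's inequality then gives, pathwise,
\[
\sup_{t\le T}\|X^N(t)-x(t)\|\le\bigl(\|X^N(0)-x_0\|+\sup_{s\le T}\|M^N(s)\|\bigr)e^{LT}.
\]
Both terms in the bracket tend to zero in probability, the first by hypothesis and the second by the martingale step. This proves the claim, and the coordinate form of \eqref{eq:mf_ode} is just \eqref{eq:mass_balance}.

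The main obstacle is the martingale step. We need to justify that $M^N$ is a genuine martingale rather than merely a local martingale, and to compute its quadratic variation correctly. Both follow from the uniform bound on the total jump rate together with the fact that jumps have size $O(1/N)$. If the direct computation becomes awkward, we fall back on citing the general density-dependent convergence theorem in \cite{ethier2009markov} (Ch.~11). Its hypotheses — bounded total transition intensities and a Lipschitz drift $b$ — are exactly what Assumption~\ref{ass:mf_regularity} provides.
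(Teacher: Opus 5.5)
Your proposal is correct and follows the paper's proof essentially step for step. Both arguments use a Lipschitz extension of $b$ with Picard--Lindel\"of and the boundary-sign argument for invariance of $\Delta_m$, then the Dynkin martingale for the coordinate maps with $\mathcal L^N f_k=b_k$, an $O(t/N)$ bound on $\langle M_k^N\rangle$ from bounded rates and $1/N$ jumps, Doob's inequality, and Gr\"onwall. The only differences are cosmetic: you extend via projection onto $\Delta_m$, use looser constants such as $2m^2\Lambda$ instead of $(m-1)\Lambda$, and apply Gr\"onwall to the pointwise rather than the running-sup error, none of which affects the argument.
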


\begin{proof}
Since $b$ is Lipschitz on the compact simplex, it admits a Lipschitz
extension to a neighborhood of $\Delta_m$. Picard--Lindelof therefore
gives a unique local solution, and boundedness of $b$ gives global
existence. The solution remains in $\Delta_m$: indeed,
$\sum_i b_i(x)=0$, and if $x_i=0$ then
$b_i(x)=\sum_{j\neq i}\lambda_{ji}(x)x_j\ge 0$.

For each coordinate $i$, Dynkin's formula applied to the coordinate map
$f_i(x)=x_i$
gives the martingale
\begin{equation}
\label{eq:martingale}
M_i^N(t)
:=
X_i^N(t)-X_i^N(0)-\int_0^t b_i(X^N(s))\,ds .
\end{equation}
Indeed, using \eqref{eq:finite_generator} with $f_i(x)=x_i$ gives
\[
\mathcal L^N f_i(x)
=
\sum_{j\neq i}x_j\lambda_{ji}(x)
-
\sum_{j\neq i}x_i\lambda_{ij}(x)
=b_i(x).
\]

Each jump changes a single coordinate by at most $1/N$. Since the total
jump rate is bounded by
\[
\sum_{i\neq j}N x_i\lambda_{ij}(x)
\le
N(m-1)\Lambda,
\]
the predictable quadratic variation satisfies, for a constant $C$
independent of $N$,
\[
\langle M_i^N\rangle(t)\le \frac{Ct}{N}.
\]
Doob's inequality therefore gives, for each $T<\infty$,
\[
\mathbb E\!\left[
\sup_{0\le t\le T}|M_i^N(t)|^2
\right]
\le
\frac{C_T}{N}.
\]
Thus $\sup_{0\le t\le T}\|M^N(t)\|\to 0$ in probability.

Let $e^N(t)=X^N(t)-x(t)$. Combining \eqref{eq:martingale} with
\eqref{eq:mf_ode} yields
\[
e^N(t)
=
X^N(0)-x_0
+
\int_0^t\bigl(b(X^N(s))-b(x(s))\bigr)\,ds
+
M^N(t).
\]
If $L$ is a Lipschitz constant for $b$, then
\[
\sup_{0\le s\le t}\|e^N(s)\|
\le
\|X^N(0)-x_0\|
+
L\int_0^t\sup_{0\le r\le s}\|e^N(r)\|\,ds
+
\sup_{0\le s\le t}\|M^N(s)\|.
\]
Grönwall's inequality gives
\[
\sup_{0\le t\le T}\|e^N(t)\|
\le
e^{LT}
\left(
\|X^N(0)-x_0\|
+
\sup_{0\le t\le T}\|M^N(t)\|
\right).
\]
The right-hand side converges to zero in probability, proving the
uniform-on-compact convergence.
\end{proof}

\begin{remark}
The deterministic ODE \eqref{eq:mf_ode} is the continuum model. It is
not obtained by counting events in an uncountable population. Rather, it
is the hydrodynamic limit of finite empirical coalition distributions.
The nonatomic model records only the limiting coalition masses.
\end{remark}

\begin{assumption}
\label{ass:incentive_compatibility}
The switching intensities are incentive compatible if, for every
$x\in\Delta_m$ and $i\neq j$,
\[
\rho_j(x)\le \rho_i(x)
\quad\Longrightarrow\quad
\lambda_{ij}(x)=0.
\]
They are strictly payoff-responsive if, in addition,
\[
\rho_j(x)>\rho_i(x)
\quad\Longrightarrow\quad
\lambda_{ij}(x)>0.
\]
\end{assumption}

Under incentive compatibility, the vector field generates no flow from a
coalition to another coalition with weakly lower payoff. Under strict
payoff responsiveness, every strictly profitable destination generates a
positive per-agent switching rate.

\subsection{Invariance and Incentive-Compatible Dynamics}
\label{subsec:invariance_lyapunov}

\begin{proposition}
\label{prop:mass_conservation}
Let $x(t)$ evolve according to the exit–and–join dynamics
\[
\dot{x}_i(t)
=
\sum_{j\neq i}\lambda_{ji}(x(t))\,x_j(t)
-
\sum_{j\neq i}\lambda_{ij}(x(t))\,x_i(t),
\qquad i=1,\dots,m.
\]
Then the simplex
\[
\Delta_m
=
\Bigl\{ x\in\mathbb R_+^m : \sum_{i=1}^m x_i = 1 \Bigr\}
\]
is forward invariant.
\end{proposition}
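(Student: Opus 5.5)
The plan has two parts: conservation of total mass, and nonnegativity of each coordinate. Throughout I work under Assumption~\ref{ass:mf_regularity}. Under that assumption $b$ is Lipschitz and bounded on $\Delta_m$, so it admits a Lipschitz extension to a neighborhood of $\Delta_m$. Picard--Lindel\"of then gives a unique local solution from any $x_0\in\Delta_m$. It suffices to show that this solution cannot leave $\Delta_m$; boundedness of $b$ then extends it globally.

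\emph{Mass conservation.} Sum the right-hand side over $i$. The inflow term $\sum_i\sum_{j\neq i}\lambda_{ji}x_j$ and the outflow term $\sum_i\sum_{j\neq i}\lambda_{ij}x_i$ are the same double sum over ordered pairs $(i,j)$, $i\neq j$, after relabeling. Hence $\sum_i b_i(x)=0$ identically, and this holds for the extension too, since the formula for $b$ makes sense for any $x$ near $\Delta_m$. Therefore $\frac{d}{dt}\sum_i x_i(t)=0$, and $\sum_i x_i(t)=1$ for all $t$.

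\emph{Nonnegativity.} This is the main obstacle, because the boundary observation alone ($x_i=0\Rightarrow b_i\ge 0$) is a tangency condition, and turning it into invariance needs care. I would avoid Nagumo-type arguments and use a linear comparison instead. Write
\[
\dot x_i=-a_i(t)\,x_i+g_i(t),
\qquad
a_i(t)=\sum_{j\neq i}\lambda_{ij}(x(t)),
\qquad
g_i(t)=\sum_{j\neq i}\lambda_{ji}(x(t))\,x_j(t).
\]
By variation of constants,
\[
x_i(t)=e^{-\int_0^t a_i}\,x_i(0)+\int_0^t e^{-\int_s^t a_i}\,g_i(s)\,ds .
\]
If all $x_j\ge 0$ on $[0,t]$, this representation shows $x_i(t)\ge 0$. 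To make the argument non-circular, I would do the following.
\begin{enumerate}
\item Extend the $\lambda_{ij}$ in a Lipschitz, nonnegative way, for instance by composing with a Lipschitz retraction onto $\Delta_m$ (or taking positive parts).
\item Consider the auxiliary system in which $x_j$ is replaced by $x_j^+$ inside $g_i$.
\item For the auxiliary solution, $g_i\ge 0$ automatically, so the formula gives $x_i(t)\ge 0$ for all $i$.
\item Conclude that the auxiliary solution coincides with the original one by uniqueness, since $x^+=x$ along it.
\end{enumerate}
Combined with mass conservation, this yields $x(t)\in\Delta_m$ for all $t\ge 0$.
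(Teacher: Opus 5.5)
Your proof is correct. The mass-conservation step is the paper's own relabeling of the double sum, but your nonnegativity step takes a different and more careful route.

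The paper records only the boundary observation. If $x_i(t_0)=0$, the outflow terms vanish and the inflow terms are nonnegative, so $\dot x_i(t_0)\ge 0$, and the paper concludes at once that trajectories cannot cross the boundary of $\mathbb R_+^m$. That is a Nagumo-type tangency condition, and as written it is a sketch:
\begin{itemize}
\item a nonnegative derivative at a boundary time (possibly zero) does not by itself prevent exit;
\item the sign of the inflow terms just after such a time depends on the other coordinates staying nonnegative, which is what is being proved.
\end{itemize}
Closing this needs a quasi-positivity invariance theorem that uses Lipschitz continuity and uniqueness.

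Your construction is a self-contained proof of that theorem in this setting. Putting $x_j^+$ in the inflow term makes $g_i\ge 0$ unconditional, variation of constants then gives nonnegativity with no first-exit-time circularity, and uniqueness transfers the conclusion back to the original solution.

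Your route also makes two implicit points explicit:
\begin{itemize}
\item It needs the regularity of Assumption~\ref{ass:mf_regularity}. The statement does not mention it, but both arguments need it.
\item The extension off $\Delta_m$ must be made at the level of the rates $\lambda_{ij}$, as in your step (1), not by a generic Lipschitz extension of $b$. Only then do both $\sum_i b_i\equiv 0$ and the splitting $\dot x_i=-a_i x_i+g_i$ survive off the simplex, so your opening sentence about extending $b$ should be read in that sense.
\end{itemize}

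Finally, your formula gives a quantitative bonus that the tangency argument cannot. Since $0\le a_i\le (m-1)\Lambda$ and $g_i\ge 0$, it yields $x_i(t)\ge e^{-(m-1)\Lambda t}x_i(0)$, so a coalition with positive initial mass never empties in finite time.
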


\begin{proof}
Summing the right–hand side over $i$ yields
\begin{align*}
\sum_{i=1}^m \dot{x}_i(t)
&=
\sum_{i=1}^m \sum_{j\neq i}\lambda_{ji}(x)x_j
-
\sum_{i=1}^m \sum_{j\neq i}\lambda_{ij}(x)x_i.
\end{align*}
Reindexing the first double sum shows that each term
$\lambda_{ij}(x)x_i$
appears exactly once with positive sign and once with negative sign.
Hence
\[
\sum_{i=1}^m \dot{x}_i(t)=0.
\]
Therefore $\sum_i x_i(t)$ remains constant along trajectories,
and if $x(0)\in\Delta_m$ then $\sum_i x_i(t)=1$ for all $t\ge 0$.

To verify nonnegativity, suppose $x_i(t_0)=0$ for some $t_0$.
Then all outflow terms from coalition $i$ vanish since they are
proportional to $x_i(t_0)$.
The inflow terms are nonnegative.
Thus $\dot{x}_i(t_0)\ge 0$.
Therefore trajectories cannot cross the boundary of $\mathbb R_+^m$,
and $\Delta_m$ is forward invariant.
\end{proof}

\subsubsection{Replicator Dynamics as a Special Case}

\begin{proposition}
\label{prop:replicator_reduction}
Suppose the switching intensities satisfy the payoff–difference rule
\[
\lambda_{ij}(x)
=
\kappa x_j\, [\rho_j(x)-\rho_i(x)]_+,
\qquad \kappa>0,
\]
where $[a]_+=\max\{a,0\}$.
Then the exit–and–join dynamics reduce to the replicator equation
\[
\dot{x}_i
=
\kappa x_i\bigl(\rho_i(x)-\bar\rho(x)\bigr),
\qquad
\bar\rho(x)=\sum_{j=1}^m x_j\rho_j(x).
\]
\end{proposition}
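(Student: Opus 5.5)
Substitute the payoff-difference rule into the mass-balance field \eqref{eq:mass_balance} and simplify algebraically. Fix $i$ and write $d_{ji}:=\rho_i(x)-\rho_j(x)$. The inflow into $i$ from $j$ is
\[
\lambda_{ji}(x)x_j=\kappa x_i x_j\,[\rho_i(x)-\rho_j(x)]_+ ,
\]
and the outflow from $i$ to $j$ is
\[
\lambda_{ij}(x)x_i=\kappa x_i x_j\,[\rho_j(x)-\rho_i(x)]_+ .
\]
Both terms carry the common factor $\kappa x_i x_j$. This is exactly why the rule puts the destination mass $x_j$ into $\lambda_{ij}$: it makes the pairwise flux symmetric in the masses.

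The key step is the identity $[a]_+-[-a]_+=a$ for real $a$, applied with $a=\rho_i(x)-\rho_j(x)$. It gives
\[
\lambda_{ji}(x)x_j-\lambda_{ij}(x)x_i=\kappa x_i x_j\bigl(\rho_i(x)-\rho_j(x)\bigr).
\]
Summing over $j\neq i$ yields
\[
\dot x_i=\kappa x_i\sum_{j\neq i}x_j\bigl(\rho_i(x)-\rho_j(x)\bigr).
\]
The $j=i$ term would vanish, so the sum may be extended to all $j$. Using $\sum_j x_j=1$ on $\Delta_m$, which Proposition~\ref{prop:mass_conservation} guarantees along trajectories, we get
\[
\sum_j x_j\bigl(\rho_i-\rho_j\bigr)=\rho_i(x)-\bar\rho(x).
\]
This is the replicator equation.

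Nothing here is a genuine obstacle; the one point needing care is the normalization. The identity $\sum_j x_j=1$ is used, so the reduction holds on the simplex. Off the simplex one would instead get $\kappa x_i\bigl((\sum_j x_j)\rho_i-\bar\rho\bigr)$, and I would state this explicitly. I would also remark that the intensities fit the framework. The rule is incentive compatible in the sense of Assumption~\ref{ass:incentive_compatibility}. It is strictly payoff-responsive only where $x_j>0$. It satisfies Assumption~\ref{ass:mf_regularity} whenever $\rho$ is Lipschitz, because $[\cdot]_+$ is $1$-Lipschitz and everything is bounded on the compact simplex. Hence Theorem~\ref{thm:exit_join_ode} applies to this special case, but the reduction itself is purely algebraic and does not depend on it.
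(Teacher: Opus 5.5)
Your proposal is correct and follows the paper's proof essentially step for step. Both substitute the rates into the mass-balance equation, apply $[a]_+-[-a]_+=a$ to the common factor $\kappa x_i x_j$, and use the simplex normalization to reach $\rho_i-\bar\rho$; the paper writes this as $\sum_{j\neq i}x_j=1-x_i$, whereas you extend the sum to include $j=i$. Your side remarks on the off-simplex form, incentive compatibility and Lipschitz regularity are accurate but not needed for the reduction.
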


\begin{proof}
Substituting the rate specification into the mass–balance system yields
\begin{align*}
\dot{x}_i
&=
\kappa x_i\sum_{j\neq i}x_j[\rho_i-\rho_j]_+
-
\kappa x_i\sum_{j\neq i}x_j[\rho_j-\rho_i]_+.
\end{align*}

Using the identity
\[
[a-b]_+ - [b-a]_+ = a-b
\quad\text{for all } a,b\in\mathbb R,
\]
we obtain
\[
x_i\sum_{j\neq i}x_j[\rho_i-\rho_j]_+
-
x_i\sum_{j\neq i}x_j[\rho_j-\rho_i]_+
=
x_i\sum_{j\neq i}x_j(\rho_i-\rho_j).
\]

Since $\sum_{j\neq i}x_j=1-x_i$, we have
\[
\sum_{j\neq i}x_j(\rho_i-\rho_j)
=
\rho_i-\sum_{j=1}^m x_j\rho_j
=
\rho_i-\bar\rho.
\]

Combining the expressions gives
\[
\dot{x}_i
=
\kappa x_i(\rho_i-\bar\rho),
\]
which is precisely the replicator equation.
\end{proof}

\subsubsection{General Incentive-Compatible Switching Rules}

The payoff difference rule leading to replicator dynamics is not unique.
More generally, the one-way incentive condition
\[
\lambda_{ij}(x)>0
\quad\Rightarrow\quad
\rho_j(x)>\rho_i(x),
\qquad i\neq j,
\]
admits a broad class of admissible switching specifications.

\begin{proposition}
\label{prop:general_phi_rule}
Let $\Phi:\mathbb R\to\mathbb R_+$ be a locally Lipschitz function satisfying
\[
\Phi(z)=0 \quad \text{for } z\le 0,
\qquad
\Phi(z)>0 \quad \text{for } z>0.
\]
Define the switching intensities by
\[
\lambda_{ij}(x)
=
x_j \, \Phi\!\big(\rho_j(x)-\rho_i(x)\big),
\qquad i\neq j.
\]
Then the induced exit–and–join dynamics are incentive compatible.
\end{proposition}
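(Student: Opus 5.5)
The plan is to verify the implication in Assumption~\ref{ass:incentive_compatibility} directly from the sign properties of $\Phi$. The other properties used elsewhere (nonnegativity, and the boundedness and Lipschitz regularity of Assumption~\ref{ass:mf_regularity}, under mild continuity of $\rho$) will be noted as by-products.

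First, fix $x\in\Delta_m$ and $i\neq j$, and suppose $\rho_j(x)\le\rho_i(x)$. Then $z:=\rho_j(x)-\rho_i(x)\le 0$, so $\Phi(z)=0$ by hypothesis, and hence
\[
\lambda_{ij}(x)=x_j\,\Phi(z)=0 .
\]
This is exactly incentive compatibility. Equivalently, $\lambda_{ij}(x)>0$ forces both $x_j>0$ and $\Phi(z)>0$, hence $\rho_j(x)>\rho_i(x)$, which is the one-way incentive condition displayed before the proposition. Nonnegativity $\lambda_{ij}\ge 0$ is immediate because $x_j\ge 0$ and $\Phi$ takes values in $\mathbb R_+$. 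Thus the induced mass dynamics \eqref{eq:mass_balance} never move mass toward weakly worse coalitions.

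Second, I would record a caveat. Strict payoff responsiveness holds only on the relative interior: if $\rho_j(x)>\rho_i(x)$ but $x_j=0$, then $\lambda_{ij}(x)=0$. So the proposition should be read as asserting only the first implication of Assumption~\ref{ass:incentive_compatibility}, which is the one the statement claims. I would also check that the rule is well posed as dynamics. If $\rho$ is Lipschitz on $\Delta_m$, then $z(x)$ is Lipschitz with values in a compact interval, and $\Phi$ is Lipschitz on that interval because it is locally Lipschitz. The product with $x_j$ is therefore Lipschitz and bounded, so Assumption~\ref{ass:mf_regularity} holds, and Theorem~\ref{thm:exit_join_ode} and Proposition~\ref{prop:mass_conservation} apply.

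There is no substantive obstacle. The only point requiring care is this regularity remark, which needs an assumption on $\rho$ that the proposition does not state. I would phrase it conditionally, so that the core claim rests solely on $\Phi\equiv 0$ on $(-\infty,0]$.
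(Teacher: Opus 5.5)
Your proof is correct and follows essentially the same route as the paper: since $\Phi\equiv 0$ on $(-\infty,0]$, $\lambda_{ij}(x)=0$ whenever $\rho_j(x)\le\rho_i(x)$, and you make the same observation as the paper that strict responsiveness fails when the destination is empty ($x_j=0$). Your conditional remark that Lipschitz $\rho$ yields Assumption~\ref{ass:mf_regularity} is a correct addition that the paper does not include.
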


\begin{proof}
If $\rho_j(x)\le \rho_i(x)$,
then $\rho_j(x)-\rho_i(x)\le 0$,
so $\Phi(\rho_j-\rho_i)=0$ and hence
$\lambda_{ij}(x)=0$.
If $\rho_j(x)>\rho_i(x)$,
then $\Phi(\rho_j-\rho_i)>0$,
so switching from coalition $i$ to coalition $j$
occurs with positive intensity whenever the destination coalition has
positive mass, $x_j>0$. If $x_j=0$, the rule is still incentive
compatible, but the empty destination is not accessible under this
pairwise-imitation specification.
\end{proof}

Under this specification, the mass–balance system becomes
\[
\dot{x}_i
=
x_i\sum_{j\neq i} x_j
\Big[
\Phi\!\big(\rho_i(x)-\rho_j(x)\big)
-
\Phi\!\big(\rho_j(x)-\rho_i(x)\big)
\Big],
\qquad i=1,\dots,m.
\]

\subsubsection{Switching Rules and Induced Population Dynamics}

If $\Phi(z)=\kappa z_+$ with $\kappa>0$ and
$z_+=\max\{z,0\}$, then we obtain
\[
\dot{x}_i
=
\kappa x_i
\sum_{j\neq i} x_j(\rho_i-\rho_j)
=
\kappa x_i\bigl(\rho_i(x)-\bar\rho(x)\bigr),
\]
where
\[
\bar\rho(x)=\sum_{j=1}^m x_j\rho_j(x).
\]
Hence the classical replicator equation arises as the linear
payoff–difference case.

Let
\(
\Phi(z)=\kappa z_+^\alpha,
\  \alpha>0.
\)
Then the induced dynamics become
\[
\dot x_i
=
\kappa x_i
\sum_{j\neq i}
x_j
\Bigl[
(\rho_i-\rho_j)_+^\alpha
-
(\rho_j-\rho_i)_+^\alpha
\Bigr].
\]

When $\alpha=1$, this reduces to the replicator equation.
For $\alpha\neq 1$, the adjustment intensity depends nonlinearly
on payoff differences. Large payoff gaps may generate
disproportionately strong flows when $\alpha>1$,
while $\alpha<1$ dampens large differences.
The case $\alpha\ge 1$ is locally Lipschitz and fits the regularity
assumption above; the sublinear case $0<\alpha<1$ is continuous but not
Lipschitz at zero payoff gaps and requires separate well-posedness
arguments.

Let
\(
\Phi(z)=\kappa \mathbf 1_{\{z>\varepsilon\}},
\  \varepsilon>0.
\)
Then
\[
\dot x_i
=
\kappa x_i
\sum_{j\neq i}
x_j
\Bigl[
\mathbf 1_{\{\rho_i-\rho_j>\varepsilon\}}
-
\mathbf 1_{\{\rho_j-\rho_i>\varepsilon\}}
\Bigr].
\]

Here the vector field is piecewise constant in payoff space.
Population mass moves only when payoff advantages exceed the
threshold $\varepsilon$, creating regions of local inertia
in which small payoff differences generate no adjustment.
This discontinuous rule is an illustrative adjustment protocol; it falls
outside Assumption~\ref{ass:mf_regularity} unless it is smoothed or
interpreted as a differential inclusion.

Let
\[
\Phi(z)
=
\kappa \frac{e^{\beta z}-1}{\beta}\mathbf 1_{\{z>0\}},
\qquad \beta>0.
\]
The induced dynamics are
\[
\dot x_i
=
\kappa x_i
\sum_{j\neq i}
x_j
\left[
\frac{e^{\beta(\rho_i-\rho_j)}-1}{\beta}\mathbf 1_{\{\rho_i>\rho_j\}}
-
\frac{e^{\beta(\rho_j-\rho_i)}-1}{\beta}\mathbf 1_{\{\rho_j>\rho_i\}}
\right].
\]

Using the expansion
\[
\frac{e^{\beta z}-1}{\beta}=z+O(\beta),
\]
we see that as $\beta\to 0$ the dynamics converge to the replicator
equation. For large $\beta$, switching becomes highly sensitive to
large payoff gaps, generating steep directional flows.

All the above dynamics share two fundamental properties:
(i) the growth rate of $x_i$ is proportional to its current mass,
and (ii) total mass is conserved along trajectories.
Consequently, they admit a common multiplicative representation
on the simplex. This motivates the following abstract formulation.
\begin{definition}
\label{def:G_dynamics}
Let $\Delta_m=\{x\in\mathbb R_+^m:\sum_{i=1}^m x_i=1\}$.
A population dynamic on $\Delta_m$ is called a
\emph{$G$-dynamic} \cite{vincent2005evolutionary,
zhuTembineBasar2011evolutionaryMAC} if it admits the multiplicative form
\[
\dot x_i = x_i G_i(x),
\qquad i=1,\dots,m,
\]
where $G:\Delta_m\to\mathbb R^m$ is a measurable vector field satisfying
the balance condition
\[
\sum_{i=1}^m x_i G_i(x)=0
\qquad \text{for all } x\in\Delta_m.
\]
\end{definition}

\begin{proposition}
\label{prop:G_structure}
Every $G$-dynamic leaves the simplex $\Delta_m$ forward invariant.
Moreover, if $x^\ast\in\Delta_m$ is a rest point, then
\[
x_i^\ast>0
\quad\Longrightarrow\quad
G_i(x^\ast)=0.
\]
In the replicator case
\[
G_i(x)=\kappa\bigl(\rho_i(x)-\bar\rho(x)\bigr),
\]
rest points satisfy payoff equalization across all coalitions with
positive mass:
\[
x_i^\ast>0,\;x_j^\ast>0
\quad\Longrightarrow\quad
\rho_i(x^\ast)=\rho_j(x^\ast).
\]
\end{proposition}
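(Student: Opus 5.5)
The statement has three claims: forward invariance of $\Delta_m$, the rest-point condition, and payoff equalization in the replicator case. I would prove them in that order, mirroring Proposition~\ref{prop:mass_conservation}.

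\emph{Forward invariance.} The total mass is conserved, since
\[
\frac{d}{dt}\sum_i x_i=\sum_i x_iG_i(x)=0
\]
by the balance condition in Definition~\ref{def:G_dynamics}. Hence $\sum_i x_i(t)=1$ whenever $x(0)\in\Delta_m$.

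For nonnegativity I would use the multiplicative structure. Along a solution, each coordinate satisfies the scalar linear equation $\dot x_i=a_i(t)x_i$ with $a_i(t)=G_i(x(t))$. This gives the explicit representation
\[
x_i(t)=x_i(0)\exp\!\Bigl(\int_0^t G_i(x(s))\,ds\Bigr).
\]
So $x_i(t)\ge 0$, and the sign pattern of the initial condition is preserved: faces of the simplex are invariant.

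The main obstacle is regularity. Definition~\ref{def:G_dynamics} only requires $G$ to be measurable, so I would state explicitly that we consider (Carath\'eodory) solutions along which $t\mapsto G_i(x(t))$ is locally integrable. For instance, $G$ bounded on $\Delta_m$ suffices, as does $G$ locally Lipschitz, as in the switching rules of Proposition~\ref{prop:general_phi_rule}. Under this standing assumption the exponential formula is valid, because $\frac{d}{dt}\bigl[x_i e^{-\int_0^t G_i}\bigr]=0$ almost everywhere and the function is absolutely continuous.

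\emph{Rest points.} If $x^\ast$ is a rest point, then $0=x_i^\ast G_i(x^\ast)$ for every $i$. Dividing by $x_i^\ast$ whenever $x_i^\ast>0$ gives $G_i(x^\ast)=0$.

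\emph{Replicator case.} Take $G_i=\kappa(\rho_i-\bar\rho)$. First I would check the balance condition:
\[
\sum_i x_i\kappa(\rho_i-\bar\rho)=\kappa\Bigl(\bar\rho-\bar\rho\sum_i x_i\Bigr)=0 \quad\text{on }\Delta_m,
\]
so the replicator equation is indeed a $G$-dynamic. Applying the rest-point claim with $\kappa>0$ gives $\rho_i(x^\ast)=\bar\rho(x^\ast)$ for every $i$ with $x_i^\ast>0$. Therefore any two coalitions with positive mass have the common payoff $\bar\rho(x^\ast)$, which is the stated payoff equalization.
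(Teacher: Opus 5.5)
Your proposal is correct and follows the paper's proof step for step: conservation of $\sum_i x_i$ from the balance condition, $x_i^\ast G_i(x^\ast)=0$ at a rest point, and $\rho_i(x^\ast)=\bar\rho(x^\ast)$ on the support in the replicator case. The differences are refinements. You justify nonnegativity through the integrating-factor formula $x_i(t)=x_i(0)\exp\left(\int_0^t G_i(x(s))\,ds\right)$ under an explicit integrability assumption, which makes rigorous the paper's one-line remark that $x_i=0$ forces $\dot x_i=0$ and also yields invariance of faces; and you verify the balance condition for the replicator field, which the paper takes for granted.
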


\begin{proof}
Summing the differential equation yields
\[
\frac{d}{dt}\sum_{i=1}^m x_i(t)
=
\sum_{i=1}^m x_i(t)G_i(x(t))
=
0,
\]
so $\sum_i x_i(t)$ is constant and equals one if it does initially.
Nonnegativity follows from the multiplicative structure,
since $x_i=0$ implies $\dot x_i=0$.

If $x^\ast$ is a rest point, then $\dot x_i=0$ for all $i$.
Hence $x_i^\ast G_i(x^\ast)=0$ for all $i$,
which implies $G_i(x^\ast)=0$ whenever $x_i^\ast>0$.
For the replicator field, $G_i(x^\ast)=0$ on the active support implies
$\rho_i(x^\ast)=\bar\rho(x^\ast)$ for every active $i$, which gives
payoff equalization across active coalitions.
\end{proof}

\begin{remark}
Replicator dynamics correspond to the special case
\[
G_i(x)=\kappa\bigl(\rho_i(x)-\bar\rho(x)\bigr),
\]
but many other incentive-compatible switching rules
generate $G$-dynamics.
Thus replicator dynamics are one representative of a broader
class of mean-field exit-and-join population systems.
\end{remark}

\section{Exit-and-Join Equilibrium 
and Stationarity}

\subsection{Equilibrium and Stationarity}

We now formalize the equilibrium concept induced by the
exit-and-join dynamics and establish its equivalence
with stationarity of the mean-field system.

\begin{definition}
\label{def:continuum_equilibrium}
A population state $x^\star\in\Delta_m$ is an
\emph{exit-and-join equilibrium} if no coalition
with positive mass admits a strictly profitable deviation.
Equivalently,
\[
\rho_j(x^\star)\le \rho_i(x^\star)
\qquad
\text{for all } i \text{ with } x_i^\star>0
\text{ and all } j.
\]
\end{definition}

Thus, at equilibrium, every populated coalition
offers the same maximal payoff level,
and no coalition, populated or empty, offers a strictly higher payoff.

\begin{theorem}
\label{thm:equilibrium_characterization}
A state $x^\star\in\Delta_m$ is an exit-and-join equilibrium
if and only if there exists $\rho^\star\in\mathbb R$ such that
\[
\rho_i(x^\star)=\rho^\star
\quad \text{for all } i \text{ with } x_i^\star>0,
\]
and
\[
\rho_j(x^\star)\le \rho^\star
\quad \text{for all } j \text{ with } x_j^\star=0.
\]
\end{theorem}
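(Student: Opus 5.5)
The argument is a direct unpacking of Definition~\ref{def:continuum_equilibrium}; no dynamics or earlier propositions are needed. Since $x^\star\in\Delta_m$, the support $P:=\{i: x_i^\star>0\}$ is nonempty, because $\sum_i x_i^\star=1$. We prove the two implications separately.

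($\Rightarrow$) Assume $x^\star$ is an exit-and-join equilibrium. Take any two active coalitions $i,k\in P$. Applying the defining inequality with active coalition $i$ and target $j=k$ gives $\rho_k(x^\star)\le\rho_i(x^\star)$. Reversing roles gives the opposite inequality, so $\rho_i(x^\star)=\rho_k(x^\star)$. Hence $\rho$ is constant on $P$. Define $\rho^\star$ to be this common value, which is well defined because $P\neq\varnothing$. For an empty coalition $j$ with $x_j^\star=0$, fix any $i\in P$. The equilibrium inequality gives $\rho_j(x^\star)\le\rho_i(x^\star)=\rho^\star$, which is the second condition.

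($\Leftarrow$) Assume such a $\rho^\star$ exists. Let $i\in P$ and let $j$ be arbitrary. Then $\rho_i(x^\star)=\rho^\star$. If $j\in P$, then $\rho_j(x^\star)=\rho^\star$. If $x_j^\star=0$, then $\rho_j(x^\star)\le\rho^\star$ by hypothesis. In either case $\rho_j(x^\star)\le\rho_i(x^\star)$, so $x^\star$ is an exit-and-join equilibrium.

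The only point needing care is the nonemptiness of $P$. It guarantees that $\rho^\star$ is well defined in the forward direction and that some active $i$ exists to compare against empty coalitions. Everything else is elementary bookkeeping of the inequalities. It may also be worth noting that $\rho^\star=\max_j\rho_j(x^\star)$, which matches the interpretation given after Definition~\ref{def:continuum_equilibrium}.
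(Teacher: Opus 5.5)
Your proof is correct and takes the same route as the paper. You get payoff equality on the support by applying the defining inequality in both directions, bound the empty coalitions by comparison with an active one, and check the converse case by case on the target $j$. Your explicit remark that the support is nonempty (since $\sum_i x_i^\star=1$) makes precise a point the paper leaves implicit when it defines $\rho^\star$.
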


\begin{proof}
Suppose $x^\star$ is an equilibrium.
For any $i,k$ with $x_i^\star>0$ and $x_k^\star>0$,
we have both
$\rho_k(x^\star)\le\rho_i(x^\star)$
and
$\rho_i(x^\star)\le\rho_k(x^\star)$,
hence equality.
Denote the common value by $\rho^\star$.

If some $j$ with $x_j^\star=0$ satisfied
$\rho_j(x^\star)>\rho^\star$,
agents in any populated coalition
would have a profitable deviation,
contradicting equilibrium.
Thus $\rho_j(x^\star)\le\rho^\star$.

Conversely, if these conditions hold,
no agent in a populated coalition
can strictly improve by moving,
hence $x^\star$ is an equilibrium.
\end{proof}

The equilibrium concept admits an exact dynamical characterization.

\begin{theorem}
\label{thm:equilibrium_stationary}
Assume the switching intensities are incentive compatible and strictly
payoff-responsive in the sense of
Assumption~\ref{ass:incentive_compatibility}.
For $x^\star\in\Delta_m$, the following are equivalent:
\begin{enumerate}
\item[(1)] $x^\star$ is an exit-and-join equilibrium;
\item[(2)] no admissible positive–measure deviation exists at $x^\star$;
\item[(3)] $x^\star$ is a stationary point of the exit-and-join dynamics,
\[
\dot x(t)=0 \quad \text{whenever } x(t)=x^\star;
\]
\item[(4)] all active incentive-compatible transition fluxes vanish at $x^\star$,
\[
x_i^\star\lambda_{ij}(x^\star)=0 \quad \text{for all } i,j.
\]
\end{enumerate}
\end{theorem}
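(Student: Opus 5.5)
We argue through the cycle (1)$\Leftrightarrow$(2), (1)$\Rightarrow$(4)$\Rightarrow$(3)$\Rightarrow$(4)$\Rightarrow$(1). Only the step (3)$\Rightarrow$(4) needs real work; the others follow directly from the definitions and Assumption~\ref{ass:incentive_compatibility}.

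\emph{(1)$\Leftrightarrow$(2).} Read an admissible positive-measure deviation at $x^\star$ as a pair $(i,j)$ with $x_i^\star>0$ and $\rho_j(x^\star)>\rho_i(x^\star)$ (Definition~\ref{def:admissible_switch}). By Lemma~\ref{lem:positive_measure_switch}, almost every agent in any positive-measure subset of $T_i$ then strictly prefers $T_j$. Conversely, a coalition with $x_i^\star=0$ contains no set of positive measure, so it cannot generate such a deviation. Hence (2) is exactly the condition of Definition~\ref{def:continuum_equilibrium}.

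\emph{(1)$\Rightarrow$(4).} If $x_i^\star=0$, the flux $x_i^\star\lambda_{ij}(x^\star)$ vanishes trivially. If $x_i^\star>0$, equilibrium gives $\rho_j(x^\star)\le\rho_i(x^\star)$ for all $j$, so incentive compatibility forces $\lambda_{ij}(x^\star)=0$.

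\emph{(4)$\Rightarrow$(3).} Every term of $b_i(x^\star)$ in \eqref{eq:mass_balance} is one of these fluxes, so $b(x^\star)=0$.

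\emph{(4)$\Rightarrow$(1).} Suppose instead that $x_i^\star>0$ and $\rho_j(x^\star)>\rho_i(x^\star)$ for some $j$. Strict payoff-responsiveness gives $\lambda_{ij}(x^\star)>0$, so $x_i^\star\lambda_{ij}(x^\star)>0$, contradicting (4).

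\emph{(3)$\Rightarrow$(4): main step.} Stationarity only says that net flows balance, so we must rule out cyclic circulation among coalitions with nonzero individual fluxes. We use the fact that, by incentive compatibility, a positive flux $i\to j$ is possible only if $\rho_j(x^\star)>\rho_i(x^\star)$; flows therefore strictly increase payoff, and payoff acts as a potential.
\begin{itemize}
\item Let $\theta_1>\theta_2>\dots>\theta_L$ be the distinct values of $\rho_k(x^\star)$.
\item For each $\ell$, set $U_\ell=\{k:\rho_k(x^\star)\ge\theta_\ell\}$.
\item No flux leaves $U_\ell$: its destination would have strictly higher payoff and hence lie in $U_\ell$ again.
\end{itemize}
Summing the stationarity equations over $k\in U_\ell$, the internal fluxes cancel (as in Proposition~\ref{prop:mass_conservation}). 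This leaves
\[
0=\sum_{k\in U_\ell}b_k(x^\star)=\sum_{i\notin U_\ell,\;j\in U_\ell}x_i^\star\lambda_{ij}(x^\star),
\]
a sum of nonnegative terms, so each of these terms is zero.

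Now take any pair $(i,j)$ with $x_i^\star\lambda_{ij}(x^\star)>0$. Then $\rho_j(x^\star)>\rho_i(x^\star)$, and if $\ell$ is the level with $\theta_\ell=\rho_j(x^\star)$, we have $j\in U_\ell$ and $i\notin U_\ell$. The display above then forces this flux to vanish, a contradiction. Hence all fluxes are zero, which is (4).
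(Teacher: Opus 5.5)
Your proof is correct, and it takes a somewhat different route from the paper for the hard direction. The easy steps match: (1)$\Rightarrow$(4)$\Rightarrow$(3), and identifying (2) with the equilibrium condition.

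The paper proves stationarity $\Rightarrow$ equilibrium by contrapositive in a single step. If $x^\star$ is not an equilibrium, it takes the active coalition $k$ with the lowest payoff among active coalitions. Strict responsiveness gives $k$ a positive outflow toward the better coalition. Incentive compatibility kills every inflow into $k$: empty sources carry no mass, and active sources have weakly higher payoff. Hence $\dot x_k<0$.

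You instead prove (3)$\Rightarrow$(4) with upper payoff level sets, using only incentive compatibility, and invoke strict responsiveness only in (4)$\Rightarrow$(1).

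\begin{itemize}
\item The paper's argument is shorter: one coalition and one strict inequality suffice.
\item Yours yields a sharper structural fact: under incentive compatibility alone, stationarity is equivalent to the vanishing of all fluxes. Strict responsiveness is needed only to turn zero fluxes into the equilibrium condition. This is exactly the mechanism behind the paper's remark that pairwise-imitation dynamics can stall at boundary states with an unused higher-payoff coalition.
\end{itemize}

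Your level-set bookkeeping can also be compressed into one identity,
\[
\sum_k \rho_k(x)\,b_k(x)=\sum_{i\neq j}x_i\lambda_{ij}(x)\bigl[\rho_j(x)-\rho_i(x)\bigr].
\]
This is the computation in Theorem~\ref{thm:lyapunov_monotonicity}, which does not actually use the mass-based structure. At a stationary point the left side vanishes, and incentive compatibility makes every summand on the right nonnegative. So every summand is zero, and every flux vanishes at once.
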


\begin{proof}
Suppose first that $x^\star$ is an equilibrium.
Then for every $i$ with $x_i^\star>0$
and every $j$,
$\rho_j(x^\star)\le\rho_i(x^\star)$.
By incentive compatibility,
$\lambda_{ij}(x^\star)=0$.
If $x_i^\star=0$, then
$x_i^\star \lambda_{ij}(x^\star)=0$ anyway.

Next, if all active transition fluxes vanish,
all inflow and outflow terms in the mass dynamics vanish,
hence $\dot x(t)=0$.

To prove the converse implication from stationarity, argue by
contrapositive. Suppose $x^\star$ is not an
exit-and-join equilibrium. Then there exist an active coalition
$i$ with $x_i^\star>0$ and a coalition $j$ such that
$\rho_j(x^\star)>\rho_i(x^\star)$. Let $k$ be an active coalition
with minimal payoff among active coalitions:
\[
\rho_k(x^\star)
=
\min_{\ell:\,x_\ell^\star>0}\rho_\ell(x^\star).
\]
Then $\rho_j(x^\star)>\rho_k(x^\star)$. By strict
payoff-responsiveness, $\lambda_{kj}(x^\star)>0$, so there is
positive active outflow from coalition $k$ to coalition $j$.
No active coalition has payoff strictly below $\rho_k(x^\star)$,
and incentive compatibility rules out inflows to $k$ from coalitions
with weakly higher payoff. Therefore every inflow term into $k$ is
zero, while at least one outflow term is strictly positive. Hence
\[
\dot x_k
=
\sum_{\ell\neq k}\lambda_{\ell k}(x^\star)x_\ell^\star
-
\sum_{\ell\neq k}\lambda_{k\ell}(x^\star)x_k^\star
<0,
\]
so $x^\star$ is not stationary. Thus stationarity implies that no
admissible positive-measure deviation exists.

The implication from the absence of admissible positive-measure deviations
to equilibrium is immediate from the definition
of equilibrium.
\end{proof}

\begin{remark}
Strict payoff-responsiveness is essential for the converse direction.
For pairwise imitation specifications such as
$\lambda_{ij}(x)=\kappa x_j[\rho_j(x)-\rho_i(x)]_+$, an empty
destination coalition has $x_j=0$ and therefore cannot be entered.
Such dynamics may have stationary boundary states with an unused
higher-payoff coalition. On the relative interior of a fixed support,
or after adding an exploration term that permits entry into empty
coalitions, the strict-responsiveness condition is restored.
\end{remark}

\begin{remark}
An exit-and-join equilibrium equalizes payoff levels
across all coalitions with positive mass.
Coalition sizes need not be equal,
and empty coalitions may coexist with populated ones.
\end{remark}

\subsection{Mass–Based Cooperative Games and Lyapunov Structure}
\label{subsec:mass_based_lyapunov}

We now specialize the exit-and-join framework to a class of
cooperative games in which coalition value depends only on
coalition mass. In this setting, the induced population dynamics
admit a natural global Lyapunov function derived directly from
the primitive game.

\begin{assumption}
\label{ass:mass_based_game}
There exists a function $F\in C^1([0,1])$ with $F(0)=0$
such that the cooperative game satisfies
\[
v(S)=F(\mu(S)),
\qquad S\in\mathcal I.
\]
\end{assumption}

Under this assumption, coalition value depends only on its measure
and not on its composition.

Let $T=\{T_1,\dots,T_m\}$ be a coalition structure and
define the population state $x\in\Delta_m$ by
$x_i=\mu(T_i)$.
Since the restricted game $v_{|T_i}$ is again mass–based,
its total value equals $F(x_i)$.
Applying the continuum Aumann–Shapley formula to $v_{|T_i}$
yields the payoff density assigned to almost every agent in $T_i$:
\[
\rho_i(x)
=
\frac{1}{x_i}\int_0^{x_i} F'(s)\,ds,
\qquad x_i>0.
\]

By continuity of $F'$, the limit
\[
\rho_i(0):=\lim_{x_i\downarrow 0}\rho_i(x_i)=F'(0)
\]
exists, so $\rho_i$ extends continuously to $[0,1]$.

Thus, under the mass–based structure,
the payoff vector $\rho(x)$ depends only on coalition masses.

Define $V:\Delta_m\to\mathbb R$ by
\begin{equation}
\label{eq:V_definition}
V(x)
=
\sum_{i=1}^m
\int_0^{x_i}
\rho_i(s)\,ds.
\end{equation}

Since each $\rho_i(\cdot)$ is continuous on $[0,1]$,
$V$ is continuously differentiable on $\Delta_m$.
Moreover, by the fundamental theorem of calculus,
\[
\frac{\partial V(x)}{\partial x_i}
=
\rho_i(x),
\qquad i=1,\dots,m.
\]

Hence
\[
\nabla V(x)=\rho(x).
\]

\medskip

\noindent
 The function $V$ aggregates the marginal payoff levels across
coalitions and therefore represents total cooperative surplus
consistent with the Aumann–Dr\`eze allocation.
The exit-and-join dynamics will be shown to ascend $V$,
aligning individual payoff improvements with global surplus
maximization.

\subsection{Lyapunov Structure and Global Convergence}
\label{subsec:lyapunov_global}

We now establish a complete Lyapunov analysis of the exit-and-join
dynamics in the class of mass-based cooperative games and derive
global convergence results.

Throughout this subsection, Assumption~\ref{ass:mass_based_game}
remains in force. The continuum Aumann-Dr\`eze payoff level of
a coalition of mass $x_i>0$ is
\[
\rho_i(x)
=
\frac{1}{x_i}\int_0^{x_i} F'(s)\,ds,
\]
and extends continuously to $x_i=0$ with
$\rho_i(0)=F'(0)$.

Define
\begin{equation}
\label{eq:lyapunov_def}
V(x)
=
\sum_{i=1}^m
\int_0^{x_i} \rho_i(s)\,ds,
\qquad x\in\Delta_m.
\end{equation}

\begin{lemma}
\label{lem:V_regular}
The function $V$ belongs to $C^1(\Delta_m)$ and satisfies
\[
\nabla V(x)=\rho(x).
\]
\end{lemma}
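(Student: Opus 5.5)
The plan is to reduce the statement to one-dimensional calculus, coalition by coalition. Under Assumption~\ref{ass:mass_based_game}, each payoff $\rho_i$ depends on $x$ only through $x_i$. We therefore write $\rho_i(x)=g(x_i)$ with the single scalar function
\[
g(s)=\frac{1}{s}\int_0^s F'(r)\,dr=\frac{F(s)}{s}\quad (s>0),\qquad g(0)=F'(0),
\]
where $F(0)=0$ is used in the second equality. Then $V(x)=\sum_{i=1}^m G(x_i)$ with $G(y)=\int_0^y g(s)\,ds$. In this form $V$ is a separable sum, and regularity of $V$ follows from regularity of the single function $G$ on $[0,1]$.

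The first and main step is to verify that $g$ is continuous on the closed interval $[0,1]$, including at the endpoint $s=0$. For $s>0$, $g(s)=F(s)/s$ is a quotient of continuous functions with nonvanishing denominator, so it is continuous there. At $s=0$ I would invoke the mean value theorem: $F(s)/s=F'(\xi_s)$ for some $\xi_s\in(0,s)$, and continuity of $F'$ (since $F\in C^1([0,1])$) gives $g(s)\to F'(0)=g(0)$ as $s\downarrow 0$. Alternatively, $\bigl|\tfrac1s\int_0^s F'(r)\,dr-F'(0)\bigr|\le\sup_{[0,s]}|F'-F'(0)|\to0$. This boundary point is the only genuinely delicate part, since boundary faces of $\Delta_m$ are exactly where $x_i=0$.

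Next, the fundamental theorem of calculus applied to the continuous integrand $g$ shows that $G\in C^1([0,1])$ with $G'=g$, with one-sided derivatives at the endpoints. Hence $V$, as a finite sum of $C^1$ functions of separate coordinates, is $C^1$ on the cube $[0,1]^m\supset\Delta_m$. Its partial derivatives are $\partial V/\partial x_i=G'(x_i)=g(x_i)=\rho_i(x)$, which gives $\nabla V=\rho$.

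Finally, I would address what $C^1(\Delta_m)$ means, since the simplex has empty interior in $\mathbb R^m$. I would interpret the claim as: $V$ is the restriction of a $C^1$ function defined on a neighborhood of $\Delta_m$ (relative to $\mathbb R_+^m$). If full openness is wanted, $g$ can be extended continuously to $[-1,2]$, for instance constantly beyond the endpoints. With this interpretation, $\nabla V=\rho$ is the ambient gradient, which is the object used later when differentiating $V$ along trajectories via $\frac{d}{dt}V(x(t))=\rho(x)\cdot\dot x$.
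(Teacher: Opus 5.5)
Your proposal is correct and follows the paper's own argument. The paper also first observes that $\rho_i(x_i)=\frac{1}{x_i}\int_0^{x_i}F'(s)\,ds$ extends continuously to $[0,1]$ with $\rho_i(0)=F'(0)$, and then applies the fundamental theorem of calculus coordinatewise to get $\partial V/\partial x_i=\rho_i$. Your mean-value-theorem check at $s=0$ and your reading of $C^1(\Delta_m)$ as the restriction of a $C^1$ function on a neighborhood of $\Delta_m$ spell out details the paper leaves implicit.
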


\begin{proof}
Since $F\in C^1([0,1])$, the function
\[
\rho_i(x_i)=\frac{1}{x_i}\int_0^{x_i}F'(s)\,ds
\]
is continuous on $[0,1]$.
By the fundamental theorem of calculus,
\[
\frac{\partial V(x)}{\partial x_i}
=
\rho_i(x_i),
\]
which establishes differentiability and yields
$\nabla V(x)=\rho(x)$.
\end{proof}

\begin{theorem}
\label{thm:lyapunov_monotonicity}
Assume Assumption~\ref{ass:mass_based_game} and suppose the switching
intensities are incentive compatible. Then the function $V$ defined in
\eqref{eq:lyapunov_def}
is a Lyapunov function for the exit-and-join dynamics.
For every absolutely continuous solution $x(t)$,
\[
\frac{d}{dt}V(x(t))\ge 0.
\]
If the switching intensities are also strictly payoff-responsive, then
equality holds at a state $x(t)$ if and only if $x(t)$ is an
exit-and-join equilibrium.
\end{theorem}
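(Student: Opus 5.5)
The plan is a direct chain-rule computation followed by a termwise sign argument, with the equality case reduced to Theorem~\ref{thm:equilibrium_stationary}. By Lemma~\ref{lem:V_regular}, $V\in C^1(\Delta_m)$ and $\nabla V=\rho$. Since each $\rho_i$ is continuous on $[0,1]$, $V$ extends to a $C^1$ function on a neighborhood of $\Delta_m$ (for instance by extending each $\rho_i$ continuously to a slightly larger interval). Hence for any absolutely continuous trajectory $x(t)$ in $\Delta_m$, the chain rule gives, for a.e.\ $t$,
\[
\frac{d}{dt}V(x(t))=\sum_{i=1}^m \rho_i(x(t))\,\dot x_i(t).
\]
When $b$ is Lipschitz the solution is $C^1$ and the identity holds for all $t$. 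This regularity bookkeeping is the only mildly delicate point, and I expect it to be routine.

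Next I substitute the mass-balance field \eqref{eq:mass_balance} and reindex the inflow double sum, exactly as in the proof of Proposition~\ref{prop:mass_conservation}. Each flux $x_i\lambda_{ij}(x)$ appears once with coefficient $\rho_j$ (inflow to $j$) and once with coefficient $-\rho_i$ (outflow from $i$). This gives
\[
\frac{d}{dt}V(x)=\sum_{i\neq j} x_i\,\lambda_{ij}(x)\,\bigl(\rho_j(x)-\rho_i(x)\bigr).
\]
Incentive compatibility (Assumption~\ref{ass:incentive_compatibility}) says $\lambda_{ij}(x)>0$ only when $\rho_j(x)>\rho_i(x)$. So every summand is either zero or a product of three nonnegative factors, and $\frac{d}{dt}V\ge 0$.

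For the equality case I will assume strict payoff-responsiveness as well. The sum of nonnegative terms vanishes iff each term vanishes. A term with $x_i\lambda_{ij}>0$ forces $\rho_j-\rho_i>0$, so that term is strictly positive. Hence $\frac{d}{dt}V=0$ at $x$ iff $x_i\lambda_{ij}(x)=0$ for all $i\neq j$, which is condition (4) of Theorem~\ref{thm:equilibrium_stationary}. That theorem, which uses both incentive compatibility and strict responsiveness, shows (4) is equivalent to (1), i.e.\ $x$ is an exit-and-join equilibrium, which closes the argument.
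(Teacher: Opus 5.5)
Your proposal is correct and follows the paper's proof essentially step for step. Both arguments apply the chain rule with $\nabla V=\rho$, reindex the mass balance into $\sum_{i\neq j}x_i\lambda_{ij}(x)\bigl(\rho_j(x)-\rho_i(x)\bigr)$, sign each term by incentive compatibility, and reduce the equality case to the vanishing of all active fluxes. The one difference is the last step: you reach equilibrium by citing Theorem~\ref{thm:equilibrium_stationary}, whereas the paper reads it off directly (if $x_i>0$ then $\lambda_{ij}(x)=0$, so strict payoff-responsiveness forces $\rho_j(x)\le\rho_i(x)$), which is shorter but equivalent.
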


\begin{proof}
Let $x(t)$ satisfy the exit-and-join dynamics.
By the chain rule and Lemma~\ref{lem:V_regular},
\[
\frac{d}{dt}V(x(t))
=
\nabla V(x(t))^\top \dot x(t)
=
\sum_{i=1}^m \rho_i(x(t))\,\dot x_i(t).
\]

Substituting the mass dynamics
\[
\dot{x}_i
=
\sum_{j\neq i}\lambda_{ji}(x)\,x_j
-
\sum_{j\neq i}\lambda_{ij}(x)\,x_i
\]
and rearranging terms yields
\[
\frac{d}{dt}V(x)
=
\sum_{i\neq j}
\lambda_{ij}(x)\,x_i\,
\bigl[\rho_j(x)-\rho_i(x)\bigr].
\]

By incentive compatibility,
\[
\lambda_{ij}(x)>0
\;\Rightarrow\;
\rho_j(x)>\rho_i(x).
\]
Since $x_i\ge 0$,
each term in the sum is nonnegative,
hence $dV(x(t))/dt\ge 0$.

Moreover, under strict payoff-responsiveness,
\[
\frac{d}{dt}V(x(t))=0
\]
if and only if
\[
\lambda_{ij}(x(t))\,x_i(t)=0
\quad \text{for all } i,j.
\]
This condition is equivalent to
\[
\rho_j(x(t))\le \rho_i(x(t))
\quad
\text{for all } i \text{ with } x_i(t)>0,
\]
which is precisely the exit-and-join equilibrium condition.
\end{proof}

\subsection{Global Convergence}

\begin{assumption}
\label{ass:strict_concavity}
The function $F$ is strictly concave on $[0,1]$.
\end{assumption}

Under strict concavity of $F$, coalition marginal productivity
is strictly decreasing in coalition mass.

\begin{theorem}
\label{thm:global_convergence_complete}
Suppose Assumptions~\ref{ass:mass_based_game}
and~\ref{ass:strict_concavity} hold. Suppose also that the switching
intensities satisfy Assumption~\ref{ass:mf_regularity} and are incentive
compatible and strictly payoff-responsive.
Then the exit-and-join dynamics admit a unique equilibrium
$x^\star\in\Delta_m$, and for every initial condition
$x(0)\in\Delta_m$,
\[
\lim_{t\to\infty} x(t)=x^\star.
\]
\end{theorem}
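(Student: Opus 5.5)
The plan is to combine the Lyapunov monotonicity of Theorem~\ref{thm:lyapunov_monotonicity} with a strict concavity argument for $V$. First, I will show that strict concavity of $F$ makes each payoff map strictly decreasing in its own mass, which in turn makes $V$ strictly concave on $\Delta_m$.

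For $x_i>0$,
\[
\rho_i(x_i)=\frac{F(x_i)-F(0)}{x_i}=\int_0^1 F'(\lambda x_i)\,d\lambda .
\]
Since $F'$ is strictly decreasing, this average is strictly decreasing in $x_i$ on $[0,1]$, including the endpoint $\rho_i(0)=F'(0)$. Hence each function $s\mapsto\int_0^s\rho_i$ is strictly concave. It follows that $V$ in \eqref{eq:lyapunov_def} is a separable, strictly concave $C^1$ function on the compact convex set $\Delta_m$, with $\nabla V=\rho$ by Lemma~\ref{lem:V_regular}.

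\textbf{Existence and uniqueness of equilibrium.} I will characterize equilibria as the maximizers of $V$ over $\Delta_m$. By strict concavity, $V$ has exactly one maximizer $x^\star$. The KKT conditions for maximizing a concave $C^1$ function over the simplex read
\[
\rho_i(x^\star)=\rho^\star \ \text{ if } x_i^\star>0, \qquad \rho_j(x^\star)\le\rho^\star \ \text{ if } x_j^\star=0 .
\]
By Theorem~\ref{thm:equilibrium_characterization}, these are exactly the equilibrium conditions. Concavity makes the KKT conditions sufficient as well as necessary, so every equilibrium is a maximizer. Hence the equilibrium exists and is unique.

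\textbf{Convergence.} Under Assumption~\ref{ass:mf_regularity}, solutions exist, are unique, and remain in $\Delta_m$ by Proposition~\ref{prop:mass_conservation}. By Theorem~\ref{thm:lyapunov_monotonicity}, $\dot V\ge 0$, and $\dot V$ vanishes exactly at exit-and-join equilibria, that is, only at $x^\star$ by the previous step. The function $\dot V(x)=\sum_{i\ne j}\lambda_{ij}x_i(\rho_j-\rho_i)$ is continuous on the compact set $\Delta_m$ and invariant sets are compact, so LaSalle's invariance principle applies. It gives that the $\omega$-limit set of any trajectory is a nonempty invariant subset of $\{\dot V=0\}=\{x^\star\}$. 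Therefore $x(t)\to x^\star$.

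As a more elementary alternative, I can argue directly. Fix $\varepsilon>0$. Outside the ball $B_\varepsilon(x^\star)$, $\dot V$ has a positive minimum by compactness. Since $V$ is bounded, the trajectory must enter every sublevel region near the maximum. Strict concavity together with uniqueness of the maximizer then implies that $\{V\ge V(x^\star)-\delta\}\subset B_\varepsilon(x^\star)$ for small $\delta$, and monotonicity of $V$ keeps the trajectory in that set.

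\textbf{Main obstacle.} I expect the delicate step to be the boundary, specifically the claim that $\dot V=0$ only at $x^\star$. This relies on strict payoff-responsiveness allowing entry into empty coalitions. Under this hypothesis the claim holds by Theorem~\ref{thm:lyapunov_monotonicity}, but I would need to verify that the equivalence there is genuinely pointwise on all of $\Delta_m$, including states with empty coalitions whose payoff $F'(0)$ exceeds the active level. A second point to check is that strict monotonicity of $\rho_i$ holds at $0$, so that $V$ is strictly concave up to the boundary. This is needed for uniqueness when some $x_i^\star=0$.
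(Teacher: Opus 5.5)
Your proposal is correct and follows essentially the same route as the paper. Strict concavity of $F$ makes each $\rho_i$ strictly decreasing, so $V$ is strictly concave; its unique maximizer is the unique equilibrium via the first-order conditions; and LaSalle, with $\{\dot V=0\}=\{x^\star\}$ from Theorem~\ref{thm:lyapunov_monotonicity}, gives convergence. The boundary concerns you flag are already settled by the pointwise form of strict payoff-responsiveness. Your remark that concavity makes the first-order conditions sufficient, so that every equilibrium is a maximizer, spells out a step the paper leaves implicit.
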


\begin{proof}
Strict concavity of $F$ implies that $F'$ is strictly decreasing on $[0,1]$.
Since
\[
\rho_i(x_i)
=
\frac{1}{x_i}\int_0^{x_i} F'(s)\,ds,
\]
the function $x_i \mapsto \rho_i(x_i)$ is strictly decreasing on $(0,1]$.
Consequently, the mapping
\[
x_i \longmapsto \int_0^{x_i} \rho_i(s)\,ds
\]
is strictly concave on $[0,1]$.
Because $V$ is the sum of these coordinate functions and
$\Delta_m$ is convex, it follows that $V$ is strictly concave on $\Delta_m$.

Since $\Delta_m$ is compact and convex and $V$ is continuous,
$V$ attains a maximizer $x^\star \in \Delta_m$.
Strict concavity guarantees that this maximizer is unique.
The first-order optimality condition for maximizing the differentiable
concave function $V$ over $\Delta_m$ is
\[
\rho_j(x^\star)\le \rho_i(x^\star)
\qquad
\text{for all } i \text{ with } x_i^\star>0
\text{ and all } j,
\]
which is exactly the exit-and-join equilibrium condition. Hence
$x^\star$ is the unique equilibrium.

For any absolutely continuous solution $x(t)$ of the dynamics,
Theorem~\ref{thm:lyapunov_monotonicity} implies
\[
\frac{d}{dt}V(x(t)) \ge 0,
\]
with strict inequality whenever $x(t)$ is not an equilibrium.
Thus $V(x(t))$ is nondecreasing along trajectories and strictly
increasing outside $x^\star$.
Since $V$ is continuous on the compact set $\Delta_m$,
it is bounded above, and therefore
\[
\lim_{t\to\infty} V(x(t))
\]
exists.

The set $\{x\in\Delta_m : \dot V(x)=0\}$ coincides with the set of equilibria.
Because the equilibrium is unique, this invariant set reduces to
$\{x^\star\}$.
By LaSalle’s invariance principle, every trajectory converges to
$x^\star$, completing the proof.
\end{proof}

\begin{remark}
If $F$ is concave but not strictly concave,
then $V$ is concave but may admit multiple maximizers.
In this case, every trajectory converges to the compact
set of equilibria,
but convergence need not be to a unique point.
\end{remark}

\begin{remark}
The Lyapunov function $V$ represents aggregate cooperative surplus.
The exit-and-join dynamics implement a decentralized,
incentive-compatible ascent of total surplus.
Strict concavity ensures that surplus maximization
selects a unique coalition size distribution,
so individual rational adjustments lead globally
to the socially efficient allocation.
\end{remark}

\section{Population Game Formulation and Wardrop Equivalence}
\label{sec:wardrop_equivalence}

This section establishes that exit--and--join equilibria coincide with
Wardrop equilibria of an induced nonatomic population game.
The cooperative structure determines the payoff vector field $\rho(x)$,
while the equilibrium concept itself is entirely noncooperative. This
link places the model alongside Wardrop formulations of traffic
assignment and recent security and learning variants of congestion games
\cite{Wardrop1952,panZhu2022poisonedWardrop,panLiZhu2022resilienceTraffic}.

\subsection{The Induced Population Game}

Let $(I,\mathcal I,\mu)$ be a nonatomic probability space with $\mu(I)=1$.
Fix $m<\infty$ and define the simplex
\[
\Delta_m
=
\left\{
x \in \mathbb{R}^m_+ :
\sum_{i=1}^m x_i = 1
\right\}.
\]

From Section~\ref{sec:continuum_exit_join},
the continuum Aumann--Dr\`eze construction induces a payoff vector field
\[
\rho : \Delta_m \to \mathbb{R}^m,
\qquad
\rho(x) = (\rho_1(x),\dots,\rho_m(x)),
\]
where $\rho_i(x)$ denotes the payoff density assigned to coalition $i$
when the population state is $x$.

\begin{definition}
\label{def:population_game}
The induced nonatomic population game is the triple $G=(I,A,u)$ defined as follows.
The player set is $I$. The strategy set is $A=\{1,\dots,m\}$. 
A (pure) strategy profile is a measurable map $\sigma:I\to A$. 
The induced population state is given by 
$x_i(\sigma)=\mu(\{k\in I:\sigma(k)=i\})$ for $i=1,\dots,m$, 
so that $x(\sigma)=(x_1(\sigma),\dots,x_m(\sigma))\in\Delta_m$. 
If the population state is $x\in\Delta_m$, the payoff to a player choosing strategy $i\in A$ is $u_i(x)=\rho_i(x)$.
\end{definition}

Thus payoffs depend only on aggregate coalition masses.

\subsection{Wardrop Equilibrium}

We now introduce the appropriate equilibrium notion for nonatomic populations.

\begin{definition}
\label{def:wardrop}
A state $x^\star \in \Delta_m$ is a Wardrop equilibrium of $G$
if
\[
x_i^\star > 0
\;\Longrightarrow\;
u_i(x^\star)
=
\max_{j=1,\dots,m} u_j(x^\star).
\]
Equivalently,
\[
u_j(x^\star)
\le
u_i(x^\star)
\quad
\text{for all } i \text{ with } x_i^\star>0.
\]
\end{definition}

Since $u_i(x)=\rho_i(x)$, the Wardrop condition can be written as
\[
\rho_j(x^\star)
\le
\rho_i(x^\star)
\quad
\forall i \text{ with } x_i^\star>0.
\]

\subsection{Equivalence with Exit--and--Join Equilibrium}

Recall from Definition~\ref{def:continuum_equilibrium} that
$x^\star\in\Delta_m$ is an exit--and--join equilibrium if
\[
\rho_j(x^\star)
\le
\rho_i(x^\star)
\quad
\forall i \text{ with } x_i^\star>0.
\]

\begin{theorem}
\label{thm:wardrop_equivalence}
Let $\rho : \Delta_m \to \mathbb{R}^m$ be the payoff vector
induced by the continuum Aumann--Dr\`eze value.
Then for $x^\star\in\Delta_m$ the following are equivalent:
\begin{enumerate}
\item $x^\star$ is an exit--and--join equilibrium;
\item $x^\star$ is a Wardrop equilibrium of the population game $G$.
\end{enumerate}
\end{theorem}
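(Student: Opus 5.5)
The proof is a direct unfolding of definitions, since both equilibrium notions are phrased in terms of the same payoff vector field. The plan is to identify the payoff functions of the induced population game with the coalition payoffs. Then I show that each defining condition implies the other.

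First, by Definition~\ref{def:population_game}, a player choosing strategy $i$ at state $x$ receives $u_i(x)=\rho_i(x)$ for every $i\in A$ and every $x\in\Delta_m$. So the Wardrop condition of Definition~\ref{def:wardrop} may be rewritten with $u$ replaced by $\rho$. It reads: for every $i$ with $x_i^\star>0$, $\rho_i(x^\star)=\max_{j}\rho_j(x^\star)$. Note that the state space $\Delta_m$ is the same in both settings. Likewise, the support $\{i:x_i^\star>0\}$ refers to the same coalition masses, because $x_i(\sigma)=\mu(\{k:\sigma(k)=i\})$ matches $x_i=\mu(T_i)$ via $T_i=\sigma^{-1}(i)$.

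Second, I prove the equivalence of the two pointwise conditions.
\begin{itemize}
\item[(1)$\Rightarrow$(2).] If $x^\star$ is an exit-and-join equilibrium, then for each active $i$ we have $\rho_j(x^\star)\le\rho_i(x^\star)$ for all $j$. Since $i$ is itself among the $j$, this gives $\rho_i(x^\star)=\max_j\rho_j(x^\star)$, which is the Wardrop condition.
\item[(2)$\Rightarrow$(1).] If $\rho_i(x^\star)=\max_j\rho_j(x^\star)$ for each active $i$, then trivially $\rho_j(x^\star)\le\rho_i(x^\star)$ for all $j$, which is Definition~\ref{def:continuum_equilibrium}.
\end{itemize}
Alternatively, one can pass through Theorem~\ref{thm:equilibrium_characterization}: both conditions are equivalent to the existence of a common level $\rho^\star$ on the support that dominates the payoffs of empty coalitions.

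There is no real obstacle. The only point needing care is checking that the "max" formulation and the "$\le$ for all $j$" formulation of Definition~\ref{def:wardrop} agree. This holds because $j$ ranges over all of $A$, including $i$, and the maximum over the finite set $A$ is attained.

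It is also worth noting that the equivalence uses no structure of $\rho$ beyond its being a well-defined function on $\Delta_m$. That well-definedness is guaranteed by Proposition~\ref{prop:state_dependence} under Assumption~\ref{ass:coalition_symmetry}, so the theorem holds for any payoff field induced by the continuum Aumann–Dr\`eze value.
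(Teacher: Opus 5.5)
Your proposal is correct and follows the same route as the paper: substitute $u_i=\rho_i$ from Definition~\ref{def:population_game}, and the Wardrop condition becomes literally the exit-and-join condition. Your extra check that the ``max'' and ``$\le$ for all $j$'' formulations agree, since $j$ ranges over the finite set $A$ including $i$, makes explicit a step the paper leaves implicit.
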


\begin{proof}
By Definition~\ref{def:population_game},
the payoff to strategy $i$ is $u_i(x)=\rho_i(x)$.
Hence the Wardrop condition in Definition~\ref{def:wardrop}
is identical to the exit--and--join equilibrium condition.
\end{proof}

\subsection{Variational Inequality Characterization}

Wardrop equilibria admit an equivalent variational inequality (VI) formulation.

\begin{proposition}
\label{prop:VI}
A state $x^\star \in \Delta_m$ is a Wardrop equilibrium
if and only if
\[
\langle \rho(x^\star), x - x^\star \rangle
\le 0
\quad
\forall x \in \Delta_m.
\]
\end{proposition}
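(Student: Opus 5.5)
The plan is to prove the two directions separately, using only the Wardrop condition of Definition~\ref{def:wardrop} and the linear structure of the simplex. Write $\rho^\star_{\max}:=\max_j \rho_j(x^\star)$ throughout.

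\emph{Wardrop $\Rightarrow$ VI.} By Definition~\ref{def:wardrop}, every $i$ with $x_i^\star>0$ satisfies $\rho_i(x^\star)=\rho^\star_{\max}$. Hence
\[
\langle \rho(x^\star),x^\star\rangle=\sum_{i:\,x_i^\star>0}x_i^\star\rho^\star_{\max}=\rho^\star_{\max},
\]
using $\sum_i x_i^\star=1$. For any $x\in\Delta_m$, nonnegativity of $x$ and $\sum_j x_j=1$ give
\[
\langle\rho(x^\star),x\rangle=\sum_j x_j\rho_j(x^\star)\le\rho^\star_{\max}.
\]
Subtracting the two displays yields the variational inequality.

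\emph{VI $\Rightarrow$ Wardrop.} Argue by contrapositive. Suppose some $i$ with $x_i^\star>0$ has $\rho_i(x^\star)<\rho_j(x^\star)$ for some $j$. Choose the feasible perturbation that shifts the mass of coalition $i$ to coalition $j$:
\[
x=x^\star+x_i^\star(e_j-e_i).
\]
This point lies in $\Delta_m$: total mass is unchanged, and coordinate $i$ becomes $0$ while all other coordinates stay nonnegative. Then
\[
\langle\rho(x^\star),x-x^\star\rangle=x_i^\star\bigl(\rho_j(x^\star)-\rho_i(x^\star)\bigr)>0,
\]
which violates the VI.

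No step is a real obstacle. The only care needed is in the second direction, where one must choose a perturbation that stays inside $\Delta_m$; moving exactly the mass $x_i^\star$ handles this. One could instead use a vertex $x=e_j$, but this requires the averaging comparison with $\langle\rho(x^\star),x^\star\rangle$ and is less direct. No regularity of $\rho$ is used, so the proposition holds for an arbitrary payoff field. Combined with Theorem~\ref{thm:wardrop_equivalence}, it also characterizes exit-and-join equilibria.
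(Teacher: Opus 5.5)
Your proof is correct and follows the paper's argument. The forward direction is the same bound $\langle\rho(x^\star),x\rangle\le\max_j\rho_j(x^\star)=\langle\rho(x^\star),x^\star\rangle$, and the converse uses the same transfer direction $e_j-e_i$; your step size $x_i^\star$ is a concrete instance of the paper's ``sufficiently small $\varepsilon$'', since any $\varepsilon\in(0,x_i^\star]$ keeps the perturbed point in $\Delta_m$.
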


\begin{proof}
Suppose first that $x^\star$ is a Wardrop equilibrium.
Let $x\in\Delta_m$.
Partition the index set into
\[
P=\{i : x_i^\star>0\}, 
\qquad 
Z=\{i : x_i^\star=0\}.
\]
For $i\in P$, $\rho_i(x^\star)=\max_j\rho_j(x^\star)$.
For $i\in Z$, $\rho_i(x^\star)\le \max_j\rho_j(x^\star)$.
Hence
\[
\sum_{i=1}^m \rho_i(x^\star)x_i
\le
\max_j \rho_j(x^\star)\sum_{i=1}^m x_i
=
\max_j \rho_j(x^\star).
\]
Similarly,
\[
\sum_{i=1}^m \rho_i(x^\star)x_i^\star
=
\max_j \rho_j(x^\star).
\]
Subtracting yields
\[
\langle \rho(x^\star), x-x^\star\rangle \le 0.
\]

Conversely, suppose the variational inequality holds.
If there existed $i$ with $x_i^\star>0$
and $j$ with $\rho_j(x^\star)>\rho_i(x^\star)$,
consider the direction $x = x^\star + \varepsilon (e_j-e_i)$
for sufficiently small $\varepsilon>0$.
Then
\[
\langle \rho(x^\star), x-x^\star\rangle
=
\varepsilon(\rho_j(x^\star)-\rho_i(x^\star))>0,
\]
contradicting the VI condition.
Thus the Wardrop condition must hold.
\end{proof}

\subsection{Mass--Based Games and Potential Structure}

Under Assumption~\ref{ass:mass_based_game},
\[
v(S)=F(\mu(S)),
\]
the payoff field satisfies
\[
\nabla V(x)=\rho(x),
\]
where
\[
V(x)
=
\sum_{i=1}^m
\int_0^{x_i} \rho_i(s)\,ds.
\]

\begin{proposition}
In the mass--based case,
the induced population game is a potential game
with potential $V$.
Wardrop equilibria coincide with maximizers of $V$ on $\Delta_m$.
\end{proposition}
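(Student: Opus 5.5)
The proof has two parts: first the potential-game property, then the identification of Wardrop equilibria with maximizers of $V$.

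For the potential property, I will invoke Lemma~\ref{lem:V_regular}, which gives $V\in C^1(\Delta_m)$ with $\nabla V(x)=\rho(x)=u(x)$. This is exactly the definition of a full potential game in the sense of population games: payoffs are the gradient of a scalar function. Each $\partial V/\partial x_i=\rho_i(x_i)$ depends only on $x_i$, so the cross-partials vanish and the symmetry requirement $\partial u_i/\partial x_j=\partial u_j/\partial x_i$ holds trivially. If one prefers the relative-gradient notion restricted to the tangent space $\{z:\sum_i z_i=0\}$ of $\Delta_m$, it follows immediately from the full gradient identity.

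For the equilibrium characterization, I will use the variational inequality of Proposition~\ref{prop:VI}. A point $x^\star$ is a Wardrop equilibrium if and only if $\langle\nabla V(x^\star),x-x^\star\rangle\le 0$ for all $x\in\Delta_m$. This is precisely the first-order (KKT / normal-cone) necessary condition for $x^\star$ to maximize the $C^1$ function $V$ over the convex compact set $\Delta_m$. Hence every local maximizer, and in particular every global maximizer, is a Wardrop equilibrium. The necessity direction is routine: for a maximizer, $t\mapsto V(x^\star+t(x-x^\star))$ on $[0,1]$ has nonpositive right-derivative at $0$.

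The converse, that every Wardrop equilibrium maximizes $V$, requires concavity of $V$; this is the main obstacle. Without concavity, a Wardrop equilibrium is only a stationary (KKT) point of $V$. I will therefore state the converse under Assumption~\ref{ass:strict_concavity}, or under mere concavity of $F$. As in the proof of Theorem~\ref{thm:global_convergence_complete}, $\rho_i$ is then nonincreasing, being the running average of a nonincreasing $F'$. Each summand $\int_0^{x_i}\rho_i$ is therefore concave, and so $V$ is concave. For concave differentiable $V$, the gradient inequality gives $V(x)\le V(x^\star)+\langle\nabla V(x^\star),x-x^\star\rangle\le V(x^\star)$, so the VI condition is also sufficient for global maximality. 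Under strict concavity the maximizer is unique, which recovers the unique equilibrium of Theorem~\ref{thm:global_convergence_complete}. In the general (non-concave) mass-based case, I will phrase the result as: Wardrop equilibria coincide with the KKT points of $V$ on $\Delta_m$, and they include all maximizers. Existence of at least one equilibrium then follows from compactness of $\Delta_m$ and continuity of $V$.
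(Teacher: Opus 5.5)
Your proof is correct and follows the paper's route. The potential property comes from the identity $\nabla V=\rho$ in Lemma~\ref{lem:V_regular}, and the equilibrium part reads the variational inequality of Proposition~\ref{prop:VI} as the first-order condition for maximizing $V$ over $\Delta_m$.

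Where you depart from the paper is the converse, and you are right to. The paper's one-line proof treats ``satisfies the first-order condition'' as ``is a maximizer.'' That step is valid only when $V$ is concave, yet the proposition is stated under Assumption~\ref{ass:mass_based_game} alone.

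Your concavity restriction is therefore necessary, not merely cautious. Take the paper's own example $F(x)=x^2$. Then $\rho_i(x_i)=x_i$ and $V(x)=\tfrac12\sum_i x_i^2$. For $m\ge 2$ the barycenter $(1/m,\dots,1/m)$ equalizes all payoffs, so it is a Wardrop equilibrium. Yet it is the global minimizer of $V$ on $\Delta_m$. More generally, the uniform distribution on any support $S$ is a Wardrop equilibrium, since empty coalitions pay $0\le 1/|S|$. Only the vertices maximize $V$.

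Your formulation is the correct version of the statement:
\begin{itemize}
\item Wardrop equilibria are exactly the KKT points of $V$ on $\Delta_m$.
\item They contain all maximizers, and hence exist.
\item They coincide with the maximizers when $F$ is concave, with uniqueness under strict concavity.
\end{itemize}
Your argument for each part is complete.
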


\begin{proof}
Since $\nabla V(x)=\rho(x)$,
the variational inequality in Proposition~\ref{prop:VI}
is equivalent to the first-order optimality condition
for maximizing $V$ over $\Delta_m$.
\end{proof}

\section{Switching Costs and Endogenous Acceptance Rules}
\label{sec:acceptance_rules}

In this section we extend the exit--and--join framework by incorporating
(i) switching costs and (ii) endogenous acceptance rules.
Under this extension, coalition transitions become bilateral:
a deviation from coalition $i$ to coalition $j$ must be
individually profitable and acceptable to the incumbent members
of coalition $j$.

Throughout this section, assume that the payoff field
$\rho:\Delta_m\to\mathbb R^m$ is continuously differentiable.

\subsection{Switching Costs}

For each ordered pair $(i,j)$ with $i\neq j$,
let $c_{ij}:\Delta_m\to\mathbb R_+$ denote the switching cost
incurred by an agent moving from coalition $i$ to coalition $j$.

\begin{definition}
For $x\in\Delta_m$, define
\[
\Delta_{ij}(x)
=
\rho_j(x)-\rho_i(x)-c_{ij}(x).
\]
A deviation from $i$ to $j$ satisfies
\emph{individual rationality} at $x$ if
$\Delta_{ij}(x)>0$.
\end{definition}

\subsection{Derivation of the Acceptance Rule}

Because the continuum Aumann--Dr\`eze value assigns identical payoff
density to all members of a coalition,
every incumbent member of coalition $j$
receives payoff $\rho_j(x)$.

Consider a state $x\in\Delta_m$ and an infinitesimal mass
$\varepsilon>0$ moving from coalition $i$ to coalition $j$.
The perturbed state is
\[
x^\varepsilon
=
x + \varepsilon(e_j - e_i),
\]
where $e_k$ denotes the $k$-th canonical basis vector.

\begin{definition}
Coalition $j$ accepts the entrant at state $x$
if and only if
\[
\rho_j(x^\varepsilon)
\ge
\rho_j(x)
\quad
\text{for all sufficiently small } \varepsilon>0.
\]
\end{definition}

Since $\rho$ is continuously differentiable,
the directional derivative of $\rho_j$ in direction
$d=e_j-e_i$ is
\[
D\rho_j(x)[d]
=
\nabla\rho_j(x)\cdot(e_j-e_i)
=
\frac{\partial \rho_j}{\partial x_j}(x)
-
\frac{\partial \rho_j}{\partial x_i}(x).
\]
To first order,
\[
\rho_j(x^\varepsilon)
=
\rho_j(x)
+
\varepsilon
D\rho_j(x)[e_j-e_i]
+
o(\varepsilon).
\]

Hence coalition $j$ weakly benefits from entry
if and only if
\[
D\rho_j(x)[e_j-e_i]
\ge 0.
\]

\begin{proposition}
\label{prop:acceptance_rule}
Under continuous differentiability of $\rho$,
entry from coalition $i$ into coalition $j$ at state $x$
is admissible if and only if
\[
D\rho_j(x)[e_j-e_i]\ge 0.
\]
\end{proposition}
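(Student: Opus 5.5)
The plan is to reduce the acceptance definition to a sign condition on a one-sided derivative along the segment $\varepsilon\mapsto x^\varepsilon=x+\varepsilon(e_j-e_i)$, and then read off the two implications from a first-order Taylor expansion. First I will check feasibility. The direction $e_j-e_i$ preserves $\sum_k x_k=1$, and for $x_i>0$ the point $x^\varepsilon$ lies in $\Delta_m$ for all $\varepsilon\in[0,x_i]$. So the one-variable function $g(\varepsilon):=\rho_j(x^\varepsilon)-\rho_j(x)$ is well defined on a right neighbourhood of $0$. (If $x_i=0$ there is no mass to move, and the statement is vacuous or understood via a $C^1$ extension of $\rho$ to a neighbourhood of $\Delta_m$.) Since $\rho$ is $C^1$, $g$ is $C^1$ with $g(0)=0$. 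By the chain rule,
\[
g'(0)=\nabla\rho_j(x)\cdot(e_j-e_i)=D\rho_j(x)[e_j-e_i],
\]
which is exactly the expansion displayed just before the proposition.

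Next I will prove the two strict cases. If $D\rho_j(x)[e_j-e_i]>0$, then $g(\varepsilon)=\varepsilon\,g'(0)+o(\varepsilon)>0$ for all sufficiently small $\varepsilon>0$, so coalition $j$ accepts. If $D\rho_j(x)[e_j-e_i]<0$, the same expansion gives $g(\varepsilon)<0$ for all small $\varepsilon>0$, so $j$ rejects. Together these give ``accept $\Rightarrow$ derivative $\ge 0$'' (by contraposition of the second case) and ``derivative $>0\Rightarrow$ accept''. The first of these is the necessity half of the proposition.

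The main obstacle is the boundary case $D\rho_j(x)[e_j-e_i]=0$. There the first-order term vanishes, and the sign of $g$ for small $\varepsilon$ is decided by higher-order behaviour; for instance $g(\varepsilon)=-\varepsilon^2$ is compatible with $C^1$ regularity. Hence sufficiency cannot follow from differentiability alone. I would handle this by making explicit that ``admissible'' in the proposition is the first-order (linearized) acceptance criterion derived in the preceding paragraph, namely that incumbents do not lose to first order. With that reading, the equivalence is exactly the statement $g'(0)\ge 0$ and the proof closes.

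Alternatively, I would add a hypothesis that gives the exact definition in the degenerate case. One option is to require $g$ to be nondecreasing near $0$. Another, in the mass-based setting where $\rho_j$ depends only on $x_j$, is to use monotonicity of $\rho_j$: there $D\rho_j[e_j-e_i]=\rho_j'(x_j)$, and under Assumption~\ref{ass:strict_concavity} $\rho_j$ is strictly decreasing, so entry is never admissible. I would state that caveat in a remark rather than claim the exact equivalence for the equality case.
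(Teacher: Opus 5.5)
Your argument is the same first-order expansion of $\rho_j$ along $e_j-e_i$ that the paper relies on. The paper's only justification is the paragraph preceding the proposition, which passes from $\rho_j(x^\varepsilon)=\rho_j(x)+\varepsilon D\rho_j(x)[e_j-e_i]+o(\varepsilon)$ directly to ``if and only if''. You are right that this expansion yields only necessity and strict sufficiency.

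Your caveat about the equality case is correct, and it is a genuine gap in the paper's own argument, already within its mass-based model. Take $F(s)=s\bigl(1-(s-\tfrac12)^2\bigr)$, which gives $\rho_j(x_j)=1-(x_j-\tfrac12)^2$. At a state with $x_j=\tfrac12$ and $x_i>0$, one has $A_{ij}(x)=0\ge 0$, yet $\rho_j(x^\varepsilon)<\rho_j(x)$ for every $\varepsilon>0$. So the proposition holds only under the first-order reading of acceptance, which is how the paper subsequently uses it through the condition $A_{ij}(x)\ge 0$.
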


Thus acceptance depends on the first-order effect of the proposed
transfer on the payoff of incumbent members of the destination coalition.
In the important own-mass case, where $\rho_j(x)$ depends only on $x_j$,
this condition reduces to
\[
\frac{\partial \rho_j}{\partial x_j}(x)\ge 0.
\]

\subsection{Admissible Deviations and Switching Dynamics}

For compactness, define the acceptance margin
\[
A_{ij}(x):=D\rho_j(x)[e_j-e_i].
\]

\begin{definition}
A deviation from coalition $i$ to coalition $j$
is admissible at state $x\in\Delta_m$
if $x_i>0$ and both
\[
\Delta_{ij}(x)>0
\quad\text{and}\quad
A_{ij}(x)\ge 0.
\]
\end{definition}

Thus switching requires both individual profitability
and coalitional consent.

Let $\Phi:\mathbb R\to\mathbb R_+$ satisfy
$\Phi(z)=0$ for $z\le 0$ and $\Phi(z)>0$ for $z>0$.
Define the admissible switching intensity
\[
\lambda_{ij}(x)
=
\Phi\bigl(\Delta_{ij}(x)\bigr)
\mathbf 1_{\{A_{ij}(x)\ge 0\}}.
\]

The induced mass dynamics are
\[
\dot x_i
=
\sum_{j\neq i} \lambda_{ji}(x)x_j
-
\sum_{j\neq i} \lambda_{ij}(x)x_i,
\qquad i=1,\dots,m.
\]

\begin{proposition}
These dynamics constitute a $G$-dynamic
in the sense of Definition~\ref{def:G_dynamics}.
\end{proposition}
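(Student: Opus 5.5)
The plan is to exhibit the vector field $G$ explicitly from the mass-balance form and then check the two requirements of Definition~\ref{def:G_dynamics}: the multiplicative representation $\dot x_i=x_iG_i(x)$ and the balance condition $\sum_i x_iG_i(x)=0$.

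\textbf{Step 1: define $G$ and check the multiplicative form on populated coalitions.} Write the right-hand side as $b_i(x)=\sum_{j\neq i}\lambda_{ji}(x)x_j-\sum_{j\neq i}\lambda_{ij}(x)x_i$, with $\lambda_{ij}(x)=\Phi(\Delta_{ij}(x))\mathbf 1_{\{A_{ij}(x)\ge 0\}}$. For $x_i>0$ set
\[
G_i(x):=\frac{1}{x_i}\sum_{j\neq i}\lambda_{ji}(x)x_j-\sum_{j\neq i}\lambda_{ij}(x).
\]
With this choice $x_iG_i(x)=b_i(x)$ holds identically wherever $x_i>0$. Measurability of $G$ is immediate: $\Phi$ is continuous, $\Delta_{ij}$ and $A_{ij}$ are continuous because $\rho\in C^1$ and $c_{ij}$ is given, and the indicator of the closed set $\{A_{ij}\ge 0\}$ is Borel.

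\textbf{Step 2: balance condition.} I would reuse the reindexing computation from Proposition~\ref{prop:mass_conservation}. Each flux $\lambda_{ij}(x)x_i$ appears once as inflow to $j$ and once as outflow from $i$, so $\sum_i b_i(x)=0$. Hence $\sum_i x_iG_i(x)=\sum_i b_i(x)=0$ on the set where $G$ is defined by division.

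\textbf{Step 3: boundary coordinates, where I expect the main obstacle.} On a face $x_i=0$ the definition forces $\dot x_i=x_iG_i(x)=0$, whatever value is assigned to $G_i(x)$. However, the intensities above carry no destination-mass factor, unlike Proposition~\ref{prop:general_phi_rule}. The inflow $\sum_j\lambda_{ji}(x)x_j$ into an empty coalition $i$ can therefore be strictly positive whenever some populated $j$ has $\Delta_{ji}(x)>0$ and $A_{ji}(x)\ge 0$.

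My first attempt would be to set $G_i(x):=0$ when $x_i=0$. I would then argue that the acceptance and profitability conditions annihilate all inflow into empty coalitions, checking whether $\Delta_{ji}$ or $A_{ji}$ must fail at $x_i=0$. If no such structural reason can be extracted from the model, the identity $\dot x_i=x_iG_i$ holds only off these boundary faces. In that case I would record explicitly that the statement requires the standing interpretation that entry into an empty coalition carries zero flux.

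Even without that interpretation, the conclusions that Proposition~\ref{prop:G_structure} draws from the $G$-form remain available. Forward invariance of $\Delta_m$ still follows from Proposition~\ref{prop:mass_conservation}, since at $x_i=0$ the outflow vanishes and the inflow is nonnegative. The rest-point property $x_i^\star>0\Rightarrow G_i(x^\star)=0$ follows directly from Step 1.
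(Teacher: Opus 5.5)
Your Steps 1 and 2 reproduce the paper's proof, which sums the fluxes to get $\sum_i\dot x_i=0$ and then says ``the multiplicative structure follows by factoring $x_i$.'' Your Step 3 obstacle is the crux, and it settles the question. The factoring only applies to the outflow $\sum_{j\neq i}\lambda_{ij}(x)x_i$. The inflow $\sum_{j\neq i}\lambda_{ji}(x)x_j$ carries no factor $x_i$, and neither the profitability condition nor the acceptance condition forces it to vanish on $\{x_i=0\}$.

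Concretely, take $m=2$, $c_{ij}\equiv0$, $\rho_1(x)=1+x_1$, $\rho_2(x)\equiv0$, and the state $x=(0,1)$. Then $\Delta_{21}(x)=1>0$ and $A_{21}(x)=\partial_1\rho_1(x)-\partial_2\rho_1(x)=1\ge0$. Hence $\dot x_1=\lambda_{21}(x)x_2=\Phi(1)>0$, while $x_1G_1(x)=0$ for every real value $G_1(x)$. Moreover $\dot x_2=-\Phi(1)$ forces $G_2(x)=-\Phi(1)$, so $\sum_i x_iG_i(x)=-\Phi(1)\neq0$ and the balance condition fails too. The same happens inside the cooperative model: take the mass-based game $F(s)=s+s^2-2s^3$, so $\rho_i(x)=1+x_i-2x_i^2$, at the same state. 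It is also the situation the paper itself contemplates in Theorem~\ref{thm:concave_acceptance_mechanism}(3), where empty coalitions accept first entry.

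So the proposition is false as stated, and no choice of $G_i$ on the faces rescues it. The gap is in the paper's one-line proof, not in your reasoning. You should commit to this conclusion rather than leave ``check whether $\Delta_{ji}$ or $A_{ji}$ must fail'' open.

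What you can prove is the correct characterization: the dynamics are a $G$-dynamic if and only if $\sum_{j\neq i}\lambda_{ji}(x)x_j=0$ whenever $x_i=0$. Sufficient conditions are to include the destination-mass factor of Proposition~\ref{prop:general_phi_rule}, i.e.\ $\lambda_{ij}(x)=x_j\Phi(\Delta_{ij}(x))\mathbf 1_{\{A_{ij}(x)\ge0\}}$, or to require $x_j>0$ in the definition of an admissible deviation. Under that hypothesis your Step 1 finishes the proof with, e.g., $G_i:=0$ on $\{x_i=0\}$. Your Step 2 then extends to boundary points because $\sum_{i:x_i>0}b_i(x)=-\sum_{i:x_i=0}b_i(x)=0$; without the hypothesis, Step 2 fails at such points for the same reason.

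Two smaller points. First, measurability of $G$ needs $\Phi$ and $c_{ij}$ to be Borel; this section assumes neither, and in particular it does not assume $\Phi$ continuous. Second, your fallback appeal to Proposition~\ref{prop:mass_conservation} presupposes that solutions exist. That is not automatic here, because the indicator $\mathbf 1_{\{A_{ij}\ge0\}}$ makes the vector field discontinuous.
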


\begin{proof}
Summing over $i$ yields
\[
\sum_{i=1}^m \dot x_i
=
\sum_{i\neq j} \lambda_{ji}(x)x_j
-
\sum_{i\neq j} \lambda_{ij}(x)x_i
=
0,
\]
so total mass is preserved.
The multiplicative structure follows by factoring $x_i$.
\end{proof}

\subsection{Constrained Equilibrium under Switching Costs and Acceptance}
\label{subsec:constrained_equilibrium}

The introduction of switching costs and endogenous acceptance rules
alters the equilibrium concept in a fundamental way.
Without these constraints, equilibrium coincides with a Wardrop
equilibrium and admits a classical variational inequality
characterization over the simplex $\Delta_m$.
With bilateral admissibility, equilibrium becomes constrained
by state-dependent feasibility conditions,
and the set of admissible deviations depends endogenously on the state.

Recall that the net gain from switching from coalition $i$ to coalition $j$ is
\[
\Delta_{ij}(x)
=
\rho_j(x)-\rho_i(x)-c_{ij}(x).
\]

A deviation from $i$ to $j$ is admissible at $x\in\Delta_m$ if $x_i>0$,
\[
\Delta_{ij}(x)>0
\quad\text{and}\quad
A_{ij}(x)\ge 0.
\]

The first condition enforces individual rationality.
The second condition ensures that coalition $j$ weakly benefits
from the proposed transfer at the margin.

\begin{definition}
For $x\in\Delta_m$, define
\[
\mathcal D(x)
=
\{\, e_j-e_i :
x_i>0,\ 
\Delta_{ij}(x)>0
\text{ and }
A_{ij}(x)\ge 0
\,\}.
\]
\end{definition}

The admissible deviation cone is
\[
\mathcal T(x)
=
\mathrm{cone}\bigl(\mathcal D(x)\bigr).
\]

Unlike the unconstrained case,
$\mathcal T(x)$ is generally a strict subset
of the tangent cone of $\Delta_m$.

\begin{definition}
A state $x^\star\in\Delta_m$ is a constrained equilibrium
if
\[
\langle \rho(x^\star), d\rangle \le 0
\quad
\text{for all } d\in\mathcal D(x^\star).
\]
Equivalently,
\[
\langle \rho(x^\star), x-x^\star\rangle \le 0
\quad
\forall x-x^\star\in\mathcal T(x^\star).
\]
\end{definition}

Thus no admissible bilateral deviation yields
a first-order payoff improvement.

\begin{proposition}
\label{prop:constrained_characterization}
A state $x^\star\in\Delta_m$ is a constrained equilibrium
if and only if for every $i$ with $x_i^\star>0$ and every $j\neq i$,
either
\[
\rho_j(x^\star)-c_{ij}(x^\star)
\le
\rho_i(x^\star),
\]
or
\[
A_{ij}(x^\star)<0.
\]
\end{proposition}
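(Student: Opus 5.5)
The plan is to reduce the constrained-equilibrium condition to the statement that the admissible deviation set $\mathcal D(x^\star)$ is empty. The characterization in Proposition~\ref{prop:constrained_characterization} is then just the negation of the membership condition defining $\mathcal D(x^\star)$. The key observation is that $\mathcal D(x^\star)$ consists only of directions $d=e_j-e_i$, with $i\neq j$, for which
\[
\langle \rho(x^\star), d\rangle=\rho_j(x^\star)-\rho_i(x^\star).
\]
Membership in $\mathcal D(x^\star)$ already forces $\Delta_{ij}(x^\star)>0$, that is, $\rho_j(x^\star)-\rho_i(x^\star)>c_{ij}(x^\star)\ge 0$, using $c_{ij}:\Delta_m\to\mathbb R_+$. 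Hence every $d\in\mathcal D(x^\star)$ satisfies $\langle\rho(x^\star),d\rangle>0$.

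Step one: from this positivity, $x^\star$ is a constrained equilibrium, i.e.\ $\langle\rho(x^\star),d\rangle\le 0$ for all $d\in\mathcal D(x^\star)$, if and only if $\mathcal D(x^\star)=\varnothing$. The ``if'' direction is vacuous. For ``only if'', any element of $\mathcal D(x^\star)$ would violate the inequality by the positivity just shown.

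Step two: unpack $\mathcal D(x^\star)=\varnothing$. It means that for no pair $(i,j)$ with $x_i^\star>0$ and $j\neq i$ do we have both $\Delta_{ij}(x^\star)>0$ and $A_{ij}(x^\star)\ge 0$. By De Morgan, for every such pair either $\Delta_{ij}(x^\star)\le 0$ or $A_{ij}(x^\star)<0$. The condition $\Delta_{ij}(x^\star)\le 0$ rewrites as $\rho_j(x^\star)-c_{ij}(x^\star)\le\rho_i(x^\star)$, which is exactly the stated dichotomy. One small point to note: the diagonal $j=i$ would give $d=0$ and is excluded by the definition anyway.

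Step three, for completeness: check that the equivalent cone formulation in the definition is consistent. If $\mathcal D(x^\star)=\varnothing$, then $\mathcal T(x^\star)=\{0\}$ and the cone inequality holds trivially. Conversely, the cone inequality restricted to the generators $d\in\mathcal D(x^\star)$ gives back the first form, since inequalities of the form $\langle\rho,d\rangle\le 0$ are preserved under nonnegative combinations.

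There is no real obstacle; the step needing care is the sign argument in step one. It relies on $c_{ij}\ge 0$ and strictness of $\Delta_{ij}>0$. Without nonnegative costs one would instead have to compare $\rho_j-\rho_i$ directly with $0$, and the dichotomy would read $\rho_j\le\rho_i$ rather than $\rho_j-c_{ij}\le\rho_i$.
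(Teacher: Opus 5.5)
Your proof is correct and follows essentially the same route as the paper. A pair violating both alternatives yields $e_j-e_i\in\mathcal D(x^\star)$ with $\langle\rho(x^\star),e_j-e_i\rangle>0$; conversely, the dichotomy empties $\mathcal D(x^\star)$, so the equilibrium condition holds vacuously. Your explicit appeal to $c_{ij}\ge 0$ to obtain $\rho_j(x^\star)-\rho_i(x^\star)>c_{ij}(x^\star)\ge 0$ makes precise a step the paper leaves tacit.
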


\begin{proof}
If both inequalities fail, then
$\Delta_{ij}(x^\star)>0$
and
$A_{ij}(x^\star)\ge 0$,
so $d=e_j-e_i\in\mathcal D(x^\star)$
and
\[
\langle \rho(x^\star), d\rangle
=
\rho_j(x^\star)-\rho_i(x^\star)
>0,
\]
contradicting constrained equilibrium.
The converse follows directly from the definition of $\mathcal D(x^\star)$.
\end{proof}

 \subsubsection{Relation to Wardrop Equilibrium}

Let $T_{\Delta_m}(x)$ denote the tangent cone of the simplex
$\Delta_m$ at $x$, that is,
\[
T_{\Delta_m}(x)
=
\left\{
d\in\mathbb R^m :
\sum_{i=1}^m d_i=0,
\;
d_i\ge 0 \text{ whenever } x_i=0
\right\}.
\]

\begin{proposition}
\label{prop:wardrop_reduction}
Suppose that
\begin{enumerate}
\item $c_{ij}(x)\equiv 0$ for all $i,j$,
\item $A_{ij}(x)\ge 0$ for all $i\neq j$ and all $x\in\Delta_m$.
\end{enumerate}
Then constrained equilibrium coincides with Wardrop equilibrium.
\end{proposition}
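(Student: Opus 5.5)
Under hypotheses (1) and (2) the admissible deviation set reduces to the set of strictly profitable unilateral moves. I will first compute $\mathcal D(x)$ explicitly, then compare the constrained-equilibrium condition with the Wardrop condition of Definition~\ref{def:wardrop}. With $c_{ij}\equiv 0$ we have $\Delta_{ij}(x)=\rho_j(x)-\rho_i(x)$. With $A_{ij}\ge 0$ everywhere, the acceptance requirement is vacuous. Hence
\[
\mathcal D(x)=\{\,e_j-e_i : x_i>0,\ \rho_j(x)>\rho_i(x)\,\}.
\]

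The cleanest route is Proposition~\ref{prop:constrained_characterization}. By that result, $x^\star$ is a constrained equilibrium if and only if, for every $i$ with $x_i^\star>0$ and every $j\neq i$, either $\rho_j(x^\star)\le\rho_i(x^\star)$ or $A_{ij}(x^\star)<0$. The second alternative is excluded by hypothesis (2), so the condition reads $\rho_j(x^\star)\le\rho_i(x^\star)$ for all active $i$ and all $j\neq i$. The case $j=i$ is trivial, so this is exactly the Wardrop condition in Definition~\ref{def:wardrop}. It is equivalently the exit-and-join condition, by Theorem~\ref{thm:wardrop_equivalence}.

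For robustness I will also verify both directions directly from the definition. Suppose $x^\star$ is Wardrop. Then no active $i$ has a $j$ with $\rho_j(x^\star)>\rho_i(x^\star)$, so $\mathcal D(x^\star)=\varnothing$ and the constrained condition holds vacuously. Conversely, suppose $x^\star$ is not Wardrop. Then there are an active $i$ and some $j$ with $\rho_j(x^\star)>\rho_i(x^\star)$. This gives $d=e_j-e_i\in\mathcal D(x^\star)$ with $\langle\rho(x^\star),d\rangle=\rho_j(x^\star)-\rho_i(x^\star)>0$, which violates constrained equilibrium.

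I will also add a remark tying the result to the variational inequality of Proposition~\ref{prop:VI}. Under hypothesis (2), the cone $\mathcal T(x^\star)$ need not equal the full tangent cone $T_{\Delta_m}(x^\star)$, because it contains only profitable directions. However, every direction in $T_{\Delta_m}(x^\star)$ is a nonnegative combination of vectors $e_j-e_i$ with $x_i^\star>0$. Testing the VI over those generators is equivalent to the Wardrop condition, so both formulations agree.

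The only delicate point is this mismatch between $\mathcal T(x^\star)$ and the tangent cone. Because $\mathcal D$ is defined through strict profitability, the equivalence must be argued on the generators $e_j-e_i$, not by literally identifying the two cones. I will handle it as in the converse direction above.
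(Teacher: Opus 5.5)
Your proposal is correct and follows essentially the paper's own argument. With $c_{ij}\equiv 0$ and automatic acceptance, every element of $\mathcal D(x^\star)$ is a strictly profitable move, so constrained equilibrium holds if and only if $\mathcal D(x^\star)=\varnothing$, which is exactly the Wardrop condition; routing this through Proposition~\ref{prop:constrained_characterization} is just a shortcut for the same reasoning. Your explicit set $\mathcal D(x)$ also keeps the requirement $x_i>0$, which the paper's displayed set drops even though its argument relies on it.
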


\begin{proof}
If $c_{ij}\equiv 0$ and acceptance is automatic,
then a deviation $i\to j$ is admissible whenever
$\rho_j(x)>\rho_i(x)$.
Hence
\[
\mathcal D(x)
=
\{\, e_j-e_i : \rho_j(x)>\rho_i(x)\,\}.
\]
Thus a constrained equilibrium has no elementary transfer
from a populated coalition to a coalition with strictly higher payoff.
Equivalently,
\[
\rho_j(x^\star)
\le
\rho_i(x^\star)
\quad
\forall i \text{ with } x_i^\star>0,
\]
which is precisely the Wardrop equilibrium condition.
\end{proof}

Thus Wardrop equilibrium is a special case of constrained
equilibrium under full feasibility of deviations.

\subsubsection{Modified Variational Inequality}

In the absence of switching costs and acceptance constraints,
equilibrium is characterized by the variational inequality
\[
\langle \rho(x^\star), x-x^\star\rangle \le 0
\quad
\forall x\in\Delta_m,
\]
or equivalently,
\[
\langle \rho(x^\star), d\rangle \le 0
\quad
\forall d\in T_{\Delta_m}(x^\star),
\]
where $T_{\Delta_m}(x^\star)$ denotes the tangent cone of the simplex.
This is the classical Wardrop condition.

Under switching costs and acceptance rules,
the feasible comparison set is restricted to the admissible deviation cone
$\mathcal T(x^\star)\subseteq T_{\Delta_m}(x^\star)$.
Accordingly, equilibrium satisfies the state-dependent variational inequality
\[
\langle \rho(x^\star), d\rangle \le 0
\quad
\forall d\in \mathcal T(x^\star).
\]
Since $\mathcal T(x^\star)\subseteq T_{\Delta_m}(x^\star)$ in general,
the constrained equilibrium condition is weaker than the Wardrop condition
and depends endogenously on $x^\star$.
The problem is therefore a state-dependent variational inequality,
i.e., a quasi-variational inequality (QVI).

The classical variational inequality is equivalent to the normal cone inclusion
\[
-\,\rho(x^\star)\in N_{\Delta_m}(x^\star),
\]
where $N_{\Delta_m}(x^\star)$ is the normal cone of
\[
\Delta_m=\{x\in\mathbb R_+^m:\sum_{i=1}^m x_i=1\}.
\]
Explicitly,
\[
N_{\Delta_m}(x^\star)
=
\left\{
\lambda \mathbf{1}-\mu :
\lambda\in\mathbb R,\;
\mu_i\ge 0,\;
\mu_i x_i^\star=0
\right\}.
\]
Hence Wardrop equilibrium is equivalent to the existence of
$\lambda\in\mathbb R$ such that
\[
\rho_i(x^\star)=\lambda
\quad \text{if } x_i^\star>0,
\qquad
\rho_i(x^\star)\le \lambda
\quad \text{if } x_i^\star=0.
\]
The scalar $\lambda$ is the Lagrange multiplier associated with
the mass constraint $\sum_i x_i=1$.

Under switching costs and acceptance rules,
the dual inclusion becomes
\(
-\,\rho(x^\star)\in N_{\mathcal A(x^\star)}(x^\star),
\)
where $\mathcal A(x^\star)$ denotes the locally admissible feasible set
generated by $\mathcal T(x^\star)$.
Because $\mathcal A(x^\star)$ depends on the state,
the associated normal cone is state-dependent,
and the equilibrium condition is a quasi-variational inequality.

Since $\mathcal T(x^\star)\subseteq T_{\Delta_m}(x^\star)$,
we have
\(
N_{\Delta_m}(x^\star)
\subseteq
N_{\mathcal A(x^\star)}(x^\star),
\)
so constrained equilibrium allows payoff differentials
that are ruled out under Wardrop equilibrium.

\subsubsection{Structural differences from Wardrop equilibrium.}

The constrained equilibrium differs from the classical Wardrop equilibrium
in three essential respects.
Switching costs introduce wedges between payoffs at equilibrium.
In particular, it may occur that
$\rho_j(x^\star)>\rho_i(x^\star)$
for some $i$ with $x_i^\star>0$,
provided that
\[
\rho_j(x^\star)-\rho_i(x^\star)
\le
c_{ij}(x^\star).
\]
Thus payoff equalization across active coalitions
is no longer necessary.
Equilibrium permits bounded payoff differentials
that are sustained by switching frictions.
Acceptance rules generate endogenous entry restrictions.
If
\[
A_{ij}(x^\star)<0,
\]
coalition $j$ rejects entry from coalition $i$ at the margin.
Consequently, coalitions may stabilize at interior sizes,
even when $\rho_j(x^\star)$ exceeds the payoff of other coalitions.
Equilibrium coalition sizes are therefore determined not only by payoff levels,
but also by the marginal effect of transfers on coalition payoff.
The admissible deviation cone $\mathcal T(x^\star)$
depends explicitly on the equilibrium state.
As a result, equilibrium is characterized by a
state-dependent constrained variational inequality.
Unlike the Wardrop case—where the feasible deviation set
is the tangent cone of $\Delta_m$—
the feasible directions here are determined endogenously
by switching costs and marginal acceptance conditions.

\subsubsection{Special Case: Concave Mass-Based Games}

Consider the mass-based case in which
\(
v(S)=F(\mu(S)),
\)
and assume $F:\mathbb R_+\to\mathbb R$ is twice continuously differentiable.
Recall that the induced payoff density satisfies
\[
\rho_j(x_j)
=
\frac{1}{x_j}
\int_0^{x_j} F'(s)\,ds,
\qquad x_j>0.
\]

\begin{proposition}
\label{prop:concave_rejection}
If $F$ is strictly concave, i.e., $F''<0$,
then for every interior state $x\in\Delta_m$
and every destination coalition $j$ with $x_j>0$,
\[
A_{ij}(x)<0
\quad\text{for every } i\neq j.
\]
\end{proposition}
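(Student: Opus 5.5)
In the mass-based case each $\rho_j$ depends on $x$ only through its own coordinate $x_j$, since $\rho_j(x)=\frac{1}{x_j}\int_0^{x_j}F'(s)\,ds=F(x_j)/x_j$ when $F(0)=0$. The plan is to reduce the acceptance margin to a one-variable derivative and then sign that derivative using strict concavity.

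\textbf{Step 1 (reduce the margin).} Because $\partial\rho_j/\partial x_i=0$ for $i\neq j$, the definition $A_{ij}(x)=D\rho_j(x)[e_j-e_i]$ collapses to the own-mass derivative noted after Proposition~\ref{prop:acceptance_rule}:
\[
A_{ij}(x)=\frac{\partial\rho_j}{\partial x_j}(x)=\rho_j'(x_j),
\qquad i\neq j.
\]
The reduction holds for every $i\neq j$, so it suffices to show $\rho_j'(x_j)<0$ whenever $x_j>0$.

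\textbf{Step 2 (differentiate).} Since $F\in C^2$ and $x_j>0$, the quotient rule gives
\[
\rho_j'(x_j)=\frac{x_jF'(x_j)-F(x_j)}{x_j^2}
=\frac{1}{x_j^2}\int_0^{x_j}\bigl(F'(x_j)-F'(s)\bigr)\,ds .
\]
The second form uses $F(x_j)=\int_0^{x_j}F'(s)\,ds$.

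\textbf{Step 3 (sign the integrand).} Strict concavity, $F''<0$, makes $F'$ strictly decreasing. Hence $F'(x_j)-F'(s)<0$ for every $s\in[0,x_j)$. This interval has positive length because $x_j>0$, and the integrand is continuous, so the integral is strictly negative. Therefore $A_{ij}(x)=\rho_j'(x_j)<0$ for every $i\neq j$. An equivalent route is the mean value theorem: $F(x_j)=x_jF'(\xi)$ for some $\xi\in(0,x_j)$, and $F'(\xi)>F'(x_j)$.

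\textbf{Main obstacle.} The only delicate point is the normalization $F(0)=0$, which is needed for $\int_0^{x_j}F'=F(x_j)$ and hence for the clean quotient form. I will keep Assumption~\ref{ass:mass_based_game} in force here, noting that this section's setting inherits it from Section~\ref{subsec:mass_based_lyapunov}. Even without that normalization the argument goes through unchanged if $\rho_j$ is written directly as the average $\frac{1}{x_j}\int_0^{x_j}F'$, since Steps 2 and 3 only use that representation. No interiority is needed beyond $x_j>0$ for the destination coalition. The requirement $x_i>0$ plays no role, so the conclusion holds for every $i\neq j$.
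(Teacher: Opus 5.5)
Your proposal is correct and follows essentially the same route as the paper: you reduce $A_{ij}(x)$ to the own-mass derivative of $\rho_j(x_j)=\frac{1}{x_j}\int_0^{x_j}F'(s)\,ds$, differentiate by the quotient rule, and show the numerator $x_jF'(x_j)-\int_0^{x_j}F'(s)\,ds$ is negative because $F'$ is strictly decreasing. Your observation that $F(0)=0$ is inessential matches the paper, which keeps the numerator in the integral form $\int_0^{x_j}F'(s)\,ds$ and never needs $F(x_j)$.
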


\begin{proof}
Define
\(
R(x_j)
=
\frac{1}{x_j}
\int_0^{x_j} F'(s)\,ds.
\)
Differentiating with respect to $x_j$,
\[
\frac{dR}{dx_j}
=
\frac{x_j F'(x_j)-\int_0^{x_j}F'(s)\,ds}{x_j^2}.
\]
By strict concavity of $F$,
$F'$ is strictly decreasing,
so
\(
\int_0^{x_j}F'(s)\,ds
>
x_j F'(x_j).
\)
Hence the numerator is negative,
and therefore
\(
\frac{\partial \rho_j}{\partial x_j}(x)
=
\frac{dR}{dx_j}
<0.
\)
Since the mass-based payoff $\rho_j$ depends only on $x_j$,
we have $A_{ij}(x)=D\rho_j(x)[e_j-e_i]=dR/dx_j<0$.
\end{proof}

By Proposition~\ref{prop:concave_rejection},
interior coalitions strictly reject entry.
Thus for any interior state,
\[
A_{ij}(x)<0
\quad\text{whenever } x_j>0.
\]
Consequently, no admissible expansion of an interior coalition is possible.

\begin{theorem}
\label{thm:concave_acceptance_mechanism}
Consider the mass-based case $v(S)=F(\mu(S))$.
Assume $F$ is strictly concave and $c_{ij}\equiv 0$.
Adopt the convention that an empty coalition has no incumbents
and therefore accepts first entry automatically.

Then:

\begin{enumerate}
\item For every interior state $x\in\Delta_m$ and every $j$ with $x_j>0$,
\[
A_{ij}(x)<0
\quad\text{for all } i\neq j.
\]
Hence every active coalition strictly rejects marginal entry.

\item If $x$ is an interior state, then no admissible deviation exists.
Thus every interior state is a constrained equilibrium under the
acceptance rule, even though it need not be a Wardrop equilibrium.

\item At a boundary state $x^\star$, admissible deviations can only
target empty coalitions. Consequently, $x^\star$ is a constrained
equilibrium if and only if
\[
\rho_j(x^\star)\le \rho_i(x^\star)
\quad
\text{for every } i \text{ with } x_i^\star>0
\text{ and every } j \text{ with } x_j^\star=0.
\]

\item Therefore acceptance-constrained equilibrium is generally weaker
than Wardrop equilibrium and need not coincide with the maximizers of
the potential $V$ unless additional acceptance or feasibility assumptions
restore all payoff-improving directions.
\end{enumerate}
\end{theorem}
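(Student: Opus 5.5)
The proof will take the four parts in order, using Proposition~\ref{prop:concave_rejection} and the definition of $\mathcal D(x)$.

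\textbf{Part (1).} This is exactly Proposition~\ref{prop:concave_rejection}. In the mass-based case $\rho_j$ depends only on $x_j$, so
\[
A_{ij}(x)=\frac{d}{dx_j}\Bigl(\frac{1}{x_j}\int_0^{x_j}F'(s)\,ds\Bigr).
\]
Because $F'$ is strictly decreasing, this derivative is strictly negative whenever $x_j>0$. Note that the argument uses only $x_j>0$, not full interiority. I will record this, since part (3) relies on it.

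\textbf{Part (2).} If $x$ is interior, every coalition $j$ has $x_j>0$. Part (1) then gives $A_{ij}(x)<0$ for every pair $i\neq j$, so $\mathcal D(x)=\varnothing$. The defining condition of a constrained equilibrium, $\langle\rho(x),d\rangle\le 0$ for all $d\in\mathcal D(x)$, is therefore vacuous, and $x$ is a constrained equilibrium. (Proposition~\ref{prop:constrained_characterization} gives the same conclusion, since its second alternative holds for every pair.)

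To show that such an $x$ need not be Wardrop, I will give a concrete instance. Take $m=2$ and $x=(1/4,3/4)$. The function $R(y)=\frac1y\int_0^y F'$ is strictly decreasing, so $\rho_1(x)=R(1/4)>R(3/4)=\rho_2(x)$. Both coalitions are populated but payoffs are unequal, so Theorem~\ref{thm:equilibrium_characterization} shows $x$ is not a Wardrop equilibrium.

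\textbf{Part (3).} Let $x^\star$ be a boundary state. A candidate deviation $e_j-e_i$ requires $x_i^\star>0$. There are two cases for the target $j$.

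\emph{Case $x_j^\star>0$.} The extension of part (1) gives $A_{ij}(x^\star)<0$, so the move is excluded. I expect this to be the one subtle point. Part (1) is stated only for interior states, so I must verify that the computation in Proposition~\ref{prop:concave_rejection} uses only $x_j>0$ and that $\rho_j$ depends on $x_j$ alone. Both hold in the mass-based case, so the sign of $A_{ij}$ carries over to boundary states.

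\emph{Case $x_j^\star=0$.} By the stated convention the empty coalition accepts entry, i.e.\ $A_{ij}\ge0$ is treated as satisfied. Here $\rho_j(x^\star)=F'(0)$ by continuous extension, which I note so the expression is well defined. With $c_{ij}\equiv0$, admissibility reduces to $\rho_j(x^\star)>\rho_i(x^\star)$.

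Hence $\mathcal D(x^\star)$ consists exactly of the moves $e_j-e_i$ with $x_i^\star>0$, $x_j^\star=0$ and $\rho_j(x^\star)>\rho_i(x^\star)$. For such a move $\langle\rho,d\rangle=\rho_j-\rho_i$, which is positive. So $x^\star$ is a constrained equilibrium if and only if $\mathcal D(x^\star)$ is empty, which is the stated inequality condition.

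\textbf{Part (4).} This follows by combining the earlier parts.
\begin{itemize}
\item[] Under Assumption~\ref{ass:strict_concavity}, the proof of Theorem~\ref{thm:global_convergence_complete} shows that $V$ has a unique maximizer.
\item[] By the potential-game proposition and Theorem~\ref{thm:wardrop_equivalence}, Wardrop equilibria coincide with maximizers of $V$, so there is exactly one Wardrop equilibrium.
\item[] Part (2) makes every interior state a constrained equilibrium, so the constrained equilibria form a continuum.
\end{itemize}
Hence the two sets differ, and the constrained equilibria are not all maximizers of $V$. Every Wardrop equilibrium, however, satisfies the constrained condition, because $\mathcal D(x)\subseteq T_{\Delta_m}(x)$. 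This makes the constrained notion weaker. Finally, if the acceptance constraints are removed so that $A_{ij}\ge0$ always holds, Proposition~\ref{prop:wardrop_reduction} restores equality of the two notions.
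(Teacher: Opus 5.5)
Your proposal is correct and matches the paper's proof for parts (1)--(3): the sign of $\partial\rho_j/\partial x_j$ follows from strict concavity, $\mathcal D(x)=\varnothing$ at interior states, and at boundary states you split on whether the target coalition is empty. Your additions are sound refinements of the same route: you note explicitly that the computation in (1) needs only $x_j>0$, which the paper applies at boundary states without comment; you give the concrete witness $x=(1/4,3/4)$ for the non-Wardrop claim in (2); and for (4) you replace the paper's qualitative argument with a sharp one, namely that the strictly concave $V$ has a unique maximizer while every interior state of $\Delta_m$ is a constrained equilibrium (for $m\ge 2$).
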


\begin{proof}

In the mass-based case $v(S)=F(\mu(S))$, the induced payoff density is
\[
\rho_j(x_j)
=
\frac{1}{x_j}\int_0^{x_j} F'(s)\,ds,
\qquad x_j>0.
\]
Define the potential
\[
V(x)
=
\sum_{i=1}^m \int_0^{x_i}\rho_i(s)\,ds.
\]
A direct differentiation shows that $\nabla V(x)=\rho(x)$.

Assume $F$ is strictly concave, so $F''<0$ and $F'$ is strictly decreasing.
Differentiating $\rho_j$ yields
\[
\frac{\partial \rho_j}{\partial x_j}(x)
=
\frac{x_j F'(x_j)-\int_0^{x_j}F'(s)\,ds}{x_j^2}.
\]
Since $F'$ is strictly decreasing,
\[
\int_0^{x_j}F'(s)\,ds
>
x_j F'(x_j),
\]
which implies
\[
\frac{\partial \rho_j}{\partial x_j}(x)<0
\quad
\text{whenever } x_j>0.
\]
Because $\rho_j$ depends only on $x_j$, this is exactly
$A_{ij}(x)<0$ for every $i\neq j$, proving statement (1).

If $x$ is interior, every destination coalition has positive mass.
By statement (1), every possible destination rejects marginal entry.
Hence $\mathcal D(x)=\varnothing$, proving statement (2).

At a boundary state, any admissible deviation $i\to j$ must have
$x_i^\star>0$ and, by statement (1), cannot have $x_j^\star>0$.
Thus it can only target an empty coalition. Since switching costs are
zero and empty coalitions accept first entry by convention, such a
deviation is admissible exactly when
$\rho_j(x^\star)>\rho_i(x^\star)$. Absence of admissible deviations is
therefore equivalent to the inequality in statement (3).

Wardrop equilibrium requires the same inequality against all coalitions,
not only empty destinations. The acceptance rule removes all transfers
into active coalitions from the feasible deviation set, so the constrained
condition is generally weaker than Wardrop equilibrium and weaker than the
first-order optimality condition for maximizing $V$ over the whole simplex.
This proves statement (4).

\end{proof}

\section{Numerical Studies in the Large-Population Regime}
\label{sec:numerical_studies}

This section illustrates the exit-and-join mechanism in a large
population. The examples are not intended as calibration exercises.
They are designed to make visible the qualitative mechanisms established
above: payoff-responsive mass movement, convergence of the finite
population process to the deterministic mean-field dynamics, and the
persistence of payoff gaps when switching frictions constrain mobility.

We consider four coalitions with payoff densities
\[
\rho_i(x)=\theta_i-\beta_i x_i,\qquad i=1,\dots,4,
\]
where $\theta=(1.25,1.10,0.92,0.80)$ and
$\beta=(1.80,1.20,0.90,0.70)$. This specification captures diminishing
returns within each coalition and is the gradient of the concave
potential $V(x)=\sum_i \theta_i x_i-\frac12\sum_i\beta_i x_i^2$.
The initial population state is
$x(0)=(0.62,0.24,0.10,0.04)$, so the first coalition begins large but
has low payoff because of congestion. Unless otherwise stated, switching
uses the payoff-difference intensity
\[
\lambda_{ij}(x)=\kappa x_j[\rho_j(x)-\rho_i(x)]_+,
\qquad \kappa=3.
\]
The resulting mean-field dynamics are the replicator-type
exit-and-join dynamics derived in Section~\ref{sec:continuum_exit_join}.
For the chosen parameters, the unique interior equilibrium and common
active payoff are
\[
x^\star\approx(0.302,0.328,0.237,0.133),
\qquad
\rho^\star\approx0.707.
\]

\begin{figure}[t]
\centering
\begin{subfigure}[t]{0.32\textwidth}
\centering
\begin{tikzpicture}
\begin{axis}[
width=\linewidth,
height=3.6cm,
xlabel={$t$},
ylabel={$x_i(t)$},
xmin=0,xmax=16,
ymin=0,ymax=0.7,
tick label style={font=\scriptsize},
label style={font=\scriptsize},
legend style={font=\scriptsize,draw=none,fill=none,at={(0.98,0.98)},anchor=north east},
grid=both,
grid style={line width=.1pt,draw=gray!20}
]
\addplot[blue,thick] table[x=t,y=x1,col sep=comma] {figures/mean_field.csv};
\addlegendentry{$C_1$}
\addplot[red,thick] table[x=t,y=x2,col sep=comma] {figures/mean_field.csv};
\addlegendentry{$C_2$}
\addplot[green!60!black,thick] table[x=t,y=x3,col sep=comma] {figures/mean_field.csv};
\addlegendentry{$C_3$}
\addplot[purple,thick] table[x=t,y=x4,col sep=comma] {figures/mean_field.csv};
\addlegendentry{$C_4$}
\end{axis}
\end{tikzpicture}
\caption{Coalition masses.}
\end{subfigure}
\hfill
\begin{subfigure}[t]{0.32\textwidth}
\centering
\begin{tikzpicture}
\begin{axis}[
width=\linewidth,
height=3.6cm,
xlabel={$t$},
ylabel={$\rho_i(x(t))$},
xmin=0,xmax=16,
ymin=0.1,ymax=0.9,
tick label style={font=\scriptsize},
label style={font=\scriptsize},
legend style={font=\scriptsize,draw=none,fill=none,at={(0.98,0.98)},anchor=north east},
grid=both,
grid style={line width=.1pt,draw=gray!20}
]
\addplot[blue,thick] table[x=t,y=rho1,col sep=comma] {figures/mean_field.csv};
\addplot[red,thick] table[x=t,y=rho2,col sep=comma] {figures/mean_field.csv};
\addplot[green!60!black,thick] table[x=t,y=rho3,col sep=comma] {figures/mean_field.csv};
\addplot[purple,thick] table[x=t,y=rho4,col sep=comma] {figures/mean_field.csv};
\addplot[black,dotted,thick] coordinates {(0,0.7067) (16,0.7067)};
\addlegendentry{common payoff}
\end{axis}
\end{tikzpicture}
\caption{Payoff equalization.}
\end{subfigure}
\hfill
\begin{subfigure}[t]{0.32\textwidth}
\centering
\begin{tikzpicture}
\begin{axis}[
width=\linewidth,
height=3.6cm,
xlabel={$t$},
ylabel={$V(x(t))$},
xmin=0,xmax=16,
ymin=0.77,ymax=0.92,
tick label style={font=\scriptsize},
label style={font=\scriptsize},
grid=both,
grid style={line width=.1pt,draw=gray!20}
]
\addplot[black,thick] table[x=t,y=potential,col sep=comma] {figures/mean_field.csv};
\end{axis}
\end{tikzpicture}
\caption{Surplus ascent.}
\end{subfigure}
\caption{Mean-field exit-and-join dynamics. Mass leaves the initially
congested coalition and reallocates toward coalitions with higher payoff
density. The active payoff densities equalize at equilibrium, while the
potential increases monotonically along the trajectory.}
\label{fig:mean_field_exit_join}
\end{figure}

Figure~\ref{fig:mean_field_exit_join} shows the deterministic
large-population dynamics. The first coalition initially has the largest
mass but the lowest payoff density, so it loses agents. The other
coalitions attract mass until the payoff densities are equalized. The
potential rises throughout the adjustment, illustrating the Lyapunov
structure in Theorem~\ref{thm:lyapunov_monotonicity}.

To connect the deterministic equation with a finite but large
population, let $n_i^N(t)$ be the number of agents in coalition $i$ and
let $x_i^N(t)=n_i^N(t)/N$. Over a short time interval, an agent in
coalition $i$ switches to coalition $j$ with probability approximately
$\lambda_{ij}(x^N(t))\Delta t$. This finite population Markov process
has drift equal to the mean-field vector field, and its fluctuations
vanish at the usual order $N^{-1/2}$, consistent with classical
law-of-large-numbers approximations for density-dependent population
processes \cite{ethier2009markov,sandholm2010population}.

\begin{figure}[t]
\centering
\begin{subfigure}[t]{0.48\textwidth}
\centering
\begin{tikzpicture}
\begin{axis}[
width=\linewidth,
height=4.2cm,
xlabel={$t$},
ylabel={$\|x^N(t)-x(t)\|_2$},
xmin=0,xmax=8,
ymin=0,ymax=0.075,
tick label style={font=\scriptsize},
label style={font=\scriptsize},
legend style={font=\scriptsize,draw=none,fill=none,at={(0.98,0.98)},anchor=north east},
grid=both,
grid style={line width=.1pt,draw=gray!20}
]
\addplot[blue,thick] table[x=t,y=err200,col sep=comma] {figures/large_population_error.csv};
\addlegendentry{$N=200$}
\addplot[red,thick] table[x=t,y=err1000,col sep=comma] {figures/large_population_error.csv};
\addlegendentry{$N=1000$}
\addplot[green!60!black,thick] table[x=t,y=err5000,col sep=comma] {figures/large_population_error.csv};
\addlegendentry{$N=5000$}
\end{axis}
\end{tikzpicture}
\caption{Mean path error.}
\end{subfigure}
\hfill
\begin{subfigure}[t]{0.48\textwidth}
\centering
\begin{tikzpicture}
\begin{axis}[
width=\linewidth,
height=4.2cm,
xlabel={$N$},
ylabel={mean sup error},
xmode=log,
ymode=log,
tick label style={font=\scriptsize},
label style={font=\scriptsize},
legend style={font=\scriptsize,draw=none,fill=none,at={(0.98,0.98)},anchor=north east},
grid=both,
grid style={line width=.1pt,draw=gray!20}
]
\addplot[black,mark=*,thick] table[x=N,y=sup_error,col sep=comma] {figures/large_population_scaling.csv};
\addlegendentry{simulation}
\addplot[gray,dashed,thick] table[x=N,y=reference,col sep=comma] {figures/large_population_scaling.csv};
\addlegendentry{$N^{-1/2}$}
\end{axis}
\end{tikzpicture}
\caption{Error scaling.}
\end{subfigure}
\caption{Finite-population approximation. The stochastic
exit-and-join process concentrates around the deterministic trajectory
as $N$ grows, and the mean sup-norm path error is close to the
$N^{-1/2}$ fluctuation scale. Each curve averages 24 simulated paths.}
\label{fig:large_population_approximation}
\end{figure}

Figure~\ref{fig:large_population_approximation} displays this
large-population approximation. For $N=200$, stochastic switching
creates visible deviations from the ODE path. At $N=1000$ and
$N=5000$, the paths concentrate much more tightly around the
deterministic trajectory. The log-log comparison shows that the
finite-population error decreases at approximately the
$N^{-1/2}$ scale, which is the expected order for aggregate
fluctuations in a large population.

The final experiment introduces a symmetric switching cost $c=0.055$.
The switching intensity becomes
$\lambda_{ij}(x)=\kappa x_j[\rho_j(x)-\rho_i(x)-c]_+$.
The cost does not change the payoff field, but it changes which payoff
differences are actionable. Hence the dynamics need not eliminate all
raw payoff differences; they eliminate only net profitable deviations.

\begin{figure}[t]
\centering
\begin{subfigure}[t]{0.48\textwidth}
\centering
\begin{tikzpicture}
\begin{axis}[
width=\linewidth,
height=4.2cm,
xlabel={$t$},
ylabel={$x_i(t)$},
xmin=0,xmax=18,
ymin=0,ymax=0.65,
tick label style={font=\scriptsize},
label style={font=\scriptsize},
grid=both,
grid style={line width=.1pt,draw=gray!20}
]
\addplot[blue,thick] table[x=t,y=xfree1,col sep=comma] {figures/switching_costs.csv};
\addplot[red,thick] table[x=t,y=xfree2,col sep=comma] {figures/switching_costs.csv};
\addplot[green!60!black,thick] table[x=t,y=xfree3,col sep=comma] {figures/switching_costs.csv};
\addplot[purple,thick] table[x=t,y=xfree4,col sep=comma] {figures/switching_costs.csv};
\addplot[blue,dashed,thick] table[x=t,y=xcost1,col sep=comma] {figures/switching_costs.csv};
\addplot[red,dashed,thick] table[x=t,y=xcost2,col sep=comma] {figures/switching_costs.csv};
\addplot[green!60!black,dashed,thick] table[x=t,y=xcost3,col sep=comma] {figures/switching_costs.csv};
\addplot[purple,dashed,thick] table[x=t,y=xcost4,col sep=comma] {figures/switching_costs.csv};
\node[anchor=north west,font=\scriptsize] at (rel axis cs:0.02,0.98) {solid: no cost};
\node[anchor=north west,font=\scriptsize] at (rel axis cs:0.02,0.88) {dashed: cost};
\end{axis}
\end{tikzpicture}
\caption{Mass trajectories.}
\end{subfigure}
\hfill
\begin{subfigure}[t]{0.48\textwidth}
\centering
\begin{tikzpicture}
\begin{axis}[
width=\linewidth,
height=4.2cm,
xlabel={$t$},
ylabel={payoff gap},
xmin=0,xmax=18,
ymin=0,ymax=0.75,
tick label style={font=\scriptsize},
label style={font=\scriptsize},
legend style={font=\scriptsize,draw=none,fill=none,at={(0.98,0.98)},anchor=north east},
grid=both,
grid style={line width=.1pt,draw=gray!20}
]
\addplot[black,thick] table[x=t,y=range_free,col sep=comma] {figures/switching_costs.csv};
\addlegendentry{range, no cost}
\addplot[red,thick] table[x=t,y=range_cost,col sep=comma] {figures/switching_costs.csv};
\addlegendentry{range, cost}
\addplot[blue,thick] table[x=t,y=net_gain,col sep=comma] {figures/switching_costs.csv};
\addlegendentry{max net gain}
\addplot[gray,dotted,thick] table[x=t,y=cost,col sep=comma] {figures/switching_costs.csv};
\addlegendentry{cost}
\end{axis}
\end{tikzpicture}
\caption{Payoff gaps.}
\end{subfigure}
\caption{Switching costs and constrained stability. Without costs,
payoff differences vanish. With switching cost $c=0.055$, the raw
payoff range remains positive, but the maximum net profitable
deviation converges to zero. The terminal state is stable because
remaining payoff gains are too small to justify switching.}
\label{fig:switching_costs_numerics}
\end{figure}

Figure~\ref{fig:switching_costs_numerics} illustrates the constrained
equilibrium logic. In the costless case, payoff densities are equalized.
With switching costs, the dynamics stop earlier: the raw payoff range is
not eliminated, yet the maximum net gain from any admissible switch
approaches zero. Numerically, the constrained steady state therefore
matches the theory in Section~\ref{sec:acceptance_rules}. Frictions can
sustain payoff differences that would be unstable under unrestricted
Wardrop or exit-and-join equilibrium.

\section{Noncooperative Cooperation vs.\ Cooperative Noncooperation}
\label{sec:cooperation_noncooperation}

This work reveals a structural duality between cooperation and
noncooperation. Cooperative structures form the substrate:
players organize into coalitions to accomplish tasks that generate
mutual gains. Coalition formation creates potential surplus and
admits Pareto improvements. In this sense, cooperation defines
the space of feasible collective value.

Yet this cooperative substrate is governed by a noncooperative
layer. Players retain unilateral strategic autonomy: if a more
advantageous coalition becomes available, they may exit and join
another group. Thus cooperation is sustained only insofar as it
remains stable against individual deviation. Noncooperation defines
the stability constraints imposed on cooperative arrangements.

The duality arises because cooperation defines potential surplus,
whereas noncooperation defines stability under deviation. An outcome may
be efficient but unstable; conversely, a stable equilibrium may be
inefficient. The tension between efficiency and stability is the core of
the cooperation--noncooperation duality.

There is therefore an inherent interplay between the two layers.
Cooperation induces noncooperation: precisely because agents seek
better cooperative outcomes, they must retain the option to deviate
unilaterally. At the same time, noncooperative mobility can improve
cooperation: unilateral deviations reconfigure coalition structures,
sometimes increasing overall efficiency or enabling superior
collective performance. Improvements may be local, but they can also
propagate globally through structural reorganization.

The relationship is thus dualistic rather than contradictory.
Cooperation and noncooperation are intertwined mechanisms of the
same system. One operates at the structural level (coalition value
creation), the other at the strategic level (individual incentive
compatibility). Their interaction admits a natural dual perspective:
the cooperative problem specifies feasible collective structures,
while the noncooperative problem governs admissible deviations
within that feasible set.

Historically, this tension has appeared in debates over collective
action and governance. Hardin's tragedy of the commons and Olson's
logic of collective action show how individually rational behavior can
undermine shared resources and public goods
\cite{Hardin1968,Olson1965}. A pure cooperation view, by contrast,
assumes that common interest suffices to sustain collective action.

Ostrom's common-pool resource framework asked a question closely
aligned with the present framework: how can individually rational
agents sustain cooperation without centralized enforcement
\cite{Ostrom1990}? Her empirical studies demonstrated that cooperation
and noncooperation coexist within structured rule systems. Agents
remain strategic and capable of deviation, yet institutions reshape the
incentive landscape so that cooperative behavior becomes
self-enforcing. In our terminology, institutions restrict the
admissible deviation set and alter the effective payoff functional.
They mediate the cooperation--noncooperation duality by aligning
collective efficiency with individual stability.

Our framework differs in emphasis but parallels her insight.
Rather than assuming externally imposed rules, we show that
decentralized exit--and--join dynamics can endogenously reconcile
cooperative value creation with noncooperative stability.
Cooperation need not be imposed from above; it can emerge from below
through structured strategic mobility.

Classical coordination games such as the Battle of the Sexes
illustrate this duality. The objective is cooperative alignment,
yet equilibrium selection proceeds through noncooperative reasoning.
As emphasized by Adam Smith, decentralized self-interest can,
under appropriate structure, generate cooperative order
\cite{Smith1776}.

Cooperation defines what is desirable.
Noncooperation determines what is sustainable.
The equilibrium of a multi-agent system lies at their intersection.

\section{Exit-and-Join and Evolution}
\label{sec:evolutionary_interpretation}

This section interprets the preceding model as an evolutionary system on
coalition structures. The interpretation does not introduce a new
equilibrium concept. Rather, it shows that the same payoff density that
allocates cooperative value also acts as a selection index for coalition
growth, in the sense of evolutionary population dynamics
\cite{MaynardSmithPrice1973,MaynardSmith1982,TaylorJonker1978,
Weibull1995,HofbauerSigmund1998,vincent2005evolutionary,
sandholm2010population,Nowak2006,zhuTembineBasar2011evolutionaryMAC,
hayelZhu2015evolutionaryPoisson,liuZhaoZhu2021herd}.

\subsection{Marginal Contribution Densities and Fitness}

In the continuum cooperative model, a coalition is represented by a
restricted measure $\mu_S$, and its value is generated by a functional
$V$ on measures. The local object that determines payoff is the
functional derivative
\[
\frac{\delta V}{\delta \mu}(\mu_S)(i),
\]
which gives the marginal contribution density of an infinitesimal mass
at player location $i$. The Aumann--Shapley value averages this marginal
contribution along a participation path
\cite{aumann1975values,aumann2015values}:
\[
\phi_i^{AS}(v)
=
\int_0^1
\frac{\delta V}{\delta \mu}(\mu_{S_\lambda})(i)
\,d\lambda .
\]
Thus value allocation is determined by accumulated marginal productivity
along a coalition-growth path.

In evolutionary population models, the analogous local object is the
$G$-function: the per-capita growth rate of a rare type inserted into a
resident population \cite{vincent2000evolution,vincent2005evolutionary}.
Written in measure terms, both objects are
directional marginal values. The cooperative model uses the marginal
value to allocate surplus; the evolutionary model uses it to determine
growth. The correspondence is direct: coalition measures play the role
of population distributions, the cooperative value functional plays the
role of a fitness landscape, and the marginal contribution density plays
the role of a $G$-function. This explains why the payoff density
$\rho_i(x)$ in the exit-and-join model can be interpreted as a coalition
fitness index.

\subsection{Selection Dynamics}

Exit-and-join dynamics turn payoff comparisons into population movement.
If agents leave lower-payoff coalitions for higher-payoff coalitions,
then coalition mass is reallocated in the direction of higher marginal
productivity. Under the pairwise payoff-difference rule
\[
\lambda_{ij}(x)=\kappa x_j[\rho_j(x)-\rho_i(x)]_+,
\]
the mean-field equation reduces to the replicator form familiar from
evolutionary game theory
\cite{TaylorJonker1978,Weibull1995,HofbauerSigmund1998,
sandholm2010population,vincent2005evolutionary},
\[
\dot x_i=\kappa x_i\bigl(\rho_i(x)-\bar\rho(x)\bigr),
\qquad
\bar\rho(x)=\sum_j x_j\rho_j(x).
\]
Coalitions with payoff above the population average expand, while those
below the average contract.

The replicator equation is only one specification. Other
incentive-compatible switching rules preserve the same principle: mass
moves toward higher payoff, with speed determined by the rate rule
\cite{Weibull1995,HofbauerSigmund1998,sandholm2010population}.

\subsection{Micro Incentives and Macro Evolution}

The evolutionary behavior is generated by agents, not by coalitions.
Coalitions do not choose to grow or shrink. Agents compare their current
payoff with feasible alternatives and move when a profitable destination
is admissible. Aggregating these decentralized choices yields the
deterministic mass dynamics derived in
Section~\ref{sec:continuum_exit_join}, paralleling classical
law-of-large-numbers limits for Markov population processes
\cite{ethier2009markov}.

This produces a two-level structure. Within each coalition, agents
cooperate to generate transferable surplus. Across coalitions, the
resulting payoff densities determine which coalitions attract mass.
Cooperation therefore supplies the value functional, while mobility
creates selection among cooperative arrangements, connecting coalition
formation dynamics with population adjustment dynamics
\cite{apt2009generic,konishi2003coalition,MaynardSmith1982,
Weibull1995,sandholm2010population}.

In this sense, Darwinian selection is an emergent property of
exit-and-join incentives. It does not require a planner or an
organization-level optimization rule. It requires only that agents can
move in response to relative payoff differences.

\subsection{Equilibrium and Constrained Stability}

The evolutionary interpretation also clarifies stability. In the
unconstrained model, an exit-and-join equilibrium is a state in which no
positive-mass group of agents can improve by moving to another coalition.
Under incentive-compatible and strictly payoff-responsive switching
rates, this condition coincides with stationarity of the mean-field
dynamics. Equivalently, no active transition flux remains in a
payoff-improving direction.

With switching costs or acceptance constraints, stability becomes
conditional on the feasible deviation set. Payoff differences may persist
because a profitable move is too costly or because the destination
coalition rejects marginal entry. The relevant notion is therefore
constrained evolutionary stability: no admissible positive-mass deviation
can invade the current coalition structure. This interpretation is close
to the invasion-resistance logic of evolutionary stability and to
viability and mutational viewpoints in constrained dynamical systems
\cite{MaynardSmithPrice1973,MaynardSmith1982,TaylorJonker1978,
Aubin1991Viability,Aubin1998Mutational}.

\section{Conclusion}
\label{sec:conclusion}

This paper developed a continuum model of coalition formation in
nonatomic cooperative games. It extended the Aumann--Shapley and
Aumann--Dr\`eze values to finite coalition structures by treating each
coalition as a restricted nonatomic game and assigning payoff densities
through infinitesimal marginal contributions
\cite{aumann1975values,aumann2015values,aumann1974cooperative,
AumannDreze1974}.

The paper then derived exit-and-join dynamics from decentralized
switching rules. A finite-population approximation yields a deterministic
mean-field ODE for coalition masses, and payoff-difference switching
recovers replicator dynamics as a special case. The associated
exit-and-join equilibrium rules out profitable positive-mass deviations
and, under incentive-compatible and strictly payoff-responsive switching
rates, coincides with stationarity of the mass dynamics
\cite{ethier2009markov,sandholm2010population,vincent2005evolutionary,
zhuTembineBasar2011evolutionaryMAC,hayelZhu2015evolutionaryPoisson,
liuZhaoZhu2021herd}.

For mass-based cooperative games, the dynamics admit a Lyapunov function
derived from aggregate cooperative surplus. Under strict concavity and
the stated regularity and responsiveness assumptions, trajectories
converge globally to the unique equilibrium. The same equilibrium
condition is equivalent to Wardrop equilibrium in the induced nonatomic
population game and admits a variational inequality formulation
\cite{Wardrop1952,FacchineiPang2003,panZhu2022poisonedWardrop,
panLiZhu2022resilienceTraffic}.

Switching costs and acceptance rules restrict feasible deviations and
lead to constrained equilibria. In that setting, payoff differences can
persist because mobility is no longer unrestricted. The resulting
quasi-variational inequality formulation captures how institutional
frictions, congestion, and endogenous entry rules alter coalition
stability \cite{FacchineiPang2003,Aubin1991Viability}.

Overall, the framework links cooperative value creation, noncooperative
mobility, and evolutionary selection. It provides a common language for
large-population coalition systems in which coalitions produce surplus
cooperatively while agents reallocate through individual incentives.
Future work may study stochastic stability, finite-population error
bounds, and applications to labor markets, platform migration, distributed
task allocation, and reconfigurable multi-agent systems
\cite{ethier2009markov,smyrnakis2019game,hamed2023distributed}.

\bibliographystyle{unsrt}
\bibliography{ref}

@book{ethier2009markov,
  title={Markov processes: characterization and convergence},
  author={Ethier, Stewart N and Kurtz, Thomas G},
  year={2009},
  publisher={John Wiley \& Sons}
}

@article{zhuTembineBasar2011evolutionaryMAC,
  title={Evolutionary Games for Multiple Access Control},
  author={Zhu, Quanyan and Tembine, Hamidou and Ba{\c{s}}ar, Tamer},
  journal={arXiv preprint arXiv:1103.2496},
  year={2011},
  eprint={1103.2496},
  archivePrefix={arXiv},
  primaryClass={cs.GT},
  doi={10.48550/arXiv.1103.2496},
  url={https://arxiv.org/abs/1103.2496}
}

@article{hayelZhu2015evolutionaryPoisson,
  title={Evolutionary Poisson Games for Controlling Large Population Behaviors},
  author={Hayel, Yezekael and Zhu, Quanyan},
  journal={arXiv preprint arXiv:1503.08085},
  year={2015},
  eprint={1503.08085},
  archivePrefix={arXiv},
  primaryClass={cs.GT},
  doi={10.48550/arXiv.1503.08085},
  url={https://arxiv.org/abs/1503.08085}
}

@article{liuZhaoZhu2021herd,
  title={Herd Behaviors in Epidemics: A Dynamics-Coupled Evolutionary Games Approach},
  author={Liu, Shutian and Zhao, Yuhan and Zhu, Quanyan},
  journal={arXiv preprint arXiv:2106.08998},
  year={2021},
  eprint={2106.08998},
  archivePrefix={arXiv},
  primaryClass={cs.GT},
  doi={10.48550/arXiv.2106.08998},
  url={https://arxiv.org/abs/2106.08998}
}

@article{panZhu2022poisonedWardrop,
  title={On Poisoned Wardrop Equilibrium in Congestion Games},
  author={Pan, Yunian and Zhu, Quanyan},
  journal={arXiv preprint arXiv:2209.00094},
  year={2022},
  eprint={2209.00094},
  archivePrefix={arXiv},
  primaryClass={cs.GT},
  doi={10.48550/arXiv.2209.00094},
  url={https://arxiv.org/abs/2209.00094}
}

@article{panLiZhu2022resilienceTraffic,
  title={On the Resilience of Traffic Networks under Non-Equilibrium Learning},
  author={Pan, Yunian and Li, Tao and Zhu, Quanyan},
  journal={arXiv preprint arXiv:2210.03214},
  year={2022},
  eprint={2210.03214},
  archivePrefix={arXiv},
  primaryClass={eess.SY},
  doi={10.48550/arXiv.2210.03214},
  url={https://arxiv.org/abs/2210.03214}
}

@article{apt2009generic,
  title={A generic approach to coalition formation},
  author={Apt, Krzysztof R and Witzel, Andreas},
  journal={International game theory review},
  volume={11},
  number={03},
  pages={347--367},
  year={2009},
  publisher={World Scientific}
}

@article{Shapley1953,
  author  = {Shapley, Lloyd S.},
  title   = {A Value for n-Person Games},
  journal = {Contributions to the Theory of Games},
  volume  = {2},
  pages   = {307--317},
  year    = {1953},
  publisher = {Princeton University Press}
}

@inproceedings{hamed2023distributed,
  title={Distributed learning dynamics for coalitional games},
  author={Hamed, Aya and Shamma, Jeff S},
  booktitle={2023 62nd IEEE Conference on Decision and Control (CDC)},
  pages={5020--5025},
  year={2023},
  organization={IEEE}
}

@article{aumann1974cooperative,
  title={Cooperative games with coalition structures},
  author={Aumann, Robert J and Dreze, Jacques H},
  journal={International Journal of game theory},
  volume={3},
  number={4},
  pages={217--237},
  year={1974},
  publisher={Springer}
}

@article{hart1996bargaining,
  title={Bargaining and value},
  author={Hart, Sergiu and Mas-Colell, Andreu},
  journal={Econometrica: Journal of the Econometric Society},
  pages={357--380},
  year={1996},
  publisher={JSTOR}
}

@article{faigle2001computation,
  title={On the computation of the nucleolus of a cooperative game},
  author={Faigle, Ulrich and Kern, Walter and Kuipers, Jeroen},
  journal={International Journal of Game Theory},
  volume={30},
  number={1},
  pages={79--98},
  year={2001},
  publisher={Springer}
}

@article{AumannDreze1974,
  author  = {Aumann, Robert J. and Dr{\`e}ze, Jacques H.},
  title   = {Cooperative Games with Coalition Structures},
  journal = {International Journal of Game Theory},
  volume  = {3},
  number  = {4},
  pages   = {217--237},
  year    = {1974},
  doi     = {10.1007/BF01766876}
}

@article{touati2021bayesian,
  title={A Bayesian Monte Carlo method for computing the Shapley value: Application to weighted voting and bin packing games},
  author={Touati, Sofiane and Radjef, Mohammed Said and others},
  journal={Computers \& Operations Research},
  volume={125},
  pages={105094},
  year={2021},
  publisher={Elsevier}
}

@book{Aubin1991Viability,
  author    = {Aubin, Jean-Pierre},
  title     = {Viability Theory},
  publisher = {Birkh{\"a}user},
  year      = {1991}
}

@book{Aubin1998Mutational,
  author    = {Aubin, Jean-Pierre},
  title     = {Mutational and Morphological Analysis: Tools for Shape Evolution and Morphogenesis},
  publisher = {Birkh{\"a}user},
  year      = {2001},
  doi       = {10.1007/978-1-4612-1576-9}
}

@article{smyrnakis2019game,
  title={Game-theoretic learning and allocations in robust dynamic coalitional games},
  author={Smyrnakis, Michalis and Bauso, Dario and Tembine, Hamidou},
  journal={SIAM Journal on Control and Optimization},
  volume={57},
  number={4},
  pages={2902--2923},
  year={2019},
  publisher={SIAM}
}

@book{sandholm2010population,
  title={Population games and evolutionary dynamics},
  author={Sandholm, William H},
  year={2010},
  publisher={MIT press}
}

@article{konishi2003coalition,
  title={Coalition formation as a dynamic process},
  author={Konishi, Hideo and Ray, Debraj},
  journal={Journal of Economic theory},
  volume={110},
  number={1},
  pages={1--41},
  year={2003},
  publisher={Elsevier}
}

@article{filar2000dynamic,
  title={Dynamic cooperative games},
  author={Filar, Jerzy A and Petrosjan, Leon A},
  journal={International Game Theory Review},
  volume={2},
  number={01},
  pages={47--65},
  year={2000},
  publisher={World Scientific}
}

@article{bauso2009robust,
  title={Robust dynamic cooperative games},
  author={Bauso, Dario and Timmer, Judith},
  journal={International Journal of Game Theory},
  volume={38},
  number={1},
  pages={23--36},
  year={2009},
  publisher={Springer}
}

@article{aumann1975values,
  title={Values of markets with a continuum of traders},
  author={Aumann, Robert J},
  journal={Econometrica: Journal of the Econometric Society},
  pages={611--646},
  year={1975},
  publisher={JSTOR}
}

@book{vincent2005evolutionary,
  title={Evolutionary game theory, natural selection, and Darwinian dynamics},
  author={Vincent, Thomas L and Brown, Joel S},
  year={2005},
  publisher={Cambridge University Press}
}

@article{vincent2000evolution,
  title={Evolution and control system design. The evolutionary game},
  author={Vincent, Thomas L and Vincent, Tania LS},
  journal={IEEE Control Systems Magazine},
  volume={20},
  number={5},
  pages={20--35},
  year={2000},
  publisher={IEEE}
}

@book{aumann2015values,
  title={Values of non-atomic games},
  author={Aumann, Robert J and Shapley, Lloyd S},
  year={2015},
  publisher={Princeton University Press}
}

@article{winter2002shapley,
  title={The shapley value},
  author={Winter, Eyal},
  journal={Handbook of game theory with economic applications},
  volume={3},
  pages={2025--2054},
  year={2002},
  publisher={Elsevier}
}

@article{Wardrop1952,
  author  = {Wardrop, John Glen},
  title   = {Some Theoretical Aspects of Road Traffic Research},
  journal = {Proceedings of the Institution of Civil Engineers},
  volume  = {1},
  number  = {3},
  pages   = {325--362},
  year    = {1952}
}

@book{FacchineiPang2003,
  author    = {Facchinei, Francisco and Pang, Jong-Shi},
  title     = {Finite-Dimensional Variational Inequalities and Complementarity Problems},
  publisher = {Springer},
  address   = {New York},
  year      = {2003}
}

@article{MaynardSmithPrice1973,
  author  = {Maynard Smith, John and Price, George R.},
  title   = {The Logic of Animal Conflict},
  journal = {Nature},
  volume  = {246},
  pages   = {15--18},
  year    = {1973},
  doi     = {10.1038/246015a0}
}

@book{MaynardSmith1982,
  author    = {Maynard Smith, John},
  title     = {Evolution and the Theory of Games},
  publisher = {Cambridge University Press},
  address   = {Cambridge},
  year      = {1982}
}

@article{TaylorJonker1978,
  author  = {Taylor, Peter D. and Jonker, Leo B.},
  title   = {Evolutionarily Stable Strategies and Game Dynamics},
  journal = {Mathematical Biosciences},
  volume  = {40},
  number  = {1--2},
  pages   = {145--156},
  year    = {1978},
  doi     = {10.1016/0025-5564(78)90077-9}
}

@book{Weibull1995,
  author    = {Weibull, J{\"o}rgen W.},
  title     = {Evolutionary Game Theory},
  publisher = {MIT Press},
  address   = {Cambridge, MA},
  year      = {1995}
}

@book{HofbauerSigmund1998,
  author    = {Hofbauer, Josef and Sigmund, Karl},
  title     = {Evolutionary Games and Population Dynamics},
  publisher = {Cambridge University Press},
  address   = {Cambridge},
  year      = {1998}
}

@book{Nowak2006,
  author    = {Nowak, Martin A.},
  title     = {Evolutionary Dynamics: Exploring the Equations of Life},
  publisher = {Harvard University Press},
  address   = {Cambridge, MA},
  year      = {2006}
}

@article{Hardin1968,
  author  = {Hardin, Garrett},
  title   = {The Tragedy of the Commons},
  journal = {Science},
  volume  = {162},
  number  = {3859},
  pages   = {1243--1248},
  year    = {1968},
  doi     = {10.1126/science.162.3859.1243}
}

@book{Olson1965,
  author    = {Olson, Mancur},
  title     = {The Logic of Collective Action: Public Goods and the Theory of Groups},
  publisher = {Harvard University Press},
  address   = {Cambridge, MA},
  year      = {1965}
}

@book{Ostrom1990,
  author    = {Ostrom, Elinor},
  title     = {Governing the Commons: The Evolution of Institutions for Collective Action},
  publisher = {Cambridge University Press},
  address   = {Cambridge},
  year      = {1990}
}

@book{Smith1776,
  author    = {Smith, Adam},
  title     = {An Inquiry into the Nature and Causes of the Wealth of Nations},
  publisher = {W. Strahan and T. Cadell},
  address   = {London},
  year      = {1776}
}

\end{document}